\title{
Voros Coefficients and the Topological Recursion 
for a Class of the Hypergeometric Differential Equations 
associated with the Degeneration of the 2-dimensional Garnier System 
}
\author{
Yumiko \textsc{Takei}$^{*}$}
\def\paperinfo{
\renewcommand{\thefootnote}{\fnsymbol{footnote}}
\footnote[0]{$^{*}$
%Department of Mathematics, Graduate School of Science, Kobe University. 
% {\tt{ytakei@math.kobe-u.ac.jp}
Graduate School of Science and Technology, Kwansei Gakuin University.
 {\tt{ytakei@kwansei.ac.jp}
}
}
\footnote[0]{2010 \textit{Mathematics Subject Classification}. 
Primary:34M60; Secondary:81T45}
\footnote[0]{\textit{Keywords}: Exact WKB analysis; Voros coefficients;
Topological recursion; Quantum curves; Free energy.}
% \footnote[0]{
% This work is supported, in part, by JSPS KAKENHI Grand Numbers 
% 16K17613, 16H06337, 16K05177, 17H06127.}
\renewcommand*{\thefootnote}{\arabic{footnote}}
}
\date{\today}
\theoremstyle{plain}
\newtheorem{thm}{Theorem}[section]
\newtheorem{lem}[thm]{Lemma}
\theoremstyle{definition}
\newtheorem{dfn}[thm]{Definition}
\theoremstyle{remark}
\newtheorem{rem}[thm]{Remark}
\newtheorem{ex}[thm]{Example}
\numberwithin{equation}{section}
\numberwithin{table}{section}
\numberwithin{figure}{section}
\newcommand{\thmref}[1]{Theorem \ref{thm:#1}}
\newcommand{\lemref}[1]{Lemma \ref{lem:#1}}
\def\Res{\mathop{\rm{Res}}}
\begin{document}

\maketitle

\paperinfo

%{\color{red} 
%メモ
%\begin{itemize}
%	\item タイトルを考える 
%	\item アブストラクトを書く
%\end{itemize}
%}

\vspace{-1.em}
\begin{abstract}
In my joint papers with Iwaki and Koike (\cite{IKT-part1, IKT-part2}) we found an intriguing relation 
between the Voros coefficients in the exact WKB analysis and the free energy
in the topological recursion introduced by Eynard and Orantin in the case of
the confluent family of the Gauss hypergeometric differential equations.
In this paper we discuss its generalization to the case of the hypergeometric
differential equations associated with $2$-dimensional degenerate Garnier systems.
\end{abstract}

%\tableofcontents

%%%%%%%%%%%%%%%%%%%%%%%%%%%%%%%%%%%%%%%%%%%%%%%%%%%%%%%%

%%%%%%%%%%%%%%%%%%%%%%%%%%%%%%%%%%%%%%%%%%%%%%%%%%%%%%%%
%%%%%%%%%%%%%%%%%%%%%%%%%%%%%%%%%%%%%%%%%%%%%%%%%%%%%%%%%%%%%%%%%%%%%%%%%%%%%%%%%%%%%%%%%%%%%%%%%%%%%%%%%%%%%%%%

\section{Introduction} 
\label{section:intro}

%%%%%%%%%%%%%%%%%%%%%%%%%%%%%%%%%%%%%%%%%%%%%%%%%%%%%%%%%%%%%%%%%%%%%%%%%%%%%%%%%%%%%%

In [IKoT1, IKoT2] we showed that there is an interesting relation 
between the exact WKB theory and the topological recursion for the 
confluent family of the Gauss hypergeometric differential equations, 
that is, we verified that the Voros coefficients of (confluent) 
hypergeometric equations are expressed as the difference values 
of the free energy of the spectral curve obtained as the classical 
limit of the equations. In this paper we discuss the extension of 
this result to a family of hypergeometric differential equations 
associated with $2$-dimensional degenerate Garnier systems. 

%%Garnier system 
The $N$-dimensional Garnier system is a Hamiltonian system with $N$ 
variables obtained through monodromy preserving deformations of second 
order linear differential equations on $\mathbb{P}^1$ with $N+3$ regular 
singular points. In the case of $N=1$, the system reduces to the sixth 
Painlev\'e equation $P_{\rm VI}$ and the Gauss hypergeometric function 
gives a particular solution of $P_{\rm VI}$. In this sense the Gauss 
hypergeometric equation and its confluent version are associated with 
Painlev\'e equations (i.e., $1$-dimensional Garnier system). In the 
same manner a confluent family of hypergeometric differential systems 
with two independent variables are associated with the $2$-dimensional 
Garnier system according to the following diagram of degeneration: 
\begin{figure}[htbp]
$$
\xymatrix@!C=25pt@R=7pt{
&&
&& \text{$K(1,2,2)$} \ar@{->}[rr] \ar@{->}[rrdd]
&& \text{$K(2,3)$} \ar@{->}[rrd] 
&&
&&
\\
\text{$K(1,1,1,1,1)$} \ar@{->}[rr] 
&& \text{$K(1,1,1,2)$} \ar@{->}[rru] \ar@{->}[rrd] 
&&
&&
&& \text{$K(5)$} 
\\
&&
&& \text{$K(1,1,3)$} \ar@{->}[rr] \ar@{->}[rruu]
&& \text{$K(1,4)$} \ar@{->}[rru] 
&&
&&
}
$$
\label{fig:confluence}
\end{figure}

\noindent
Here $K(1,1,1,1,1)$ designates the $2$-dimensional Garnier system and, 
in general, the symbol $(\#)=(r_1, \ldots, r_m)$ means that its underlying 
monodromy preserving deformation is concerned with a linear equation 
with $m$ singular points of Poincar\'e ranks $r_1-1, \ldots, r_m-1$. 
In what follows a hypergeometric differential system with two independent 
variables is called a hypergeometric system of type $(\#)$ when it is 
associated with a confluent $2$-dimensional Garnier system $K(\#)$. 
Among them, in this article, we consider the hypergeometric systems of 
type $(1,4)$ and $(2,3)$, or the following two third-order ordinary 
differential equations obtained from the hypergeometric systems of 
type $(1,4)$ and $(2,3)$ by fixing the second variable $x_2 = t$: 
The first one is 
\begin{equation}
\label{eq:intro:(1,4)_eq(d/dx)} 
	\left\{ 3 \hbar^3 \frac{d^3}{dx^3} + 2t \hbar^2 \frac{d^2}{dx^2} + x \hbar \frac{d}{dx} 
			- \hat{\lambda}_\infty \right\} \psi = 0, 
\end{equation} 
which is called the hypergeometric equation of type $(1,4)$, and the second one is 
\begin{equation}
\label{eq:intro:(2,3)_eq(d/dx)}
	\left\{ 4 \hbar^3 \frac{d^3}{dx^3} - 2 x \hbar^2 \frac{d^2}{dx^2} 
			+ 2 ( \hat{\lambda}_\infty - \hbar ) \hbar \frac{d}{dx} - t \right\} \psi = 0, 
\end{equation}
which is called the hypergeometric equation of type $(2,3)$. 

The purpose of this article is to show that the Voros coefficients 
of \eqref{eq:intro:(1,4)_eq(d/dx)} and \eqref{eq:intro:(2,3)_eq(d/dx)} 
are expressed as the difference values of the free energy 
defined through the topological recursion due to Eynard and Orantin \cite{EO} 
(\thmref{main(i)}) and further that the explicit forms 
of Voros coefficients and the free energy of 
\eqref{eq:intro:(1,4)_eq(d/dx)} and \eqref{eq:intro:(2,3)_eq(d/dx)} can be 
obtained by using this relation between these two quantities 
(\thmref{main(iii)} and \thmref{main(iv)}). 

%% Voros 係数
Voros coefficients are defined as contour integrals of the logarithmic 
derivative of WKB solutions in the thoery of the exact WKB analysis. 
Its importance in the study of the global behavior of solutions has been already recognized by the pioneering work of Voros (\cite{Voros83}). The explicit form of Voros coefficients plays an important role to describe parametric Stokes phenomena, which are Stokes phenomena with respect to parameters included in the equation. 
The explicit form of the Voros coefficients is now known for the confluent family of the Gauss hypergeometric equation and the hypergeometric equation of type (1,4) (\cite{SS, Takei08, KoT11, ATT, Aoki-Tanda, AIT, IKo}). 
Voros coefficients are also studied for the Painlev\'e equations to clarify the parametric Stokes phenomena (\cite{I14}). 

%% TR 
On the other hand, the topological recursion introduced by Eynard and 
Orantin (\cite{EO}) is a generalization of the loop equations that the 
correlation functions of the matrix model satisfy. For a Riemann surface 
$\Sigma$ and meromorphic functions $x$ and $y$ on $\Sigma$, it 
produces an infinite tower of meromorphic differentials $W_{g,n}(z_1, \ldots, z_n)$ on $\Sigma$. 
A triplet $(\Sigma, x, y)$ is called a spectral curve and $W_{g,n}(z_1, \ldots, z_n)$ is called a correlation function. As is shown in \cite{GS, DM, BE} etc., 
the quantization scheme connects WKB solutions of 
differential equations with the topological recursion. More precisely, 
WKB solutions of a differential equation are constructed also by 
correlation functions for the spectral curve corresponding to the 
classical limit of the differential equation provided that the spectral curve 
satisfies the so-called ``admissibility condition" (cf. \ \cite[Definition 2.7]{BE}). 
Moreover, for a spectral curve, we can define free energies 
(also called symplectic invariants) $F_g$. For more details about the topological recursion, 
see, e.g., the review paper \cite{EO-08}. 
The main results of this paper as well as those of \cite{IKT-part1,IKT-part2} 
strengthen the interplay between the WKB theory and the topological recursion 
and these interplays are expected to produce more profound insights in these theories. 

%
%%論文の構成
The paper is organized as follows: 
In \S2 we recall some fundamental facts about 
the exact WKB analysis and Eynard-Orantin's topological recursion. 
In \S3 we study quantization of the spectral curve. 
In \S4 we state our main theorem (\thmref{main(i)}, \thmref{main(ii)}, \thmref{main(iii)} and \thmref{main(iv)}). 
We give a proof of our result only for (1,4) curve, but (2,3) curve can be treated similarly. 
%
%%%%%%%%%%%%%%%%%%%%%%%%%%%%%%%%%%%%%%%%%%%%%%%%%%%%%%%%%%%%%%%%%%%%%%%%
\section*{Acknowledgement}
The author would like to express my special thanks to late Professor Tatsuya Koike. 
The author is also very grateful to Professors 
Takashi Aoki, 
Sampei Hirose, 
Kohei Iwaki, 
Shingo Kamimoto, 
Takahiro Kawai, 
Saiei Matsubara, %Saiei-Jaeyeong Matsubara-Heo, 
Genki Shibukawa, 
Takahiro Shigaki, 
%Toshinori Takahashi, 
Nobuki Takayama, 
Yoshitsugu Takei 
and 
Mika Tanda 
for helpful discussions and communications. 
%This work is supported, in part, by JSPS KAKENHI Grand Numbers 
%16K17613, 16H06337, 16K05177, 17H06127. 
%

%%%%%%%%%%%%%%%%%%%%%%%%%%%%%%%%%%%%%%%%%%%%%%%%%%%%%%%%

%%%%%%%%%%%%%%%%%%%%%%%%%%%%%%%%%%%%%%%%%%%%%%%%%%%%%%%%
%%%%%%%%%%%%%%%%%%%%%%%%%%%%%%%%%%%%%%%%%%%%%%%%%%%%%%%%%%%%%%%%%%%%%%%%%%%%%%%%%%%%%%%%%%%%%%%%%%%%%%%%%%%%%%%%

\section{Voros coefficients and the topological recursion}
\label{sec:review}

%%%%%%%%%%%%%%%%%%%%%%%%%%%%%%%%%%%%%%%%%%%%%%%%%%%%%%%%
%%%%%%%%%%%%%%%%%%%%%%%%%%%%%%%%%%%%%%%%%%%%%%%%%%%%%%%%
%%%%%%%%%%%%%%%%%%%%%%%%%%%%%%

\subsection{WKB solutions}
\label{subsec:WKB-sol}

In this article we discuss the third order ordinary differential equation with a small parameter $\hbar \ne 0$ 
of the form
\begin{equation}
\label{eq:3rd-ODE}
	\left\{
		p_0(x, \hbar) \hbar^3 \frac{d^3}{dx^3} + p_1(x, \hbar) \hbar^2 \frac{d^2}{dx^2} 
		+ p_2(x, \hbar) \hbar \frac{d}{dx} + p_3(x, \hbar)
	\right\}\psi = 0,
\end{equation}
where $x \in \mathbb{C}$, and
\begin{equation}
	p_i(x, \hbar) = p_{i,0}(x) + \hbar p_{i,1}(x) \quad (i = 0, 1, 2, 3)
\end{equation}
with rational functions $p_{i,j}(x)$ ($i = 0, 1, 2, 3$, $j = 0, 1$) and 
\begin{equation}
	p_{0,1}(x) = p_{1,1}(x) = 0. 
\end{equation} 
We consider \eqref{eq:3rd-ODE} as a differential equations on the Riemann sphere $\mathbb{P}^1$ 
with regular or irregular singular points. 
For \eqref{eq:3rd-ODE} we can construct a formal solution, called a WKB solution, of the form 
\begin{align}
\label{eq:WKB-type}
	\psi (x, \hbar) 
		&= \exp \left[ \int^x S(x, \hbar) dx \right]. %, \\
%\label{eq:Riccati-gen-expansion}
%	S(x, \hbar) 
%		&:= \hbar^{-1} S_{-1}(x) + S_0(x) + \hbar S_1(x) + \cdots 
%		= \sum_{m = -1}^{\infty} \hbar^m S_m(x).
\end{align}
The logarithmic derivative $S(x, \hbar)$ of WKB solutions of \eqref{eq:3rd-ODE} satisfies the equation 
\begin{equation}
\begin{split}
\label{eq:Riccati-gen}
	p_0(x, \hbar) \hbar^3 
		\left( \frac{d^2}{dx^2} S(x, \hbar) + 3 S(x, \hbar) \frac{d}{dx} S(x, \hbar) + {S(x, \hbar)}^3 \right) 
	+ p_1(x, \hbar) \hbar^2 \left( \frac{d}{dx} S(x, \hbar) + {S(x, \hbar)}^2 \right) \\
	\qquad + \hbar p_2(x, \hbar) S(x, \hbar) + p_3(x, \hbar) = 0. 
\end{split}
\end{equation}
Eq. \eqref{eq:Riccati-gen} is a counterpart of the Riccati equation in the second-order case 
and admits a solution of the form 
\begin{align}
\label{eq:Riccati-gen-expansion}
	S(x, \hbar) 
		&:= \hbar^{-1} S_{-1}(x) + S_0(x) + \hbar S_1(x) + \cdots 
		= \sum_{m = -1}^{\infty} \hbar^m S_m(x). 
\end{align}
In fact, by substituting \eqref{eq:Riccati-gen-expansion} into \eqref{eq:Riccati-gen}, and comparing like powers of both sides  with respect to $\hbar$, we obtain 
\begin{align}
\label{eq:Riccati-gen-1}
	p_{0,0}(x) S_{-1}^3 + p_{1,0}(x) S_{-1}^2 + p_{2,0}(x) S_{-1} + p_{3,0}(x)  &= 0,\\
\label{eq:Riccati-gen-2}
	\left(3 p_{0,0}(x) S_{-1}{}^2 +2 p_{1,0}(x) S_{-1} +p_{2,0}(x)\right) S_0 
	+ 3 p_{0,0}(x) S_{-1} \frac{d S_{-1}}{dx} +p_{1,0}(x) \frac{d S_{-1}}{dx} \\
	+p_{2,1}(x) S_{-1} +p_{3,1}(x) &= 0, \notag 
\end{align}
and
\begin{align}
\label{eq:Riccati-gen-3}
	\left(3 p_{0,0}(x) S_{-1}{}^2 +2 p_{1,0}(x) S_{-1} +p_{2,0}(x)\right) S_{m + 1} 
	+ \sum_{\substack{i + j + k = m-1 \\ i, j, k \geq  0}} S_{i} S_{j} S_{k} + 3 \sum_{j = 0}^{m-1} S_{m - j - 1} S_{j} 
	\\ 
	+ 3 p_{0,0}(x) S_m \frac{d S_{-1}}{dx} + 3 p_{0,0}(x) S_{-1} \frac{d S_{m}}{dx} 
	+ p_{0,0}(x)\frac{d^2 S_{m-1}}{dx^2} 
	+ p_{1,0}(x) \sum_{j = 0}^m S_{m - j} S_{j} \notag \\
	+ p_{1,0}(x) \frac{d S_{m}}{dx} + p_{2,1}(x) S_m = 0 \quad (m \geq 0). \notag 
\end{align} 
Eq. \eqref{eq:Riccati-gen-1} has three solutions, and once we fix one of them, 
we can determine $S_m$ for $m \geq 0$ uniquely and recursively 
by \eqref{eq:Riccati-gen-2} and \eqref{eq:Riccati-gen-3}.

%%%%%%%%%%%%%%%%%%%%%%%%%%%%%%%%%%%%%%%%%%%%%%%%%%%%%%%%
%%%%%%%%%%%%%%%%%%%%%%%%%%%%%%%%%%%%%%%%%%%%%%%%%%%%%%%%
%%%%%%%%%%%%%%%%%%%%%%%%%%%%%%

\subsection{Voros coefficients}
\label{subsec:Voros-coeff}

A Voros coefficient is defined as a properly regularized integral of $S(x, \hbar)$ along 
a path connecting singular points of \eqref{eq:3rd-ODE}. 
When $S_m(x)$ with $m \geq 1$ is integrable at any singular point of \eqref{eq:3rd-ODE}, 
we can define Voros coefficients by 
\begin{equation}\label{eq:def-Voros-coeff}
V_{\gamma_{b_1, b_2}}(\hbar)
:= \int_{\gamma_{b_1, b_2}}
\big( S(x, \hbar) -\hbar^{-1}S_{-1}(x) - S_0(x)\big) dx
= \sum_{m = 1}^{\infty} \hbar^m  \int_{\gamma_{b_1, b_2}} S_m(x) dx,
\end{equation}
where $\gamma_{b_1, b_2}$ is a path from a singular point $b_1$ to a singular point $b_2$.
(When there is no need to specify a path $\gamma_{b_1,b_2}$, we use the abbreviated 
notation $V(\hbar)$ instead of $V_{\gamma_{b_1, b_2}}(\hbar)$.)
Note that Voros coefficients only depend on the class 
$[\gamma_{b_1, b_2}]$ of paths in the relative  homology group
$$
H_1 \big(\mathbb{P}^1 \setminus
\{\text{Turning points}\},
\{\text{Singular points}\}; \mathbb{Z} \big).
$$
Such an integration contour (or a relative homology class) 
can be understood as a lift of a path on $x$-plane 
onto the Riemann surface of $S_{-1}(x)$ 
(i.e., three sheeted covering of $x$-plane). 
%after drawing branch cuts and distinguishing 
%the first, second and third sheets of the Riemann surface. 
The lift of a path is specified by drawing branch cuts and distinguishing the 
first, second and third sheets of the Riemann surface.

%%%%%%%%%%%%%%%%%%%%%%%%%%%%%%%%%%%%%%%%%%%%%%%%%%%%%%%%
%%%%%%%%%%%%%%%%%%%%%%%%%%%%%%%%%%%%%%%%%%%%%%%%%%%%%%%%
%%%%%%%%%%%%%%%%%%%%%%%%%%%%%%

\subsection{The global topological recursion}
\label{sec:TR}

%The global topological recursion was introduced by \cite{BE-12}. 
Let us first fix notation. 
%A starting point of Eynard-Orantin's theory (\cite{EO}) is a spectral curve.
We restrict ourselves to the case when a spectral curve is of genus $0$ 
because we will not discuss the general case in this paper 
(see \cite{BE-12} for the general definition). 

\begin{dfn} \label{def:spectral-curve}
A spectral curve (of genus $0$) is a pair $(x(z), y(z))$
of non-constant rational functions on $\mathbb{P}^1$, 
such that their exterior differentials 
$dx$ and $dy$ never vanish simultaneously. 
\end{dfn}

Let $R$ be the set of ramification points of $x(z)$, 
i.e., $R$ consists of zeros of $dx(z)$ of any order and poles of $x(z)$ 
whose orders are greater than or equal to two 
(here we consider $x$ as a branched covering map from $\mathbb{P}^1$ to itself). 
We further assume that 

\begin{itemize}
\item[(A1)]
A function field $\mathbb{C}(x(z), y(z))$ coincides with $\mathbb{C}(z)$.

\item[(A2)]
% For any ramification point $r$ of $x(z)$:
% \begin{itemize}
% \item[(i)]
% If $r$ is a  zero of $dx(z)$, and if $y(z)$ is holomorphic near $r$,
% then $dy(r) \neq 0$.

% \item[(ii)]
If $r$ is a ramification point which is a pole of $x(z)$, 
and if $Y(z) = - x(z)^2 y(z)$ is holomorphic near $r$,
then $dY(r) \neq 0$.
% \end{itemize}

\item[(A3)]
All of the ramification points of $x(z)$ are simple,
i.e., the ramification index of each ramification point
is two.

\item[(A4)]
We assume branch points are all distinct,
where a branch point is defined as the image of
a ramification point by $x(z)$.
\end{itemize}

We need to introduce some notation to define the topological recursion. 

\begin{dfn} \label{def:effective-ramification}
A ramification point $r$ is said to be ineffective if 
the correlation functions $W_{g,n}(z_1,\dots,z_n)$ 
for $(g,n) \ne (0,1)$ are holomorphic at $z_i = r$ for each $i=1,\dots,n$. 
A ramification point which is not ineffective is called effective. 
The set of effective ramification points is denoted by $R^{\ast}$ $(\subset R)$. 
\end{dfn}

%The following properties of ineffective ramification points are important in this paper. 
%\begin{prop} \label{prop:ineffective}
%Under the assumptions (A1) -- (A3), we obtain the following:
%\begin{itemize}
%\item[\rm (i)] 
%Let $r$ be a ramification point. Then, $r$ is an ineffective ramification point
%if and only if $(y(z) - y(\overline{z})) dx(z)$ has a pole at $r$.
%\item[\rm (ii)]
%If $r \in R$ is an ineffective ramification point, % (i.e. $ r \in R \setminus R^\ast$), 
%then the residue at $r$ in \eqref{eq:TR} is zero.
%\end{itemize}
%\end{prop}

\begin{dfn}
For two sets $A$ and $B$, $A \subseteq_k B$ means $A \subseteq B$ and $|A| = k$. 
\end{dfn}

\begin{dfn}
$\mathcal{S}(\bm{t})$ denotes the set of set partitions of $\bm{t} = \{ t_1, \ldots, t_k \}$. 
\end{dfn}

Then, we define the recursive structure: 

\begin{dfn}[{\cite[Definition 3.4]{BE}}]
Let $\{ W_{g, n} \}$ be an arbitrary collection of symmetric multidifferential on $(\mathbb{P}^1)^n$ 
with $g \geq 0$ and $n \geq 1$. Let $k \geq 1$, 
$\bm{t} = \{ t_1, \ldots, t_k \}$ and $\bm{z} = \{ z_1, \ldots, z_n \}$. Then, we define 
\begin{align}
\label{eq:R(k)}
	{\mathcal{R}}^{(k)} \left(W_{g, n+1}(\bm{t}; \bm{z})\right) 
		&:= \sum_{\mu \in \mathcal{S}(\bm{t})} 
			\sum_{\sqcup_{i=1}^{l(\mu)} I_i = \{1, 2, \cdots, n\}} 
			\sum'_{\sum_{i=1}^{l(\mu)} g_i = g + l(\mu) - k} 
			\left\{ \prod_{i=1}^{l(\mu)} W_{g_i, |\mu_i| + |I_i|}(\mu_i, z_{I_i}) \right\}. 
\end{align}
The first summation in \eqref{eq:R(k)} is over set partitions of $\bm{t}$, 
$l(\mu)$ is the number of subsets in the set partition $\mu$. 
The third summation in \eqref{eq:R(k)} is over all $l(\mu)$-tuple 
of non-negative integers $(g_1, \ldots, g_{l(\mu)})$ such that $\sum_{i=1}^{l(\mu)} g_i = g + l(\mu) - k$.  
$\sqcup$ denotes the disjoint union, 
and the prime ${}'$ on the summation symbol in \eqref{eq:R(k)} means that we exclude terms for
$(g_i, |\mu_i| + |I_i|) = (0, 1)$ ($i = 1, \ldots, l(\mu)$)
(so that $W_{0, 1}$ does not appear) in the sum. 
We also define 
\begin{align}
	{\mathcal{R}}^{(0)} W_{g, n+1}(\bm{z}) &:= \delta_{g,0} \delta_{n,0},  
\end{align}
where $\delta_{i,j}$ is the Kronecker delta symbol. 
\end{dfn}

\begin{ex} 
For $k=2$, $\mathcal{S}(\bm{t})$ is given by 
%For $\bm{t} = \{ t_1, t_2 \}$ 
\begin{align}
\mathcal{S}(\{ t_1, t_2 \}) 
= \Bigl\{ \bigl\{ \{ t_1, t_2 \} \bigr\}, \bigl\{\{ t_1 \}, \{ t_2 \} \bigr\} \Bigr\}. 
\end{align}
%Then, 
%\begin{align}
%l(\bigl\{ \{ t_1, t_2 \} \bigr\})=1, \quad
%l(\bigl\{\{ t_1 \}, \{ t_2 \} \bigr\})=2. 
%\end{align}
Therefore, we have 
\begin{align}
{\mathcal{R}}^{(2)} \left(W_{g, n+1}(\bm{t}; \bm{z})\right) 
&= W_{g-1,n+2}(\bm{t}, \bm{z}) 
	+ \sum_{I_1\sqcup I_2 = \{1, 2, \cdots, n\} } 
		\sum'_{g_1 + g_2 = g - 1} 
		\left\{ \prod_{i=1}^{2} W_{g_i, 1 + |I_i|}(t_i, z_{I_i}) \right\}. 
\end{align}
\end{ex}
 
We now define the topological recursion. 

\begin{dfn}[{\cite[Definition 3.6]{BE}}]
Eynard-Orantin's correlation function
$W_{g, n}(z_1, \cdots, z_n)$ for $g \geq 0$ and $n \geq 1$ 
is defined as a multidifferential 
on $(\mathbb{P}^1)^n$ using the recurrence relation
%(called Eynard-Orantin's topological recursion)
\begin{align}
\label{eq:gTR}
	&W_{g, n+1}(z_0, z_1, \cdots, z_n) \\
	&:= \sum_{r \in R} \Res_{z = r} \left\{ 
			\sum_{k=1}^{r-1} \sum_{\beta(z) \subseteq_k \tau'(z)} 
			(-1)^{k+1} \frac{w^{z - \alpha}(z_0)}{E^{(k)}(z;\beta(z))} 
			{\mathcal{R}}^{(k+1)} \left(W_{g, n+1}(z, \beta(z);z_1, \cdots, z_n)\right) 
		\right\} \notag
\end{align}
for $2g + n \geq 2$ with initial conditions
\begin{align}
W_{0, 1}(z_0) &:= y(z_0) dx(z_0),
\quad
W_{0, 2}(z_0, z_1) = B(z_0, z_1)
:= \frac{dz_0 dz_1}{(z_0 - z_1)^2}.
\end{align}
Here we set $W_{g,n} \equiv 0$ for a negative $g$ and 
\begin{align}
\label{eq:E(k)}
	E^{(k)}(z; t_1, \ldots, t_k)
		&:= \prod_{i=1}^k (W_{0,1}(z) - W_{0,1}(t_i)). 
\end{align}
The second and third summations in \eqref{eq:gTR} together mean that 
we are summing over all subsets of $\tau'(z)$. 
$\alpha$ is an arbitrary base point on $\mathbb{P}^1$,  but it can be checked (see \cite{BE-12}) that the definition is actually independent of the choice of base point $\alpha$.
We have also used the multi-index notation:
for $I = \{i_1, \cdots, i_m\} \subset \{1, 2, \cdots, n\}$
with $i_1 < i_2 < \cdots < i_m$, $z_I:= (z_{i_1}, \cdots, z_{i_m})$.
\end{dfn}

Note that this recursion was called ``global topological recursion" in \cite{BE-12}. 
It was shown in \cite{BE-12} that it is indeed equivalent to the following 
usual local formulation of the topological recursion when the ramification points are all simple. 

%%%%%%%%%%%%%%%%%%%%%%%%%%%%%%%%%%%%%%%%%%%%%%%%%%%%%%%%%%%%%%%%%%%%%%%%%%%%%%%%%%%%%%

\subsection{Free energy through the topological recursion}
\label{subsec:free-energy}

The $g$-th free energy $F_g$ ($g\geq 0$) is a complex number
defined for the spectral curve,
and one of the most important objects in Eynard-Orantin's theory.
It is also called a symplectic invariant since it is 
``almost'' invariant under symplectic transformations
of spectral curves (see \cite{EO-13} for the details). 

\begin{dfn}[{\cite[Definition 4.3]{EO}}]
For $g \geq 2$, the $g$-th free energy $F_g$ is defined by
\begin{equation}
\label{def:Fg2}
F_g := \frac{1}{2- 2g} \sum_{r \in R} \Res_{z = r}
\big[\Phi(z) W_{g, 1}(z) \big]
\quad (g \geq 2),
\end{equation}
where $\Phi(z)$ is a primitive of $y(z) dx(z)$. 
For $g=1$, we define the free energy $F_1$ satisfying \eqref{eq:variational_free-energy}. 
%\begin{equation}
%\frac{\partial F_1}{\partial \varepsilon}(\varepsilon)
%= \int_{\gamma}\Lambda_\varepsilon(z) \, W_{g, 1}(z;\varepsilon)
%\end{equation}
The free energies $F_0$ for $g=0$ is also defined, but in a different manner 
(see \cite[\S 4.2.3]{EO} for the definition). 
%The free energies $F_0$ and $F_1$ for $g=0$ and $1$ are also defined, but in a different manner 
%(see \cite[\S 4.2.2 and \S 4.2.3]{EO} for the definition). 
\end{dfn}
Note that the right-hand side of \eqref{def:Fg2} does not
depend on the choice of the primitive
because $W_{g, 1}$ has no residue at each ramification point. 
%(see Theorem \ref{thm:TRprop} (ii)).

In applications (and in our article), the generating series
\begin{equation} \label{eq:total-free-energy}
F := \sum_{g = 0}^{\infty} \hbar^{2g-2} F_g
\end{equation}
of $F_g$'s is crucially important. 
We also call the generating series \eqref{eq:total-free-energy} 
the free energy of the spectral curve. 

%%%%%%%%%%%%%%%%%%%%%%%%%%%%%%%%%%%%%%%%%%%%%%%%%%%%%%%%%%%%%%%%%%%%%%%%%%%%%%%%%%%%%%

\subsection{Variational formulas for the correlation functions}
\label{subsec:variational-formula-TR}

In \S \ref{subsection:quantum-(1,4)} and \S \ref{subsection:quantum-(2,3)} 
we will consider a family of spectral curves parametrized by complex parameters. 
For our purpose, we briefly recall the variational formulas obtained 
by \cite[\S 5]{EO} which describe the differentiation 
of the correlation functions $W_{g,n}$ and the free energies $F_g$ 
with respect to the parameters.

Suppose that we have given a family 
$(x_\varepsilon(z), y_\varepsilon(z))$
of spectral curves parametrized by a complex parameter 
$\varepsilon$ which lies on a certain domain $U \subset {\mathbb C}$ 
such that  
\begin{itemize}
\item 
$x_\varepsilon(z), y_\varepsilon(z)$ depend 
holomorphically on $\varepsilon \in U$. 
% Furthermore, the ramification points of $x_\varepsilon(z)$ 
% also depend holomorphically on $\varepsilon \in U$.
\item 
$x_\varepsilon(z), y_\varepsilon(z)$ 
satisfy the assumptions (A1) -- (A4) 
for any $\varepsilon \in U$. 
\item 
The cardinality of the set $R_\varepsilon$ of 
ramification points of $x_\varepsilon(z)$ is constant on $\varepsilon \in U$
(i.e. ramification points of $x_\varepsilon(z)$ 
are distinct for any $\varepsilon \in U$).
\end{itemize}
Then, the correlation functions $W_{g,n}(z_1, \dots, z_n; \varepsilon)$ 
and the $g$-th free energy $F_g(\varepsilon)$ defined from the spectral curve 
$(x_\varepsilon(z), y_\varepsilon(z))$
are holomorphic in $\varepsilon \in U$ 
as long as $z_i \notin R_\varepsilon$ for any $i=1,\dots,n$. 

In order to formulate a variational formula for correlation functions, 
we need to introduce the notion of ``differentiation with fixed $x$". 
For a meromorphic differential $\omega(z; \varepsilon)$ on ${\mathbb P}^1$, 
which depends on $\varepsilon$ holomorphically, define 
\begin{equation}
\delta_{\varepsilon} \, \omega(z; \varepsilon)  
:= \left( 
\frac{\partial}{\partial \varepsilon} \omega(z_{\varepsilon}(x); \varepsilon) 
\right) \biggl|_{x=x_{\varepsilon}(z)} 
\quad (z \notin R_\varepsilon), 
\end{equation}
where $z_{\varepsilon}(x)$ is (any branch of) the inverse function of 
$x = x_{\varepsilon}(z)$ which is defined away from branch points 
(i.e. points in $x_{\varepsilon}(R_\varepsilon)$). 
In \cite{EO} the notation 
$\delta_{\Omega} \, \omega(z; \varepsilon) \big|_{x(z)}$ 
is used for $\delta_{\varepsilon} \omega(z; \varepsilon)$ defined above.
Such differentiation $\delta_{\varepsilon}$ can be generalized 
to multidifferentials in an obvious way. 
Then, under these assumptions, the variational formula is formulated as follows.

\begin{thm}[{\cite[Theorem 5.1]{EO}}] \label{thm:VariationFormula}
In addition to the above conditions, for any $\varepsilon \in U$, 
we further assume that  
\begin{itemize}
\item 
If $r_\varepsilon \in R_\varepsilon$ is a zero of 
$dx_\varepsilon(z)$, then the functions
$\partial x_\varepsilon/ \partial \varepsilon$ and 
$\partial y_\varepsilon/ \partial \varepsilon$ are holomorphic
(as functions of $z$) at $r_\varepsilon$, 
and $dy_\varepsilon(z)$ does not vanish
(as a differential of $z$) at $r_\varepsilon$.
\item 
If $r_\varepsilon \in R_\varepsilon$ is a pole of $x_\varepsilon(z)$ 
with an order greater than or equal to two, then 
\[
\frac{\Omega_\varepsilon(z) \, B(z_1, z) \, B(z_2 , z)}
{dy_\varepsilon(z) dx_\varepsilon(z)}
\]
is holomorphic (as a differential in $z$) at $r(\varepsilon)$, where 
\begin{equation} \label{eq:Omega}
\Omega_\varepsilon(z) := 
\frac{\partial y_\varepsilon}{\partial \varepsilon}(z) \, dx(z)
- \frac{\partial  x_\varepsilon}{\partial \varepsilon}(z) \, dy(z).
\end{equation}
\item 
There exist a path $\gamma$ in $\mathbb{P}^1$ passing through
no ramification point and a function $\Lambda_\varepsilon (z)$ 
holomorphic in a neighborhood of $\gamma$ for which the following holds.
\begin{equation}
\Omega_\varepsilon(z) = 
\int_{\zeta \in \gamma} \Lambda_\varepsilon(\zeta) \, B(z, \zeta).
\end{equation} 
\end{itemize}
Then, $W_{g,n}(z_1, \dots, z_n; \varepsilon)$ 
and $F_g(\varepsilon)$ defined from the spectral curve 
$(x_\varepsilon(z), y_\varepsilon(z))$
satisfy the following relations:  
\begin{itemize}
\item[{\rm{(i)}}]
For $2g + n \geq 2$, 
\begin{equation}
% \label{eq:VariationFormula}
\delta_{\varepsilon} \, W_{g, n} 
(z_1, \cdots, z_n; \varepsilon)
= \int_{\zeta \in \gamma} \Lambda_\varepsilon(\zeta) \,
W_{g, n + 1}(z_1, \cdots, z_n, \zeta; \varepsilon) 
\end{equation}
holds on $\varepsilon \in U$ as long as each of $z_1, \cdots, z_n$ satisfies 
$z_i \notin R_\varepsilon$.

\item[{\rm{(ii)}}]
For $g \geq 1$,
\begin{equation}
\label{eq:variational_free-energy}
\frac{\partial F_g}{\partial \varepsilon}(\varepsilon)
= \int_{\gamma}\Lambda_\varepsilon(z) \, W_{g, 1}(z;\varepsilon)
\end{equation}
holds on $\varepsilon \in U$.

% \item[{\rm{(ii)}}]
% For $2 g + n \geq 3$,
% \begin{equation}
% \frac{\partial^n F_g}{\partial \varepsilon^n}
% = \underbrace{\int_{\gamma}  \cdots \int_{\gamma}}_{\text{$n$ th}}
% W_{g, n}(z_1, \cdots, z_n) \Lambda(z_1, \cdots, z_n).
% \end{equation}

\end{itemize}

\end{thm}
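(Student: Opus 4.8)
The plan is to reproduce the argument of Eynard--Orantin. One establishes the variational formula of part (i) first for the two seeds $W_{0,1}$ and $W_{0,2}$ of the recursion (the initial conditions, which are not produced by \eqref{eq:gTR}), then propagates it to every $(g,n)$ with $2g+n\ge 2$ by induction on $2g-2+n$ using \eqref{eq:gTR}, and finally deduces the free-energy identity \eqref{eq:variational_free-energy} from part (i) together with the dilaton equation of \cite{EO}. For $g=1$ there is nothing to prove in part (ii), since $F_1$ is \emph{defined} by \eqref{eq:variational_free-energy}.

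First I would record the two base computations. For $W_{0,1}(z)=y(z)\,dx(z)$, a direct application of the chain rule to the ``fixed $x$'' derivative gives $\delta_\varepsilon W_{0,1}(z)=\Omega_\varepsilon(z)$ with $\Omega_\varepsilon$ as in \eqref{eq:Omega} (using $\partial_\varepsilon z_\varepsilon(x)=-(\partial_\varepsilon x_\varepsilon)/x_\varepsilon'$ and $(dy/dx)\,dx=dy$), so the hypothesis $\Omega_\varepsilon(z)=\int_{\gamma}\Lambda_\varepsilon(\zeta)\,B(z,\zeta)$ reads exactly $\delta_\varepsilon W_{0,1}(z)=\int_\gamma\Lambda_\varepsilon(\zeta)\,W_{0,2}(z,\zeta)$. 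For $W_{0,2}=B$ I would establish the Rauch-type identity $\delta_\varepsilon B(z_1,z_2)=\int_\gamma\Lambda_\varepsilon(\zeta)\,W_{0,3}(z_1,z_2,\zeta)$: although on a genus-$0$ curve $B$ is coordinate-independent it is not invariant under the ``fixed $x$'' variation, so one computes $\delta_\varepsilon B$ in terms of the displacement of the ramification points, inserts the explicit $W_{0,3}$ produced by \eqref{eq:gTR}, and matches the two expressions using the chosen representation of $\Omega_\varepsilon$. The holomorphy of $\partial_\varepsilon x_\varepsilon$ and $\partial_\varepsilon y_\varepsilon$ at the ramification points, the nonvanishing of $dy_\varepsilon$ there, (A3), and the distinctness of $R_\varepsilon$ are precisely what keep these residues finite and make the matching valid.

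For the inductive step, assume part (i) for all $(g',n')$ with $2g'-2+n'<2g-2+n$ and apply $\delta_\varepsilon$ to the right-hand side of \eqref{eq:gTR}. Two lemmas are needed: (a) that $\delta_\varepsilon$ commutes with $\sum_{r\in R_\varepsilon}\Res_{z=r}$ even though the ramification points move with $\varepsilon$, which follows from (A3), the constancy and distinctness of $R_\varepsilon$, and the ``fixed $x$'' prescription, working in a local coordinate adapted to each simple ramification point; and (b) the variational formula for the recursion kernel $w^{z-\alpha}(z_0)/E^{(k)}(z;\beta(z))$, obtained by differentiating $w^{z-\alpha}(z_0)$ (which brings in $\delta_\varepsilon B$, hence a $\gamma$-integral of $W_{0,3}$, by the second base case) and the factor $1/E^{(k)}$ from \eqref{eq:E(k)} (which brings in $\delta_\varepsilon W_{0,1}=\Omega_\varepsilon$, hence a $\gamma$-integral of $W_{0,2}$, by the first base case). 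With (a) and (b), the Leibniz rule turns $\delta_\varepsilon$ of the integrand of \eqref{eq:gTR} into a sum of terms each of which — by the inductive hypothesis applied to every $W_{g_i,\cdot}$ occurring in $\mathcal{R}^{(k+1)}$ and by (b) for the kernel — carries a single factor $\int_\gamma\Lambda_\varepsilon(\zeta)(\cdots)$; pulling $\int_\gamma\Lambda_\varepsilon(\zeta)$ and $\sum_{r}\Res_{z=r}$ outside (legitimate since $\gamma$ is compact and avoids $R_\varepsilon$ and all the sums are finite), the remaining bracket is precisely the integrand of \eqref{eq:gTR} for $W_{g,n+1}$ with the new spectator variable $\zeta$ inserted among $z_1,\dots,z_n$. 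I expect this reassembly, together with step (b), to be the main obstacle: one must differentiate the global recursion kernel with its sums over $\beta(z)\subseteq_k\tau'(z)$ and the product $E^{(k)}$ of \eqref{eq:E(k)}, and then verify without sign or index errors that distributing the insertion $\zeta$ over all the ways it can attach to the factors is exactly the difference between the recursions for $W_{g,n}$ and $W_{g,n+1}$; once this is done one obtains $\delta_\varepsilon W_{g,n}(z_1,\dots,z_n)=\int_\gamma\Lambda_\varepsilon(\zeta)\,W_{g,n+1}(z_1,\dots,z_n,\zeta)$. The commutation lemma (a) is delicate but localized and, once one passes to the adapted coordinate, routine.

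Finally, for $g\ge 2$ I would differentiate the definition \eqref{def:Fg2}. With $\Phi$ a primitive of $W_{0,1}$ chosen so that $\delta_\varepsilon\Phi$ is a primitive of $\delta_\varepsilon W_{0,1}=\Omega_\varepsilon$, the Leibniz rule gives $\partial_\varepsilon F_g=\frac{1}{2-2g}\sum_{r}\Res_{z=r}\bigl[(\delta_\varepsilon\Phi)\,W_{g,1}+\Phi\,(\delta_\varepsilon W_{g,1})\bigr]$. In the second term I substitute part (i) for $W_{g,1}$, exchange $\Res$ and $\int_\gamma$, and apply the dilaton equation in the form $\sum_{r}\Res_{z=r}[\Phi(z)\,W_{g,2}(z,\zeta)]=(1-2g)\,W_{g,1}(\zeta)$; in the first term I use $\delta_\varepsilon\Phi(z)=\int_\gamma\Lambda_\varepsilon(\zeta)\,w^{z-\alpha}(\zeta)$ and integrate by parts inside the residue (the relevant $1$-form has vanishing total residue), converting it into $\int_\gamma\Lambda_\varepsilon(\zeta)\sum_{r}\Res_z[\Phi_g(z)\,B(z,\zeta)]$ with $\Phi_g$ a local primitive of $W_{g,1}$ near $R$, and evaluating this residue via the vanishing of the total residue of $\Phi_g(z)\,B(z,\zeta)$ (whose only poles are at $R$ and at $z=\zeta$). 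The two contributions add up to $\partial_\varepsilon F_g=\int_\gamma\Lambda_\varepsilon(z)\,W_{g,1}(z)$, which is part (ii).
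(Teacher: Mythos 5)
The paper itself gives no proof of this statement: it is quoted from \cite[Theorem 5.1]{EO}, and the text simply refers the reader to \cite[\S 5.1]{EO} (Rauch's variational formula, cf.\ \cite{KK}). Your outline --- the Rauch-type base cases $\delta_\varepsilon W_{0,1}=\Omega_\varepsilon$ and $\delta_\varepsilon B(z_1,z_2)=\int_\gamma\Lambda_\varepsilon\,W_{0,3}$, induction on $2g-2+n$ through the recursion \eqref{eq:gTR}, $g=1$ being true by the definition of $F_1$, and the dilaton equation for $g\ge 2$ --- is exactly the structure of that cited proof, so the route is the intended one.

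There is, however, a genuine gap precisely at the point where this paper's statement differs from Eynard--Orantin's original theorem. In this paper $R_\varepsilon$ contains not only the zeros of $dx_\varepsilon$ but also the poles of $x_\varepsilon(z)$ of order $\ge 2$ (see the definition of $R$ and assumption (A2)), and such points actually occur for the curves at hand ($z=\infty$ for the $(1,4)$ curve, $z=0$ for the $(2,3)$ curve); the global recursion \eqref{eq:gTR} takes residues there, and the paper explicitly remarks that this is why the second hypothesis --- holomorphy of $\Omega_\varepsilon(z)B(z_1,z)B(z_2,z)/\bigl(dy_\varepsilon(z)\,dx_\varepsilon(z)\bigr)$ at such $r$ --- must be added to EO's statement. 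Your argument never invokes this hypothesis: the matching of $\delta_\varepsilon B$ with $W_{0,3}$, your commutation lemma (a) for $\delta_\varepsilon$ with $\sum_{r}\Res_{z=r}$, the kernel variation (b), and the residue manipulations in your $g\ge 2$ computation are all justified only by the local analysis at simple zeros of $dx_\varepsilon$ (holomorphy of $\partial_\varepsilon x_\varepsilon,\partial_\varepsilon y_\varepsilon$, $dy_\varepsilon\neq 0$, (A3)). At a ramification point which is a higher-order pole of $x_\varepsilon$ those justifications do not apply ($dx_\varepsilon$ has a pole rather than a zero, the kernel and the $W_{g,n}$ have a different local structure, and the ``fixed $x$'' variation of the residue can a priori produce extra contributions); it is exactly the second hypothesis that rules such contributions out. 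Until you incorporate it into steps (a), (b) and the free-energy computation --- i.e.\ check the residues at the ramification points lying over poles of $x_\varepsilon$ --- your proof covers only EO's original setting, not the modified statement used in this paper.
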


See \cite[\S 5.1]{EO} 
(based on the Rauch's variation formula; see \cite{KK} for example) 
for the proof.
We note that, since we modify the definition of the topological recursion 
by adding higher order poles of $x(z)$ as ramification point, 
we also need to require the second condition in the above claim. 

%In examples discussed in \S \ref{subsection:quantum-(1,4)} and \S \ref{subsection:quantum-(2,3)}, 
%we will use the variational formula in a situation that
%$\Lambda_\varepsilon(z) = \Lambda$ is a constant function 
%(which is also independent of the parameter $\varepsilon$). 
%In that case, applying the above formulas iteratively, we have 
%\begin{equation} \label{eq:iterative-variation}
%\frac{\partial^n F_g}{\partial \varepsilon^n}(\varepsilon)
% = \Lambda^n 
%% \underbrace{\int_{\gamma}  \cdots \int_{\gamma}}_{\text{$n$ th}}
%\int_{\zeta_1 \in \gamma}  \cdots \int_{\zeta_n \in \gamma}
%W_{g, n}(\zeta_1, \cdots, \zeta_n; \varepsilon) .
%\end{equation}

%%%%%%%%%%%%%%%%%%%%%%%%%%%%%%%%%%%%%%%%%%%%%%%%%%%%%%%%
%%%%%%%%%%%%%%%%%%%%%%%%%%%%%%

\section{Quantization of spectral curves}
\label{subsec:quantum-curve}

%%%%%%%%%%%%%%%%%%%%%%%%%%%%%%%%%%%%%%%%%%%%%%%%%%%%%%%%

We treat the quantization by using the divisor with parameters which was introduced by \cite{BE}. 
 %Because we only consider the case of the hypergeometric differential equation of type (1,4) and type (2,3) in this article, 
 In this article, we consider the defining equation of the spectral curve
%In the case of the hypergeometric differential equation of the type (1,4) and the type (2,3), the quantization introduced by \cite{BE} is given by following way: 
\begin{equation}
\label{eq:spectral-curve}
	P(x, y) = p_0(x) y^3 + p_1(x) y^2 + p_2(x) y + p_3(x) = 0. 
\end{equation}

\begin{dfn}[{\cite[Definition 2.3]{BE}}]
Let us rewrite the defining equation \eqref{eq:spectral-curve} of the spectral curve as 
\begin{equation}
	P(x, y) = \sum_{i, j \in A} \alpha_{i, j} x^i y^j = 0 \quad (\alpha_{i, j} \ne 0).
\end{equation}
Then the Newton polygon $\Delta$ of \eqref{eq:spectral-curve} is the convex hull of the set $A$. 
\end{dfn}

\begin{dfn}[{\cite[Definition 2.5]{BE}}]
For $m = 2, 3$, we define the following meromorphic function on $\mathbb{P}^1$:  
\begin{equation}
	P_m(x, y) = \sum_{k = 1}^{m-1} p_{m-1-k}(x) y^k = 0. 
\end{equation}
\end{dfn}

\begin{dfn}[{\cite[Definition 2.7]{BE}}]
We say that a spectral curve is admissible if: 
\begin{itemize}
	\item[1.] Its Newton polygon $\Delta$ has no interior point; 
	\item[2.] If the origin $(x, y) = (0, 0) \in \mathbb{C}^2$ is on the curve 
			$\{ P(x, y) = 0 \subset \mathbb{C}^2\}$, then the curve is smooth at this point. 
\end{itemize}
\end{dfn}

We assume that our spectral curve $(x(z), y(z))$ is admissible. 
Then the following theorem holds according to \cite{BE}. 

\begin{thm}[{\cite[Lemma 5.14]{BE}}]
\label{thm:WKB-Wg,n-BE}
%%%
Let $\beta_i \quad (1 \leqq i \leqq n)$ be simple poles of $x(z)$ and 
\begin{equation}
\label{eq:D}
\begin{split}
	D(z ; \underline{\nu}) &= [z] - \sum_{i=1}^{n} \nu_i [\beta_i]
\end{split}
\end{equation}
be a divisor on $\mathbb{P}^1$, where $\nu_i \quad (1 \leqq i \leqq n)$ are complex numbers satisfying 
$\sum_{i=1}^{n} \nu_i = 1$. 
For a differential $\omega(z)$, we define its integration along the divisor $D(z; \underline{\nu})$ by   
\[
\int_{D(z; \underline{\nu})} \omega(z) = 
\sum_{i=1}^{n} \nu_i \int^{z}_{\beta_i} \omega(z) 
\]
and extend the definition to multidifferentials in an obvious way. 
Let $W_{g, n}(z_1, \cdots, z_{n})$ be the correlation functions of a spectral curve $(x(z), y(z))$ defined from (\ref{eq:spectral-curve}). 
Then,
\begin{equation}
\label{eq:WKB-Wg,n}
\begin{split}
	\psi(x, \hbar)
	&= \exp \Bigg[ \hbar^{-1} \int^z W_{0, 1}(z) 
		+ \frac{1}{2!} \int_{D(z ; \nu)} \int_{D(z ; \nu)} 
			\left( W_{0, 2}(z_1, z_2) - \frac{dx(z_1) \, dx(z_2)}{(x(z_1) - x(z_2))^2} \right) \\
	&\quad \left. \left.
	   + \sum_{m = 1}^{\infty} \hbar^m 
	  	\left\{ \sum_{\substack{2g + n - 2 = m \\ g \geq 0, \, n \geq 1}} 
			\frac{1}{n!} \int_{D(z ; \nu)} \cdots \int_{D(z ; \nu)} W_{g, n}(z_1, \ldots, z_n) 
		\right\} \right]  \right|_{z = z(x)}
\end{split}
\end{equation}
is a WKB type formal solution of 
%\begin{equation}
%\begin{split}
%\label{eq:quantization}
%	&\left[ D_1 D_2 \cdots D_{r-1} \frac{p_0(x)}{x^{\lfloor \alpha_r \rfloor}} D_{r} 
%			+ D_1 D_2 \cdots D_{r-2} \frac{p_1(x)}{x^{\lfloor \alpha_{r-1} \rfloor}} D_{r-1} 
%			+ \cdots 
%			+ \frac{p_{r-1}(x)}{x^{\lfloor \alpha_{1} \rfloor}} D_{1} 
%			+ \frac{p_{r}(x)}{x^{\lfloor \alpha_{0} \rfloor}} \right. \\ 
%	&\quad \left. 
%			- \hbar C_1 D_1 D_2 \cdots D_{r-2} 
%				\frac{x^{\lfloor \alpha_{r-1} \rfloor}}{x^{\lfloor \alpha_{r-2} \rfloor}} 
%			- \hbar C_2 D_1 D_2 \cdots D_{r-3} 
%				\frac{x^{\lfloor \alpha_{r-2} \rfloor}}{x^{\lfloor \alpha_{r-3} \rfloor}} 
%			- \cdots 
%			- \hbar C_{r-1} \frac{x^{\lfloor \alpha_{1} \rfloor}}{x^{\lfloor \alpha_{0} \rfloor}} 
%	\right] \psi = 0, 
%\end{split}
%\end{equation}
\begin{equation}
\begin{split}
\label{eq:quantization}
	\left[ D_1 D_2 \frac{p_0(x)}{x^{\lfloor \alpha_{3} \rfloor}} D_{3} 
			+ D_1 \frac{p_1(x)}{x^{\lfloor \alpha_{2} \rfloor}} D_{2} 
			+ \frac{p_2(x)}{x^{\lfloor \alpha_{1} \rfloor}} D_{1} 
			+ \frac{p_3(x)}{x^{\lfloor \alpha_{0} \rfloor}} 
			- \hbar C_1 D_1 \frac{x^{\lfloor \alpha_{2} \rfloor}}{x^{\lfloor \alpha_{1} \rfloor}} 
			- \hbar C_2 \frac{x^{\lfloor \alpha_{1} \rfloor}}{x^{\lfloor \alpha_{0} \rfloor}} 
	\right] \psi = 0, 
\end{split}
\end{equation}
%where $C_k = \sum_{i = 1}^{n} \nu_i {C_k}^{(i)}$ $(k = 1, 2)$ and 
%\begin{align*}
%	{C_k}^{(i)} 
%	= \lim_{z \rightarrow {\beta_i}} \frac{P_{k+1}(x(z), y(z))}{{x(z)}^{\lfloor \alpha_{3-k} \rfloor + 1}}
%	\quad (k = 1, 2, \quad i = 1, \ldots, n). 
%\end{align*}
%For $m = 0, 1, 2, 3$, we define 
%\begin{equation}
%	\alpha_m = \inf \{ a \mid (a, m) \in \Delta \}. 
%\end{equation}
where 
\begin{align*}
	\alpha_m &= \inf \{ a \mid (a, m) \in \Delta \} \quad (m = 0, 1, 2, 3), \\
	D_i 
	&= \hbar \frac{x^{\lfloor \alpha_{i} \rfloor}}{x^{\lfloor \alpha_{i-1} \rfloor}} \frac{d}{dx} 
		\quad (i = 1, 2, 3), \\
	C_k 
	&= \sum_{i = 1}^{n} \nu_i \left( 
		\lim_{z \rightarrow {\beta_i}} \frac{P_{k+1}(x(z), y(z))}{{x(z)}^{\lfloor \alpha_{3-k} \rfloor + 1}}
		\right) \quad (k = 1, 2). 
\end{align*}
%%%
\end{thm}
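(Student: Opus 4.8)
The plan is to establish the asserted quantization statement by a direct substitution argument, treating the claimed WKB-type series \eqref{eq:WKB-Wg,n} as an ansatz whose logarithmic derivative $S(x,\hbar)$ must be shown to satisfy the Riccati-type equation \eqref{eq:Riccati-gen} associated with the operator in \eqref{eq:quantization}. First I would unwind the definition: writing $\psi = \exp[\int^z S\, dx]$, one reads off from \eqref{eq:WKB-Wg,n} that $S(x,\hbar) = \hbar^{-1} y(z) + (\text{correction terms})$ with $z = z(x)$, where the correction terms are the $z$-derivatives (divided by $dx$) of the ``integrated'' correlation functions $\frac{1}{n!}\int_{D}\cdots\int_{D} W_{g,n}$. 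The key structural input is that the divisor $D(z;\underline\nu) = [z] - \sum_i \nu_i[\beta_i]$ has the property $\partial_z \int_{D(z;\underline\nu)} \omega = \omega(z)/dx(z)$ in the appropriate sense, so that differentiating the exponent in \eqref{eq:WKB-Wg,n} collapses one integration and produces, order by order in $\hbar$, exactly the generating series $\sum_{g,n}\frac{1}{(n-1)!}\int_{D}\cdots\int_{D} W_{g,n}(\cdot,\dots,\cdot,z)$.

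The heart of the proof is then the algebraic identity, valid order by order in $\hbar$, that this generating series for $S$ satisfies the cubic Riccati equation. The mechanism is the global topological recursion \eqref{eq:gTR} itself: the recursion expresses $W_{g,n+1}$ as residues at ramification points of a combination of lower $W$'s weighted by $w^{z-\alpha}/E^{(k)}$, and this is precisely engineered so that the quadratic/cubic nonlinearities $\sum SSS$ and $\sum SS$ in \eqref{eq:Riccati-gen}, once assembled into generating series and integrated against the divisors, reproduce the residue-sum side. I would follow the strategy of \cite{BE} (their Lemma 5.14, which is exactly the cited result), namely: (a) introduce the ``wave function'' kernel and the associated quantities $P_m(x,y)$; (b) show that the integrated correlation functions satisfy a system of ``abstract loop equations''; and (c) convert these loop equations, via the change of variables dictated by the Newton polygon (the exponents $\lfloor\alpha_m\rfloor$ and the operators $D_i = \hbar\, x^{\lfloor\alpha_i\rfloor}/x^{\lfloor\alpha_{i-1}\rfloor}\, d/dx$), into the statement that $\psi$ is annihilated by the operator in \eqref{eq:quantization}. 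The role of the constants $C_k = \sum_i \nu_i \lim_{z\to\beta_i} P_{k+1}(x,y)/x^{\lfloor\alpha_{3-k}\rfloor+1}$ is to absorb the boundary contributions coming from the endpoints $\beta_i$ of the divisor; these are the genuinely new terms (the $\hbar C_1, \hbar C_2$ terms) compared with the naive ``de-quantization'' of $P(x,y)=0$, and tracking them correctly is where care is needed.

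The main obstacle I anticipate is precisely this bookkeeping of divisor endpoint contributions together with the normal-ordering of the differential operators. Two issues intertwine: first, because the $\beta_i$ are poles of $x(z)$ (not ordinary points), the integrals $\int^z_{\beta_i} W_{g,n}$ require regularization, and one must verify that the divisor condition $\sum_i\nu_i = 1$ makes the combination convergent and that differentiating past the singularity produces exactly the claimed $C_k$ terms and nothing more. Second, the operator in \eqref{eq:quantization} is written in a specific factored/normal-ordered form $D_1 D_2 (p_0/x^{\lfloor\alpha_3\rfloor}) D_3 + \cdots$; converting ``$\psi$ satisfies a scalar ODE with symbol $P(x,y)$ plus $\hbar$-corrections'' into this precise non-commutative factorization requires repeatedly commuting the $x$-power prefactors through the $d/dx$'s and checking that all the commutator terms either cancel or reassemble into the $C_k$ terms. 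My plan is to reduce everything to the two cases at hand, the $(1,4)$ and $(2,3)$ curves, where the Newton polygon is explicit, $\alpha_0,\dots,\alpha_3$ are small integers, and the general formula simplifies drastically; as the excerpt itself notes, it then suffices to carry out the computation for the $(1,4)$ curve, the $(2,3)$ case being entirely analogous. For the general statement, I would simply invoke \cite[Lemma 5.14]{BE} after checking that our admissibility hypotheses (no interior Newton-polygon point; smoothness at the origin if it lies on the curve) are met, which reduces the proof to verifying admissibility of the two specific curves rather than re-deriving the general quantization theorem.
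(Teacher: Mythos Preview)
The paper does not prove this theorem at all. The statement is explicitly attributed to \cite[Lemma 5.14]{BE} in its header, and in the body of the paper it is followed only by a remark (about extending the choice of $\beta_i$ to higher-order poles when $\beta\notin R^\ast$) and then by the applications to the $(1,4)$ and $(2,3)$ curves. There is no proof environment, no argument, nothing to compare against: the author is simply quoting a known result from Bouchard--Eynard and using it as a black box.

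Your proposal, by contrast, sketches an actual proof strategy (substitution into the Riccati equation, loop equations, tracking divisor-endpoint contributions into the $C_k$ terms), and then in the last paragraph arrives at what the paper actually does: invoke \cite[Lemma 5.14]{BE} after checking admissibility for the two specific curves. That final sentence \emph{is} the paper's entire ``proof.'' Everything preceding it in your proposal is a reasonable outline of how one might re-derive the Bouchard--Eynard result, but it is neither what the paper does nor what is being asked of you here. If you are writing up this section, the correct move is simply to state the theorem with the citation and move on; any independent proof you supply would be original content going well beyond the paper.
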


\begin{rem}
It is mentioned by \cite[Remark 5.12]{BE} that 
it is also possible to choose a pole of $x$ of order more than one as $\beta_i$ in \eqref{eq:D}
when $\beta \notin R^{\ast}$. 
Therefore, we can use \thmref{WKB-Wg,n-BE} in the case (1,4) and (2,3) curve in the next section. 
\end{rem}

%%%%%%%%%%%%%%%%%%%%%%%%%%%%%%%%%%%%%%%%%%%%%%%%%%%%%%%%
%%%%%%%%%%%%%%%%%%%%%%%%%%%%%%

\subsection{Quantum (1,4) curve}
\label{subsection:quantum-(1,4)}

Let us consider the (1,4) curve defined by 
\begin{equation}
\label{eq:(1,4)_P(x,y)}
	P(x, y) = 3 y^3 + 2t y^2 + x y - {\lambda_\infty} = 0, 
\end{equation}
with parameters $t, {\lambda_\infty} \ne 0$. 
A rational parameterization of this curve is 
\begin{equation}
\label{eq:(1,4)_parameterization}
\begin{cases}
	\displaystyle
	x = x(z) 
	= \frac{-3 z^3 - 2t z^2 + {\lambda_\infty}}{z} = -3 z^2 - 2t z + \frac{{\lambda_\infty}}{z}, \\[10pt]
	\displaystyle
	y = y(z) = z. 
\end{cases}
\end{equation}
%Note that $\infty \in R \setminus R^{\ast}$. 
First few terms of the correlation functions and free energies are computed as 
\begin{align*}
	W_{0, 3}(z_1, z_2, z_3) 
	&= \biggl\{ 
		\frac{2 z_1(15 {z_1}^5 - (9{z_2} + 9{z_3} - 4t) {z_1}^4 
				- (2t{z_2} + 2t{z_3} - 3{z_2}{z_3}) {z_1}^3 + {\lambda_\infty} {z_1}^2 
				- {\lambda_\infty} {z_2}{z_3})}
			{({z_1} - {z_2})^3 ({z_1} - {z_3})^3 (6 {z_1}^3 + 2t {z_1}^2 + {\lambda_\infty})^2} \\ 
	&\quad
		+ \frac{2 z_2(15 {z_2}^5 - (9{z_3} + 9{z_1} - 4t) {z_2}^4 
				- (2t{z_3} + 2t{z_1} - 3{z_3}{z_1}) {z_2}^3 + {\lambda_\infty} {z_2}^2 
				- {\lambda_\infty} {z_3}{z_1})}
			{({z_2} - {z_3})^3 ({z_2} - {z_1})^3 (6 {z_2}^3 + 2t {z_2}^2 + {\lambda_\infty})^2} \\ 
	&\quad
		+ \frac{2 z_3(15 {z_3}^5 - (9{z_1} + 9{z_2} - 4t) {z_3}^4 
				- (2t{z_1} + 2t{z_2} - 3{z_1}{z_2}) {z_3}^3 + {\lambda_\infty} {z_3}^2 
				- {\lambda_\infty} {z_1}{z_2})}
			{({z_3} - {z_1})^3 ({z_3} - {z_2})^3 (6 {z_3}^3 + 2t {z_3}^2 + {\lambda_\infty})^2}
		\biggl\} \\
	&\quad
		\times d{z_1} \, d{z_2} \, d{z_3}, \\
	W_{1, 1}(z)
	&= \frac{z^2 (27 z^6 - 99 z^3 - 36 {\lambda_\infty} t z^2 - 4 {\lambda_\infty} t^2 z + 3 {\lambda_\infty}^2)}
			{(6 z^3 + 2t z^2 + {\lambda_\infty})^4} \, dz,
\end{align*}
\begin{align*}	
	F_0(\lambda_\infty, t)
	&= - \frac{t^6}{972} + \frac{2 {\lambda_\infty} t^3}{27} - \frac{3 {\lambda_\infty}^2}{4} 
		+ \frac{{\lambda_\infty}^2}{4} \log{(-3 {\lambda_\infty}^2)}, \quad
	F_1(\lambda_\infty, t) = - \frac{1}{12} \log{\lambda_\infty}.
\end{align*}
\begin{rem}
	It seems $W_{0,3}$ has singularities at  $z_1 = z_2 = z_3$, 
	but we can verify that $W_{0,3}$ is holomorphic there. 
\end{rem}

We choose 
\begin{equation}
\label{eq:(1,4)_D}
\begin{split}
	D(z ; \nu) 
	&= [z] - (1 - \nu_\infty) [0] - \nu_\infty [\infty] \\
	&= (1 - \nu_\infty) ([z] - [0]) + \nu_\infty ([z] - [\infty]) 
\end{split}
\end{equation}
as the divisor for the quantization. 
%\begin{rem}
%	$z = \infty$ is a double pole of $x(z)$, but we can take $\beta = \infty$ as a base point 
%	because $\infty \notin R^{\ast}$. 
%\end{rem}
\begin{rem}
	$z = \infty$ is a double pole of $x(z)$, i.e., $\infty \in R$, but we can verify that $\infty \notin R^{\ast}$. 
	Therefore, we can choose $\beta = \infty$ as a base point. 
\end{rem}
Then, \thmref{WKB-Wg,n-BE} gives the quantum curve of the (1,4) curve (quantum (1,4) curve): 
\begin{equation}
\label{eq:(1,4)_eq(d/dx)} 
	\left\{ 3 \hbar^3 \frac{d^3}{dx^3} + 2t \hbar^2 \frac{d^2}{dx^2} + x \hbar \frac{d}{dx} 
			- \hat{\lambda}_\infty \right\} \psi = 0. 
\end{equation} 
Here we used the notation 
\begin{equation} \label{eq:lambda-hat-(1,4)}
	\hat{\lambda}_\infty = \lambda_\infty - \nu_{\infty} \hbar.
\end{equation}
Note that the special case $t = 0$
of the equation has been already constructed as 
a quantum curve in \cite[\S6.2.2]{BE}. 

Let $S_m(x, \lambda, \nu)$ be the coefficient of the Voros coefficient of \eqref{eq:(1,4)_eq(d/dx)}. Then $S_m(x, \lambda, \nu)$ satisfies the following lemma. 
\begin{lem}
\label{lem:(1,4)_Sn} 
For $m = 1, 2, \cdots$, we have
\begin{align} 
S_m(x, \lambda, \nu) = O(x^{-{2}}) 
\quad
(x \rightarrow \infty).
\end{align} 
\end{lem}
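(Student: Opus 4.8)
The plan is to analyze the recursive equations \eqref{eq:Riccati-gen-1}--\eqref{eq:Riccati-gen-3} for the specific coefficients of \eqref{eq:(1,4)_eq(d/dx)}, namely $p_{0,0} = 3$, $p_{1,0} = 2t$, $p_{2,0} = x$, $p_{3,0} = -\lambda_\infty$, $p_{2,1} = 0$, $p_{3,1} = -\nu_\infty$, and to track the order of vanishing of each $S_m$ at $x = \infty$. First I would solve \eqref{eq:Riccati-gen-1}, the characteristic equation $3 S_{-1}^3 + 2t S_{-1}^2 + x S_{-1} - \lambda_\infty = 0$. Reading off the large-$x$ behavior, one branch satisfies $S_{-1} \sim \lambda_\infty/x \to 0$ while the other two satisfy $S_{-1}^2 \sim -x/3$, so $S_{-1} \sim \pm\sqrt{-x/3} \to \infty$. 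The Voros coefficient is defined along a path connecting singular points; the relevant lift is to the sheet on which $S_{-1} = \lambda_\infty/x + O(x^{-2})$ vanishes at infinity (this is the branch that is regular there, consistent with the divisor choice $[z] - (1-\nu_\infty)[0] - \nu_\infty[\infty]$ and the fact that $\infty$ is a pole of $x(z)$ of order two). I would fix this branch and expand $S_{-1} = \lambda_\infty/x - 2t\lambda_\infty^2/x^3 + O(x^{-4})$, so in particular $S_{-1} = O(x^{-1})$.

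Next I would examine \eqref{eq:Riccati-gen-2} to get $S_0$. The coefficient of $S_0$ is $3 p_{0,0} S_{-1}^2 + 2 p_{1,0} S_{-1} + p_{2,0} = 9 S_{-1}^2 + 4t S_{-1} + x$; on our branch this behaves like $x + O(x^{-1})$, so it is $\sim x$. The inhomogeneous terms are $3 p_{0,0} S_{-1} S_{-1}' + p_{1,0} S_{-1}' + p_{2,1} S_{-1} + p_{3,1} = 9 S_{-1} S_{-1}' + 2t S_{-1}' - \nu_\infty$. Since $S_{-1} = O(x^{-1})$ and $S_{-1}' = O(x^{-2})$, the inhomogeneous part is $-\nu_\infty + O(x^{-3})$, hence $S_0 = -\nu_\infty/x + O(x^{-3}) = O(x^{-1})$. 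Now for the inductive step on \eqref{eq:Riccati-gen-3}: assume $S_k = O(x^{-1})$ for $-1 \le k \le m$ (with the slightly stronger $S_{-1} = O(x^{-1})$ as above). The coefficient of $S_{m+1}$ is again $9 S_{-1}^2 + 4t S_{-1} + x \sim x$. Every term on the left-hand side not involving $S_{m+1}$ is built from products $S_i S_j S_k$, $S_i S_j$, $S_m S_{-1}'$, derivatives $S_{m-1}''$, $S_m'$, and $p_{2,1} S_m = 0$: using $S_k = O(x^{-1})$ and that differentiation lowers the order by one, each such term is at least $O(x^{-2})$, in fact $O(x^{-2})$ or better. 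Dividing by the coefficient $\sim x$ yields $S_{m+1} = O(x^{-3})$, which is certainly $O(x^{-2})$ and a fortiori $O(x^{-1})$, closing the induction. Thus $S_m = O(x^{-2})$ for all $m \ge 1$, and in fact one sees $S_m = O(x^{-3})$ for $m \ge 1$ along this branch, but the stated $O(x^{-2})$ is all that is needed (and guarantees integrability of $S_m$ at $x = \infty$, which is exactly the hypothesis required to define the Voros coefficient in \eqref{eq:def-Voros-coeff}).

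The main obstacle I anticipate is \emph{bookkeeping the order estimates uniformly in $m$}: the sum $\sum_{i+j+k=m-1} S_i S_j S_k$ and $\sum_j S_{m-j-1} S_j$ range over indices including $i = -1$, and one must be careful that no term in the sum secretly has lower order than expected. The key point making this work is that on the chosen branch $S_{-1}$ itself vanishes to order $1$ at infinity (not order $-1$ or order $0$), so that \emph{every} factor $S_k$ with $k \ge -1$ contributes at least $O(x^{-1})$, and a product of $r$ such factors is $O(x^{-r})$ with $r \ge 2$ in every term occurring on the left of \eqref{eq:Riccati-gen-3}; the derivative terms only improve the estimate. A secondary point to check is that the leading coefficient $9 S_{-1}^2 + 4t S_{-1} + x$ genuinely grows like $x$ and does not vanish for large $x$, so that division by it does not degrade the estimate --- this is immediate from the branch expansion. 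I would present the argument as a single induction with the hypothesis ``$S_k = O(x^{-1})$ for all $-1 \le k \le m$'', state the base cases $S_{-1}, S_0$ explicitly from \eqref{eq:Riccati-gen-1}--\eqref{eq:Riccati-gen-2}, and then invoke \eqref{eq:Riccati-gen-3} with the order count above; the refinement from $O(x^{-1})$ to $O(x^{-2})$ for $m \ge 1$ follows by feeding the closed induction back once, observing that the inhomogeneous part is $O(x^{-2})$ for $m \ge 0$.
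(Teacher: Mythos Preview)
The paper states this lemma without proof (it appears at the end of \S\ref{subsection:quantum-(1,4)} and the text moves directly on to the $(2,3)$ curve), so there is no argument in the paper to compare against. Your approach---controlling the order of $S_m$ at $x=\infty$ by induction through the Riccati-type recursion \eqref{eq:Riccati-gen-1}--\eqref{eq:Riccati-gen-3}---is the natural one and is carried out correctly on the branch you single out, where $S_{-1}\sim\lambda_\infty/x$.

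Two small remarks. First, you restrict attention to the sheet corresponding to $z\to 0$ in the parametrization \eqref{eq:(1,4)_parameterization}, but the Voros path in the paper runs from $z=0$ to $z=\infty$, and at $z=\infty$ one is on a sheet with $S_{-1}=y(z)=z=O(x^{1/2})$. For the integral in \eqref{eq:def-Voros-coeff} to converge one needs decay of $S_m\,dx$ at \emph{both} endpoints. Your inductive scheme handles those sheets as well: there the leading coefficient $9S_{-1}^2+4tS_{-1}+x$ behaves like $-2x$, while the inhomogeneous terms (now containing $S_{-1}S_m'=O(x^{1/2})\cdot O(x^{-2})$ and $S_mS_{-1}'=O(x^{-1})\cdot O(x^{-1/2})$) are $O(x^{-3/2})$, giving $S_{m+1}=O(x^{-5/2})$. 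So this is a routine extension of your argument rather than a genuine gap. Second, a sign slip: since $p_3=-\hat\lambda_\infty=-\lambda_\infty+\nu_\infty\hbar$ one has $p_{3,1}=+\nu_\infty$, not $-\nu_\infty$; this does not affect any of the order estimates.
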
 

%%%%%%%%%%%%%%%%%%%%%%%%%%%%%%%%%%%%%%%%%%%%%%%%%%%%%%%%
%%%%%%%%%%%%%%%%%%%%%%%%%%%%%%

\subsection{Quantum (2,3) curve}
\label{subsection:quantum-(2,3)}

Let us consider the (2,3) curve defined by 
\begin{equation}
\label{eq:(2,3)_P(x,y)}
	P(x, y) = 4 y^3 - 2 x y^2 + 2 {\lambda_\infty} y - t = 0
\end{equation}
with parameters $t, {\lambda_\infty} \ne 0$. 
A rational parameterization of this curve is 
\begin{equation}
\label{eq:(2,3)_parameterization}
\begin{cases}
	\displaystyle
	x = x(z) 
	= \frac{4 z^3 + 2 {\lambda_\infty} z - t}{2 z^2} = 2 z + \frac{{\lambda_\infty}}{z} - \frac{t}{2 z^2} \\[10pt]
	\displaystyle
	y = y(z) = z. 
\end{cases}
\end{equation}
First few terms of the correlation functions and free energies are computed as 
\begin{align*} %チェック
	W_{0, 3}(z_1, z_2, z_3) 
	&= \biggl\{ 
		- \frac{{z_1}^2 (8 {z_1}^5 - 4({z_2} + {z_3}) {z_1}^4 - 2 {\lambda_\infty} {z_1}^3 
				+ t {z_1}^2 + (2 {\lambda_\infty} {z_2}{z_3} + t {z_2} + t {z_3}) {z_1} - 3t {z_2}{z_3}}
			{({z_1} - {z_2})^3 ({z_1} - {z_3})^3 (2 {z_1}^3 - {\lambda_\infty} {z_1} + t)^2} \\ 
	&\qquad
		- \frac{{z_2}^2 (8 {z_2}^5 - 4({z_3} + {z_1}) {z_2}^4 - 2 {\lambda_\infty} {z_2}^3 
				+ t {z_2}^2 + (2 {\lambda_\infty} {z_3}{z_1} + t {z_3} + t {z_1}) {z_2} - 3t {z_3}{z_1}}
			{({z_2} - {z_3})^3 ({z_2} - {z_1})^3 (2 {z_2}^3 - {\lambda_\infty} {z_2} + t)^2} \\ 
	&\qquad
		- \frac{{z_3}^2 (8 {z_3}^5 - 4({z_1} + {z_2}) {z_3}^4 - 2 {\lambda_\infty} {z_3}^3 
				+ t {z_3}^2 + (2 {\lambda_\infty} {z_1}{z_2} + t {z_1} + t {z_2}) {z_3} - 3t {z_1}{z_2}}
			{({z_3} - {z_1})^3 ({z_3} - {z_2})^3 (2 {z_3}^3 - {\lambda_\infty} {z_3} + t)^2}
		\biggl\} \\
	&\quad
		\times d{z_1} \, d{z_2} \, d{z_3}, \\
	W_{1, 1}(z)
	&= - \frac{(4 z^3 - t) (8 {\lambda_\infty} z^4 - 20t z^3 + 2 {\lambda_\infty} t z - t^2)}
			{8(2 z^3 - {\lambda_\infty} z + t)^4} \, dz,
\end{align*}
\begin{align*}	
	F_0(\lambda_\infty, t)
	= - \frac{{\lambda_\infty}^2}{4} \log{(-2 t)}, \quad
	F_1(\lambda_\infty, t)
	= - \frac{1}{8} \log{t}.
\end{align*}
\begin{rem}
	It seems $W_{0,3}$ has singularities at  $z_1 = z_2 = z_3$, 
	but we can verify that $W_{0,3}$ is holomorphic there. 
\end{rem}

We choose 
\begin{equation}
\label{eq:(2,3)_D}
\begin{split}
	D(z ; \nu) 
	&= [z] - (1 - \nu_\infty) [0] - \nu_\infty [\infty] \\
	&= (1 - \nu_\infty) ([z] - [0]) + \nu_\infty ([z] - [\infty]) 
\end{split}
\end{equation}
as the divisor for the quantization. 
\begin{rem}
	$z = 0$ is a double pole of $x(z)$, i.e., $0 \in R$, but we can verify that $0 \notin R^{\ast}$. 
	Therefore, we can choose $\beta = 0$ as a base point. 
\end{rem}
Then, \thmref{WKB-Wg,n-BE} gives the quantum curve of the (2,3) curve (quantum (2,3) curve): 
\begin{equation}
\label{eq:(2,3)_eq(d/dx)}
	\left\{ 4 \hbar^3 \frac{d^3}{dx^3} - 2 x \hbar^2 \frac{d^2}{dx^2} 
			+ 2 ( \hat{\lambda}_\infty - \hbar ) \hbar \frac{d}{dx} - t \right\} \psi = 0, 
\end{equation}
where 
\begin{equation} \label{eq:lambda-hat-(2,3)}
	\hat{\lambda}_\infty = \lambda_\infty - \nu_{\infty} \hbar.
\end{equation}

\begin{lem}
\label{lem:(1,4)_Sn} 
For $m = 1, 2, \cdots$, we have
\begin{align} 
S_m(x, \lambda, \nu) = O(x^{-{3/2}}) 
\quad
(x \rightarrow \infty).
\end{align} 
\end{lem}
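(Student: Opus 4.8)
The coefficients $S_m(x,\lambda,\nu)$ in the statement are the WKB coefficients generated by \eqref{eq:Riccati-gen-1}--\eqref{eq:Riccati-gen-3} for the equation under consideration, \eqref{eq:(2,3)_eq(d/dx)}, read off along the branch of $S_{-1}$ (the sheet meeting the base point $z=0$) that is used to form the Voros coefficient \eqref{eq:def-Voros-coeff}. For this equation one has $p_{0,0}=4$, $p_{1,0}=-2x$, $p_{2,0}=2\lambda_\infty$, $p_{2,1}=-2(\nu_\infty+1)$, $p_{3,0}=-t$ and $p_{3,1}=0$, so \eqref{eq:Riccati-gen-1} reads $4S_{-1}^3-2xS_{-1}^2+2\lambda_\infty S_{-1}-t=0$; i.e. $S_{-1}=y$ runs over a branch of the spectral curve \eqref{eq:(2,3)_P(x,y)}. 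The plan is to prove the estimate by induction on $m$ from \eqref{eq:Riccati-gen-3}, once I have fixed the behaviour as $x\to\infty$ of the three ingredients $S_{-1}$, $S_0$, and the factor $\Delta(x):=3p_{0,0}S_{-1}^2+2p_{1,0}S_{-1}+p_{2,0}=\partial_y P(x,S_{-1})$ that multiplies $S_{m+1}$ throughout \eqref{eq:Riccati-gen-3}.

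First I would pin down the branch and these ingredients. From the parameterization \eqref{eq:(2,3)_parameterization}, $x(z)\sim -t/(2z^2)$ as $z\to0$, so the preimages of $x=\infty$ meeting the base point satisfy $z\sim\pm\sqrt{-t/(2x)}$ and hence $S_{-1}=y=z=O(x^{-1/2})$, the expansion proceeding in half-integer powers of $x$. Substituting this into $\Delta(x)$, the middle term $2p_{1,0}S_{-1}=-4xS_{-1}$ dominates, giving $\Delta(x)\sim \mp 4\sqrt{-t/2}\,x^{1/2}$; thus $\Delta(x)=O(x^{1/2})$ with non-vanishing leading coefficient (here $t\neq0$ is used), so that $1/\Delta(x)=O(x^{-1/2})$. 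Next I would solve \eqref{eq:Riccati-gen-2} for $S_0$: its inhomogeneous part $12S_{-1}S_{-1}'-2xS_{-1}'-2(\nu_\infty+1)S_{-1}$ is $O(x^{-1/2})$, whence $S_0=O(x^{-1/2})/O(x^{1/2})=O(x^{-1})$. Since every $S_m$ lies in $\mathbb{C}(z)=\mathbb{C}(x(z),y(z))$, each $S_m(x)$ has a genuine Puiseux expansion at $x=\infty$, so these $O$-estimates may be differentiated termwise, i.e. $\tfrac{d}{dx}O(x^{-a})=O(x^{-a-1})$; I will use this repeatedly.

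With these in hand the induction runs as follows. For the base case, \eqref{eq:Riccati-gen-3} with $m=0$ gives $\Delta S_1=-[\,p_{1,0}S_0^2+p_{1,0}S_0'+p_{2,1}S_0+3p_{0,0}S_0S_{-1}'+3p_{0,0}S_{-1}S_0'+p_{0,0}S_{-1}''\,]$; the three terms carrying $S_0$ are each $O(x^{-1})$ (namely $-2x\cdot O(x^{-2})$ twice and $p_{2,1}S_0$), while the remaining three are $O(x^{-5/2})$, so the bracket is $O(x^{-1})$ and $S_1=O(x^{-1/2})\,O(x^{-1})=O(x^{-3/2})$. For the inductive step, assume $S_j=O(x^{-3/2})$ for $1\le j\le m$. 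The only terms of \eqref{eq:Riccati-gen-3} carrying the explicit factor $p_{1,0}=-2x$ are $p_{1,0}\sum_{a+b=m,\,a,b\ge0}S_aS_b$ and $p_{1,0}S_m'$; here the largest summand is $S_0S_m=O(x^{-5/2})$ and $S_m'=O(x^{-5/2})$, so after multiplication by $-2x$ both are $O(x^{-3/2})$, $p_{2,1}S_m=O(x^{-3/2})$, and every other term---the cubic terms $\sum_{i+j+k=m-1}S_iS_jS_k$, the bilinear terms (including those carrying a factor $S_{-1}$, such as $3p_{0,0}S_{-1}S_m'$ and $3p_{0,0}S_mS_{-1}'$), and $p_{0,0}S_{m-1}''$---is $O(x^{-2})$ or smaller. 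Hence the bracket is $O(x^{-3/2})$ and $S_{m+1}=O(x^{-1/2})\,O(x^{-3/2})=O(x^{-2})\subseteq O(x^{-3/2})$, which closes the induction.

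The delicate point, and the one I expect to cost the most care, is the balance of powers between the factor $1/\Delta(x)=O(x^{-1/2})$ coming from the double-pole branch and the explicit factor $p_{1,0}=-2x$ in the forcing: it is precisely this balance that makes the base case land on the half-integer exponent $-3/2$. The bulk of the work is therefore the uniform order-counting of all the terms of \eqref{eq:Riccati-gen-3}, verifying that no product of lower coefficients against $-2x$ ever exceeds $O(x^{-5/2})$ before division by $\Delta$, together with the justification---via the algebraic (Puiseux) nature of the $S_m$---that the estimates survive differentiation. A small but essential preliminary is to confirm that the leading coefficient of $\Delta(x)$ does not vanish so that $1/\Delta(x)=O(x^{-1/2})$ genuinely holds; this is exactly where the hypothesis $t\neq0$ is used.
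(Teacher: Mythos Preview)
The paper states this lemma (and its companion for the $(1,4)$ curve) without proof, so there is no argument in the text to compare against. Your induction on the Riccati recursion \eqref{eq:Riccati-gen-1}--\eqref{eq:Riccati-gen-3} is the natural approach and is carried out correctly: the identifications $S_{-1}=O(x^{-1/2})$, $\Delta(x)=\partial_y P(x,S_{-1})\sim \mp 4\sqrt{-t/2}\,x^{1/2}$ (nonvanishing precisely because $t\neq 0$), and $S_0=O(x^{-1})$ are right, and the bookkeeping in both the base case and the inductive step is accurate. The justification that the $O$-estimates survive $d/dx$ via the algebraicity of the $S_m$ (they lie in $\mathbb{C}(z)$, hence admit genuine Puiseux expansions at $x=\infty$) is exactly the point one needs to make explicit.

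One small addendum worth recording: you work on the sheets meeting the double pole $z=0$ of $x(z)$, which is where the half-integer exponent $-3/2$ arises and is indeed the delicate case. On the remaining sheet (through the simple pole $z=\infty$) one has $S_{-1}\sim x/2$ and $\Delta\sim x^2$, and the same induction yields integer-power decay of order at least $O(x^{-2})\subset O(x^{-3/2})$, so the statement holds there as well with no additional effort.
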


%%%%%%%%%%%%%%%%%%%%%%%%%%%%%%%%%%%%%%%%%%%%%%%%%%%%%%%%
%%%%%%%%%%%%%%%%%%%%%%%%%%%%%%%%%%%%%%%%%%%%%%%%%%%%%%%%
%%%%%%%%%%%%%%%%%%%%%%%%%%%%%%%%%%%%%%%%%%%%%%%%%%%%%%%%
%%%%%%%%%%%%%%%%%%%%%%%%%%%%%%%%%%%%%%%%%%%%%%%%%%%%%%%%

\section{Voros coefficients and the free energy}
\label{sec:Voros-vs-TR}

%%%%%%%%%%%%%%%%%%%%%%%%%%%%%%%%%%%%%%%%%%%%%%%%%%%%%%%%
%%%%%%%%%%%%%%%%%%%%%%%%%%%%%%%%%%%%%%%%%%%%%%%%%%%%%%%%
%%%%%%%%%%%%%%%%%%%%%%%%%%%%%%

\subsection{Relations between Voros coefficients and the free energy}
\label{subsec:Voros-vs-TR}

%%%%%%%%%%%%%%%%%%%%%%%%%%%%%%%%%%%%%%%%%%%%%%%%%%%%%%%%

In this subsection we formulate the main results which allow us to express the Voros coefficients 
of the quantum curves discussed in \S \ref{subsec:quantum-curve} by the free energy 
with a parameter shift. 

Let 
\begin{equation} 
\label{eq:total-free-energy}
	F({\lambda_{\infty}}, t; \hbar)
	= \sum_{g = 0}^{\infty} \hbar^{2g - 2} F_g({\lambda_{\infty}}, t)
\end{equation}
be the free energy for the spectral curve in \S \ref{subsec:quantum-curve}. 
Then, the precise statement is formulated as follows. 

\begin{thm}
\label{thm:main(i)}
%%%
\begin{equation} 
\label{eq:V-and-F-general}
	V({\lambda_{\infty}}, t, \nu_{\infty}; \hbar)
	= F(\hat{\lambda}_{\infty} + \hbar, t, \hbar) - F(\hat{\lambda}_{\infty}, t, \hbar) 
		- \frac{\partial F_0}{\partial \lambda_{\infty}} \hbar^{-1} 
		+ \frac{2 \nu_{\infty} - 1}{2} \frac{\partial^2 F_0}{\partial {\lambda_{\infty}}^2}. 
\end{equation}
Here $\hat{\lambda}_{\infty} = {\lambda_{\infty}} - \nu_{\infty} \hbar$ as we have introduced in \eqref{eq:lambda-hat-(1,4)}. 
%%%
\end{thm}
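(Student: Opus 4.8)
\textbf{Proof strategy for \thmref{main(i)}.} The plan is to compute both sides of \eqref{eq:V-and-F-general} as formal Laurent series in $\hbar$ and to match them term by term; I would carry this out for the $(1,4)$ curve, the $(2,3)$ case being parallel. First I would start from the quantization formula \eqref{eq:WKB-Wg,n} of \thmref{WKB-Wg,n-BE}, which expresses the WKB solution $\psi$ of \eqref{eq:(1,4)_eq(d/dx)} through the correlation functions integrated along the divisor $D(z;\nu)$ of \eqref{eq:(1,4)_D}. Writing $S(x,\hbar)=\partial_x\log\psi$ and reading off the coefficient of $\hbar^m$ for $m\ge 1$, one has $S_m=\partial_x\Xi_m$ with $\Xi_m(z)=\sum_{2g+n-2=m}\tfrac{1}{n!}\int_{D(z;\nu)}\cdots\int_{D(z;\nu)}W_{g,n}$. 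By \lemref{(1,4)_Sn} each $S_m$ ($m\ge 1$) is $O(x^{-2})$ at $x=\infty$, so it is integrable along the relative homology class defining the Voros coefficient, namely a path joining the two points $z=0$ and $z=\infty$ of the spectral curve lying over the irregular point $x=\infty$; hence $V=\sum_{m\ge 1}\hbar^m\bigl(\Xi_m(z=\infty)-\Xi_m(z=0)\bigr)$. Since $D(\infty;\nu)=(1-\nu_{\infty})\,\delta$ and $D(0;\nu)=-\nu_{\infty}\,\delta$ with $\delta:=[\infty]-[0]$, and since every $W_{g,n}$ with $(g,n)\ne(0,1)$ is holomorphic at the two poles $z=0,\infty$ of $x(z)$, evaluating the multiple integrals at the endpoints gives
\[
V(\lambda_{\infty},t,\nu_{\infty};\hbar)=\sum_{m\ge 1}\hbar^m\sum_{\substack{2g+n-2=m\\ n\ge 1}}\frac{(1-\nu_{\infty})^n-(-\nu_{\infty})^n}{n!}\,\mathcal{I}[W_{g,n}],\qquad \mathcal{I}[W_{g,n}]:=\int_{\delta}\!\cdots\!\int_{\delta}W_{g,n},
\]
the $n$ integrations being one per variable.

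Next I would expand the right-hand side of \eqref{eq:V-and-F-general}. Using $\hat\lambda_{\infty}+\hbar=\lambda_{\infty}+(1-\nu_{\infty})\hbar$ and Taylor-expanding each $F_g$ about $\lambda_{\infty}$ gives
\[
F(\hat\lambda_{\infty}+\hbar,t,\hbar)-F(\hat\lambda_{\infty},t,\hbar)=\sum_{m}\hbar^m\sum_{\substack{2g+n-2=m\\ n\ge 1}}\frac{(1-\nu_{\infty})^n-(-\nu_{\infty})^n}{n!}\,\frac{\partial^n F_g}{\partial\lambda_{\infty}^n},
\]
where the outer sum a priori includes $m=-1$ (the term $(g,n)=(0,1)$, equal to $\hbar^{-1}\partial_{\lambda_{\infty}}F_0$) and $m=0$ (the term $(g,n)=(0,2)$, equal to $\tfrac{1-2\nu_{\infty}}{2}\partial^2_{\lambda_{\infty}}F_0$); these are exactly cancelled by the correction terms $-\hbar^{-1}\partial_{\lambda_{\infty}}F_0+\tfrac{2\nu_{\infty}-1}{2}\partial^2_{\lambda_{\infty}}F_0$ in \eqref{eq:V-and-F-general}. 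Thus the right-hand side of \eqref{eq:V-and-F-general} is the same double sum restricted to $m\ge 1$, and comparison with the formula for $V$ above reduces the theorem to the single identity $\mathcal{I}[W_{g,n}]=\partial^n_{\lambda_{\infty}}F_g$ for all $(g,n)$ with $n\ge 1$ and $2g+n\ge 3$.

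Finally I would prove this identity by iterating the variational formulas of \thmref{VariationFormula}. For the family \eqref{eq:(1,4)_parameterization} with parameter $\varepsilon=\lambda_{\infty}$ one has $\partial_{\lambda_{\infty}}x=1/z$ and $\partial_{\lambda_{\infty}}y=0$, so $\Omega_{\lambda_{\infty}}(z)=-dz/z=\int_{0}^{\infty}B(z,\zeta)$; hence one may take $\Lambda_{\lambda_{\infty}}\equiv 1$ and $\gamma$ the path from $0$ to $\infty$. After verifying the holomorphicity hypotheses of \thmref{VariationFormula} for this family (in particular the condition imposed at the double pole $z=\infty$ of $x(z)$, which lies in $R$ but is ineffective), part (ii) gives $\partial_{\lambda_{\infty}}F_g=\int_{\delta}W_{g,1}$ for $g\ge 1$ and part (i) gives $\delta_{\lambda_{\infty}}W_{g,n}=\int_{\delta}W_{g,n+1}$. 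Since the vector field generating the $\lambda_{\infty}$-flow vanishes at the two poles $z=0,\infty$ of $x(z)$, where all $W_{g,n}$ with $(g,n)\ne(0,1)$ are holomorphic, $\delta_{\lambda_{\infty}}$ commutes with $\int_{\delta}$, and iterating yields $\mathcal{I}[W_{g,n}]=\partial^n_{\lambda_{\infty}}F_g$ for $g\ge 1$. For $g=0$ the base case $\mathcal{I}[W_{0,3}]=\partial^3_{\lambda_{\infty}}F_0$ follows from the third-order special-geometry relation for $F_0$ in \cite{EO} (equivalently, it can be checked directly from the explicit forms above, which give $\partial^3_{\lambda_{\infty}}F_0=1/\lambda_{\infty}$), and iterating part (i) gives $\mathcal{I}[W_{0,n}]=\partial^n_{\lambda_{\infty}}F_0$ for $n\ge 3$, completing the argument.

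I expect the main obstacle to be the careful verification of the hypotheses of \thmref{VariationFormula} for this family --- especially the extra condition at the higher-order pole of $x(z)$, which must be imposed because the topological recursion has been extended to treat such poles as ramification points --- together with the justification that $\delta_{\lambda_{\infty}}$ commutes with integration along $\delta$ (a boundary-term computation at $z=0,\infty$) and the handling of the $g=0$ base case.
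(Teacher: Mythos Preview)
Your proposal is correct and follows essentially the same route as the paper: express the Voros coefficient via \thmref{WKB-Wg,n-BE} as a sum of multiple integrals of $W_{g,n}$ along $[\infty]-[0]$ using the divisor evaluations $D(\infty;\nu)=(1-\nu_\infty)\delta$, $D(0;\nu)=-\nu_\infty\delta$, identify these integrals with $\partial_{\lambda_\infty}^n F_g$ through the variational formula of \thmref{VariationFormula} applied with $\Omega(z)=-dz/z=\int_0^\infty B(z,\zeta)$ (checking $(g,n)=(0,3)$ directly), and then recognize the resulting sum as the Taylor expansion of $F(\hat\lambda_\infty+\hbar)-F(\hat\lambda_\infty)$ minus the two correction terms. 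The only difference is cosmetic---you frame it as matching both sides of \eqref{eq:V-and-F-general} and reducing to the key identity $\mathcal I[W_{g,n}]=\partial_{\lambda_\infty}^n F_g$, whereas the paper computes $V$ forward and arrives at the right-hand side---and your added remarks on verifying the hypotheses of \thmref{VariationFormula} at the ineffective pole $z=\infty$ and on commuting $\delta_{\lambda_\infty}$ with $\int_\delta$ are sound elaborations of points the paper leaves implicit.
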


%\begin{rem}
%%%
We can prove \thmref{main(i)} similarly to the case of the Weber equation because the proof of \thmref{main(i)} does not depend on $t$. 
%%%
%\end{rem}

To prove \thmref{main(i)}, we need the following identity. 

\begin{lem}
\label{lem:variation}
%%%
\begin{equation}
\label{eq:variation}
	\frac{\partial^n}{\partial{\lambda_{\infty}}^n} F_g
	= \int_{\zeta_1 = 0}^{\zeta_1=\infty}\cdots \int_{\zeta_n = 0}^{\zeta_n=\infty}
		W_{g, n}(\zeta_1, \cdots, \zeta_n)\qquad (2g + n \geq 3).
\end{equation}
%%%
\end{lem}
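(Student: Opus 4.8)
The plan is to prove \lemref{variation} by repeatedly applying the variational formula of \thmref{VariationFormula}, with the parameter $\varepsilon$ taken to be $\lambda_\infty$ and the variation path $\gamma$ identified with the contour from $0$ to $\infty$ appearing in the divisor $D(z;\nu)$ of \eqref{eq:(1,4)_D}. The first step is to verify the hypotheses of \thmref{VariationFormula} for the family $(x_{\lambda_\infty}(z), y_{\lambda_\infty}(z))$ given by \eqref{eq:(1,4)_parameterization} with $t$ held fixed: namely that $x,y$ depend holomorphically on $\lambda_\infty$, that (A1)--(A4) hold for $\lambda_\infty$ in a suitable domain, that the number of ramification points is constant, and the three technical bullet conditions (holomorphy of $\partial x/\partial\lambda_\infty$, $\partial y/\partial\lambda_\infty$ and nonvanishing of $dy$ at zeros of $dx$; the holomorphy condition at the double pole $z=\infty$; and the existence of $\gamma$ and $\Lambda_{\lambda_\infty}$ with $\Omega_{\lambda_\infty}(z) = \int_{\zeta\in\gamma}\Lambda_{\lambda_\infty}(\zeta)B(z,\zeta)$).

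The key computation is to identify $\Omega_{\lambda_\infty}(z)$ and exhibit the Cauchy-kernel representation. From \eqref{eq:(1,4)_parameterization} we have $y(z)=z$ independent of $\lambda_\infty$, so $\partial y_{\lambda_\infty}/\partial\lambda_\infty \equiv 0$ and $\partial x_{\lambda_\infty}/\partial\lambda_\infty = 1/z$; hence by \eqref{eq:Omega}
\begin{equation}
\Omega_{\lambda_\infty}(z) = -\frac{1}{z}\,dy(z) = -\frac{dz}{z}.
\end{equation}
Then I would check that $-dz/z = \int_{\zeta\in\gamma} \Lambda_{\lambda_\infty}(\zeta)\,B(z,\zeta)$ with $\gamma$ the path from $0$ to $\infty$ and $\Lambda_{\lambda_\infty}(\zeta)\equiv -1$ (constant), using $\int_0^\infty B(z,\zeta) = \int_0^\infty \frac{dz\,d\zeta}{(z-\zeta)^2} = \left[\frac{-dz}{z-\zeta}\right]_{\zeta=0}^{\zeta=\infty} = \frac{dz}{z}$, so that $-\int_0^\infty B(z,\zeta) = -dz/z = \Omega_{\lambda_\infty}(z)$, as required. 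With this data, \thmref{VariationFormula}(ii) gives $\partial F_g/\partial\lambda_\infty = \int_{\zeta=0}^{\zeta=\infty} W_{g,1}(\zeta)$ for $g\geq 1$, and part (i) gives $\delta_{\lambda_\infty} W_{g,n}(z_1,\dots,z_n) = \int_{\zeta=0}^{\zeta=\infty} W_{g,n+1}(z_1,\dots,z_n,\zeta)$ whenever $2g+n\geq 2$. Iterating part (i) $n-1$ times starting from $\partial F_g/\partial\lambda_\infty = \int_0^\infty W_{g,1}$ and noting that the differentiation $\delta_{\lambda_\infty}$ with fixed $x$ agrees with $\partial/\partial\lambda_\infty$ once the integral over the divisor is taken (since the endpoints of $\gamma$ lie in the divisor), one obtains $\partial^n F_g/\partial\lambda_\infty^n = \int_0^\infty\cdots\int_0^\infty W_{g,n}$ for $2g+n\geq 3$; the case $g=0$ is included because $2g+n\geq 3$ forces $n\geq 3$, so only part (i) with $2g+n\geq 2$ (here $n\geq 2$) is invoked, together with the separately-defined variation of $F_0$.

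The main obstacle I anticipate is twofold. First, matching the variation path $\gamma$ in \thmref{VariationFormula} with the divisor contour from $0$ to $\infty$ requires care because $z=\infty$ is a double pole of $x(z)$ (i.e. $\infty\in R$ but $\infty\notin R^\ast$): one must check that the second bullet condition in \thmref{VariationFormula} genuinely holds at this ramification point, i.e. that $\Omega_{\lambda_\infty}(z)B(z_1,z)B(z_2,z)/(dy_{\lambda_\infty}(z)\,dx_{\lambda_\infty}(z))$ is holomorphic there, and that the integrals $\int_0^\infty W_{g,n+1}(\cdots,\zeta)$ converge (this uses that $W_{g,n+1}$ is holomorphic at ineffective ramification points, together with the decay analogous to \lemref{(1,4)_Sn}). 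Second, there is a bookkeeping subtlety in commuting $\partial/\partial\lambda_\infty$ with the divisor integrals: one must confirm that $\delta_{\lambda_\infty}$ (differentiation with $x$ fixed) applied to $\int_{D(z;\nu)}\cdots\int_{D(z;\nu)} W_{g,n}$ reproduces $\partial/\partial\lambda_\infty$ of the same expression, because the endpoints $0$ and $\infty$ of the divisor are $\lambda_\infty$-independent points and the only $\lambda_\infty$-dependence through the upper limit $z$ is killed by the ``fixed $x$'' prescription. Once these two points are settled, the rest is the routine induction sketched above.
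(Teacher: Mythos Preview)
Your approach is essentially the same as the paper's: compute $\Omega_{\lambda_\infty}(z)=-dz/z$, recognize it as $\int_{0}^{\infty}B(z,\zeta)$, and then iterate \thmref{VariationFormula}. Two small points deserve correction. First, there is a sign slip in your antiderivative: $\frac{d}{d\zeta}(z-\zeta)^{-1}=(z-\zeta)^{-2}$, so in fact $\int_{0}^{\infty}B(z,\zeta)=-dz/z$ and hence $\Lambda_{\lambda_\infty}\equiv 1$, not $-1$; this is exactly what the paper writes, and it is what makes your subsequent formula $\partial F_g/\partial\lambda_\infty=\int_0^\infty W_{g,1}$ come out with the correct sign. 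Second, your handling of $g=0$ is too vague: part~(ii) of \thmref{VariationFormula} is stated only for $g\ge 1$, so ``the separately-defined variation of $F_0$'' does not by itself supply a base case. The paper closes this gap by checking the identity for $(g,n)=(0,3)$ directly from the explicit expressions of $F_0$ and $W_{0,3}$ given in \S\ref{subsection:quantum-(1,4)}, and only then iterates part~(i) to reach all $n\ge 3$; you should do the same.
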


\begin{proof}[Proof of Lemma \ref{lem:variation}]
%%%
Because
\begin{equation}
	\Omega(z) 
	= \frac{\partial y(z)}{\partial {\lambda_{\infty}}} \cdot dx(z)
		- \frac{\partial x(z)}{\partial {\lambda_{\infty}}} \cdot dy(z)
	= - \frac{dz}{z}
	= \int^{\zeta = \infty}_{\zeta = 0} B(z, \zeta)
\end{equation}
holds, Theorem \ref{thm:VariationFormula} %(and \eqref{eq:iterative-variation}) 
gives \eqref{eq:variation}, except for the case $g=0$. 
By using the expressions of $W_{0,3}$ and $F_0$, 
we can verify \eqref{eq:variation} holds for for $(g,n) = (0,3)$ directly. 
Therefore, thanks to Theorem \ref{thm:VariationFormula}, 
we can conclude that \eqref{eq:variation} is also valid for $g=0$ and $n \ge 3$. 
This completes the proof. 
%%%
\end{proof}

\begin{proof}[Proof of Theorem \ref{thm:main(i)}]
%%%
By Theorem \ref{thm:WKB-Wg,n-BE}, the Voros coefficient can be rewritten as
\begin{align}
	V({\lambda_{\infty}}, t, \nu_{\infty}; \hbar) 
	&= \sum_{m = 1}^{\infty} \hbar^m \int_0^\infty 
		\Bigl( S(x(z), {\lambda_{\infty}}, t, \nu_{\infty}; \hbar) - \hbar^{-1} S_{-1}(x(z), {\lambda_{\infty}}, t) 
			- S_0(x(z), {\lambda_{\infty}}, t, \nu_{\infty}) 
		\Bigr) \frac{dx}{dz} \, dz \\ 
	&= \sum_{m = 1}^{\infty} \hbar^m \int_0^\infty 
		\left\{ \sum_{\substack{2g + n - 2 = m \\ g \geq 0, \, n \geq 1}} 
				\frac{1}{n!} \frac{d}{dz} \int_{\zeta_1 \in D(z; \underline{\nu})}
				\cdots \int_{\zeta_n \in D(z; \underline{\nu})} W_{g, n}(\zeta_1, \ldots, \zeta_n) 
		\right\} dz \notag \\
	&= \sum_{m = 1}^{\infty} \hbar^m 
		\sum_{\substack{2g + n - 2 = m \\ g \geq 0, \, n \geq 1}} \frac{1}{n!} 
			\left( \int_{\zeta_1 \in D(\infty; \underline{\nu})} \cdots 
					\int_{\zeta_n \in D(\infty; \underline{\nu})} \right. \notag \\
	&\qquad\qquad\qquad\qquad\qquad\qquad \left. 
					- \int_{\zeta_1 \in D(0; \underline{\nu})} \cdots 
					\int_{\zeta_n \in D(0; \underline{\nu})} 
			\right) W_{g, n}(\zeta_1, \ldots, \zeta_n). \notag
\end{align}
Because
\begin{equation}
	D(\infty; \underline{\nu}) = (1 - \nu_{\infty}) ([\infty] - [0]) 
	\quad\text{and}\quad 
	D(0; \underline{\nu}) = - \nu_{\infty} ([\infty] - [0]), 
\end{equation}
we have
\begin{equation}
	V({\lambda_{\infty}}, t, \nu_{\infty}; \hbar) 
	= \sum_{m = 1}^{\infty} \hbar^m \sum_{\substack{2g + n - 2 = m \\ g \geq 0, \, n \geq 1}} 
		\frac{(1 - \nu_{\infty})^n - (- \nu_{\infty})^n}{n!} \int_0^\infty \cdots \int_0^\infty 
				W_{g, n}(\zeta_1, \ldots, \zeta_n). 
\end{equation}
Now we use Lemma \ref{lem:variation}:
\begin{align}
	V({\lambda_{\infty}}, t, \nu_{\infty}; \hbar) 
	&= \sum_{m = 1}^{\infty} \hbar^m \sum_{\substack{2g + n - 2 = m \\ g \geq 0, \, n \geq 1}} 
		\frac{(1 - \nu_{\infty})^n - (- \nu_{\infty})^n}{n!} \frac{ \partial^n F_g }{ \partial {\lambda_{\infty}}^n } \\
	&= \sum_{n = 1}^{\infty} \frac{(1 - \nu_{\infty})^n - (- \nu_{\infty})^n}{n!} 
		\hbar^n \frac{ \partial^n F({\lambda_{\infty}}, t; \hbar) }{ \partial {\lambda_{\infty}}^n } 
			- \frac{(1 - \nu_{\infty}) - (- \nu_{\infty})}{\hbar}\frac{\partial F_0}{\partial {\lambda_{\infty}}} 
		\notag\\
	&\qquad 
			- \frac{(1 - \nu_{\infty})^2 - (- \nu_{\infty})^2}{2!} \frac{\partial^2 F_0}{\partial{\lambda_{\infty}}^2} 
		\notag\\
	&= F \left({\lambda_{\infty}} - \nu_{\infty} \hbar + \hbar, t; \hbar \right) 
		- F \left({\lambda_{\infty}} - \nu_{\infty} \hbar, t; \hbar \right) 
		- \frac{\partial F_0}{\partial {\lambda_{\infty}}} \hbar^{-1}
		+ \frac{2 \nu_{\infty} - 1}{2} \frac{\partial^2 F_0}{\partial {\lambda_{\infty}}^2}. \notag 
\end{align}
%%%
\end{proof}

\begin{rem} \label{rem:regularization}
%%%
In the definition (\ref{eq:def-Voros-coeff}) of the Voros coefficient, we subtracted the first two terms 
$\hbar^{-1}S_{-1}$ and $S_0$ because these terms are singular at endpoints of the path 
$\gamma$. 
However, a regularization procedure of divergent integral (see \cite{Voros-zeta} for example) 
allows us to define the regularized Voros coefficient as follows:
\begin{equation} 
	V_{\rm reg}({\lambda_{\infty}}, t, \nu_{\infty}; \hbar) 
	:= \hbar^{-1}V_{-1}({\lambda_{\infty}}, t, \nu_{\infty}) + V_0({\lambda_{\infty}}, t, \nu_{\infty}) 
		+ V({\lambda_{\infty}}, t, \nu_{\infty}; \hbar), 
\end{equation}
where $V_{-1}({\lambda_{\infty}}, t, \nu_{\infty})$ and $V_0({\lambda_{\infty}}, t, \nu_{\infty})$ are obtained by solving 
\begin{equation} 
\label{eq:zeta-regularization-equation}
	\frac{\partial^2}{\partial {\lambda_{\infty}}^2} V_{-1} 
	= \int_{\gamma} \frac{\partial^2}{\partial {\lambda_{\infty}}^2} S_{-1}(x) \, dx, \quad 
%%%
	\frac{\partial}{\partial {\lambda_{\infty}}} V_{0}
 	= \int_{\gamma} \frac{\partial}{\partial {\lambda_{\infty}}} S_{0}(x) \, dx.
\end{equation}
Actually, we can verify that $\partial_{{\lambda_{\infty}}}^2 S_{-1}(x) dx$ and 
$\partial_{{\lambda_{\infty}}}S_0(x) dx$ are holomorphic at $x=\infty$ although  
$S_{-1}$ and $S_0$ are singular there.
Hence, the equations \eqref{eq:zeta-regularization-equation} make sense
and we can find $V_{-1}$ and $V_{0}$. For example, in the case of the (1,4) quantum curve, 
we obtain 
\begin{align}
	\frac{\partial^2}{\partial {\lambda_{\infty}}^2} V_{-1} 
	= \frac{1}{{\lambda_{\infty}}}, \qquad 
%%%
	\frac{\partial}{\partial {\lambda_{\infty}}} V_{0} 
	= - \frac{2 \nu_{\infty} - 1}{2 {\lambda_{\infty}}}. 
\end{align}
Actually, we can verify that the regularized integrals are realized by the correction terms 
\begin{equation}
	V_{-1} = \frac{\partial F_0}{\partial {\lambda_{\infty}}}, \qquad
%%%
	V_{0} = - \frac{2 \nu_{\infty} - 1}{2} \frac{\partial^2 F_0}{\partial {\lambda_{\infty}}^2} 
\end{equation}
in the right hand-side of the relation \eqref{eq:V-and-F-general}.
Thus we conclude that the regularized Voros coefficient satisfies 
\begin{equation} 
\label{eq:Vreg-and-free-energy}
	V_{\rm reg}({\lambda_{\infty}}, t, \nu_{\infty}; \hbar) 
	= F \left({\lambda_{\infty}} - \nu_{\infty} \hbar + \hbar, t; \hbar \right) 
		- F \left({\lambda_{\infty}} - \nu_{\infty} \hbar, t; \hbar \right). 
\end{equation}
%%%
\end{rem}

%{\color{red}
%Rem.
%\begin{itemize}
%	\item 証明に t が関係しない(Weber のときと同様の議論で証明できる)
%\end{itemize}
%}

%%%%%%%%%%%%%%%%%%%%%%%%%%%%%%%%%%%%%%%%%%%%%%%%%%%%%%%%
%%%%%%%%%%%%%%%%%%%%%%%%%%%%%%%%%%%%%%%%%%%%%%%%%%%%%%%%
%%%%%%%%%%%%%%%%%%%%%%%%%%%%%%

\subsection{Three-term difference equations satisfied by the free energy} 
\label{subsec:Three-term_difference-eq.} 

%%%%%%%%%%%%%%%%%%%%%%%%%%%%%%%%%%%%%%%%%%%%%%%%%%%%%%%%

In this subsection, we derive the three-term difference equation which the generating function of the free energies satisfies. The precise statement is formulated as follows. 

\begin{thm}
\label{thm:main(ii)}
%%%
The free energy \eqref{eq:total-free-energy} satisfies the following difference equation.
\begin{equation}
\label{eq:free-energy_difference-eq.}
	F({\lambda_{\infty}} + \hbar, t; \hbar) - 2 F({\lambda_{\infty}}, t; \hbar) + F({\lambda_{\infty}} - \hbar, t; \hbar) 
	= \frac{\partial^2 F_0}{\partial {\lambda_{\infty}}^2}. 
\end{equation}
%%%
\end{thm}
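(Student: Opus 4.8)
The plan is to derive \eqref{eq:free-energy_difference-eq.} as an immediate corollary of \thmref{main(i)}. The key observation is that the combination appearing on the left-hand side of \eqref{eq:free-energy_difference-eq.} is, up to the explicit correction terms, symmetric under a natural reflection of the parameter $\nu_\infty$, while the Voros coefficient $V$ is not. More precisely, I would first apply \thmref{main(i)} twice: once as written, and once with $\nu_\infty$ replaced by $1 - \nu_\infty$. Under this substitution one has $\hat{\lambda}_\infty = \lambda_\infty - \nu_\infty \hbar \mapsto \lambda_\infty - (1-\nu_\infty)\hbar$, so that $\hat{\lambda}_\infty + \hbar \mapsto \lambda_\infty - \nu_\infty \hbar + \hbar - \hbar + \hbar$... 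Let me be careful: writing $\tilde{\lambda} := \lambda_\infty - (1-\nu_\infty)\hbar = \hat\lambda_\infty + \hbar$, the two arguments $F(\tilde\lambda + \hbar,t;\hbar)$ and $F(\tilde\lambda,t;\hbar)$ become $F(\hat\lambda_\infty + 2\hbar, t;\hbar)$ and $F(\hat\lambda_\infty+\hbar,t;\hbar)$. This shift by one unit is exactly what is needed to telescope against the original relation.

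The cleanest route is actually to set $\mu := \hat\lambda_\infty + \hbar = \lambda_\infty - (\nu_\infty - 1)\hbar$ and rewrite \thmref{main(i)} as
\begin{equation}
\label{eq:proof-m2-rewrite}
F(\mu + \hbar, t;\hbar) - F(\mu, t;\hbar)
= V(\lambda_\infty, t, \nu_\infty;\hbar) + \frac{\partial F_0}{\partial\lambda_\infty}\hbar^{-1} - \frac{2\nu_\infty - 1}{2}\frac{\partial^2 F_0}{\partial\lambda_\infty^2},
\end{equation}
which expresses the ``forward difference'' of $F$ at an arbitrary point $\mu$. The strategy is then to evaluate the left-hand side of \eqref{eq:free-energy_difference-eq.} by applying this forward-difference identity at $\mu = \lambda_\infty$ and at $\mu = \lambda_\infty - \hbar$, and subtract. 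The key point that makes everything collapse is that the Voros coefficient $V(\lambda_\infty, t, \nu_\infty;\hbar)$, as an integral $\sum_{m\ge 1}\hbar^m \int_{\gamma_{0,\infty}} S_m\,dx$ along the path from $0$ to $\infty$, depends only on $\hat\lambda_\infty = \lambda_\infty - \nu_\infty\hbar$ (since the equation \eqref{eq:(1,4)_eq(d/dx)} depends on $\lambda_\infty$ and $\nu_\infty$ only through $\hat\lambda_\infty$). Thus replacing $(\lambda_\infty, \nu_\infty)$ by $(\lambda_\infty - \hbar, \nu_\infty - 1)$ leaves $V$ unchanged, while it shifts $\mu$ down by $\hbar$ and shifts the $\nu_\infty$-dependent correction term by the right amount. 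Subtracting the two instances of \eqref{eq:proof-m2-rewrite}, the $V$-terms cancel exactly, and the remaining terms are $\partial F_0/\partial\lambda_\infty \cdot \hbar^{-1}$ (appearing with the same sign, hence cancelling) together with the difference of the $\nu_\infty$-dependent terms, which contributes
\begin{equation}
\label{eq:proof-m2-remainder}
-\frac{2\nu_\infty - 1}{2}\frac{\partial^2 F_0}{\partial\lambda_\infty^2} + \frac{2(\nu_\infty - 1) - 1}{2}\frac{\partial^2 F_0}{\partial\lambda_\infty^2} = -\frac{\partial^2 F_0}{\partial\lambda_\infty^2}.
\end{equation}
Hmm, this gives $-\partial^2 F_0/\partial\lambda_\infty^2$ rather than $+\partial^2 F_0/\partial\lambda_\infty^2$; the sign will be fixed by keeping careful track of which difference (forward minus backward) produces the second difference on the left-hand side of \eqref{eq:free-energy_difference-eq.}, and I would reconcile the bookkeeping at that stage.

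The main obstacle — really the only non-formal point — is verifying the invariance claim: that $V(\lambda_\infty, t, \nu_\infty;\hbar) = V(\lambda_\infty - \hbar, t, \nu_\infty - 1;\hbar)$, i.e., that the Voros coefficient depends on the parameters only through $\hat\lambda_\infty$. This follows because the differential operator in \eqref{eq:(1,4)_eq(d/dx)} involves the parameters only via $\hat\lambda_\infty = \lambda_\infty - \nu_\infty\hbar$, so the entire WKB solution, its logarithmic derivative $S(x,\hbar)$, and hence every coefficient $S_m(x)$ is a function of $\hat\lambda_\infty$ alone; the integration path $\gamma_{0,\infty}$ connecting the two singular points is also unaffected. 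Granting this, the proof is pure algebra. An alternative, perhaps slightly more transparent, formulation avoids shifting $\nu_\infty$ altogether: apply \thmref{main(i)} with the special value $\nu_\infty = 1/2$ so that the last correction term vanishes and $\hat\lambda_\infty = \lambda_\infty - \hbar/2$, obtaining $F(\lambda_\infty + \hbar/2, t;\hbar) - F(\lambda_\infty - \hbar/2, t;\hbar) = V + \partial_{\lambda_\infty}F_0 \cdot \hbar^{-1}$; then differencing this symmetric relation at $\lambda_\infty \pm \hbar/2$ and using that $V$ evaluated with argument $\lambda_\infty \mapsto \lambda_\infty \pm \hbar/2$ still sees the same $\hat\lambda_\infty$ only if one simultaneously adjusts... on reflection the first route is cleaner, and I would present that one, deferring to \remref{regularization} for the interpretation of the $F_0$-correction terms. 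I would also remark that, exactly as with \thmref{main(i)}, this argument is insensitive to $t$ and so applies verbatim to the $(2,3)$ curve.
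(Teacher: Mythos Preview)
Your overall architecture --- apply \thmref{main(i)} at two parameter values and subtract so that the free-energy terms telescope into the second difference --- is exactly the paper's strategy. The gap is in the step you yourself flag as ``the only non-formal point'': the claim that $V(\lambda_\infty, t, \nu_\infty;\hbar)$ depends on $(\lambda_\infty,\nu_\infty)$ only through $\hat\lambda_\infty$ is false.

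It is true that the operator in \eqref{eq:(1,4)_eq(d/dx)}, and hence the full formal series $S(x,\hbar)$, depends only on $\hat\lambda_\infty$. But the inference ``hence every coefficient $S_m(x)$ is a function of $\hat\lambda_\infty$ alone'' does not follow: the $\hbar$-expansion \eqref{eq:Riccati-gen-expansion} is taken with $\lambda_\infty$ and $\nu_\infty$ held fixed (so that $p_{3,0}=-\lambda_\infty$, $p_{3,1}=\nu_\infty$), and the individual $S_m$ then depend on $\lambda_\infty$ and $\nu_\infty$ separately. In particular $S_{-1}$ depends on $\lambda_\infty$ alone, and since $V=\int_\gamma(S-\hbar^{-1}S_{-1}-S_0)\,dx$ subtracts precisely these two terms, the $\hat\lambda_\infty$-invariance of $S$ is broken. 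Concretely, \eqref{eq:(1,4)_Voros(concrete-form)} gives the $\hbar^1$-coefficient $V_1 = B_2(\nu_\infty)/(2\lambda_\infty)=(\nu_\infty^{2}-\nu_\infty+\tfrac{1}{6})/(2\lambda_\infty)$; if $V$ were a function of $\hat\lambda_\infty$ alone then $V_1$ would have to be independent of $\nu_\infty$. Your proposed identity $V(\lambda_\infty,t,\nu_\infty;\hbar)=V(\lambda_\infty-\hbar,t,\nu_\infty-1;\hbar)$ already fails at order $\hbar^2$ for $\nu_\infty=1$.

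What the paper supplies in place of this invariance is \lemref{Voros-parameter}, which at $\nu_\infty=0$ yields the single identity $V(\lambda_\infty,t,0;\hbar)=V(\lambda_\infty,t,1;\hbar)$. In view of \eqref{eq:(1,4)_Voros(concrete-form)} this amounts to $B_{m+1}(0)=B_{m+1}(1)$ for $m\ge 1$, a genuine (if elementary) fact rather than a tautology. With that identity in hand, applying \thmref{main(i)} at $\nu_\infty=0$ and $\nu_\infty=1$ and subtracting gives \eqref{eq:free-energy_difference-eq.} exactly along the lines you sketched; so your proof can be repaired by replacing the invariance claim with \lemref{Voros-parameter}.
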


We will only give the proof for the (quantum) (1,4) curve because the result for the (quantum) (2,3) curve is proved in a similar manner. 

To prove Theorem \ref{thm:main(ii)}, we need the following identity.

\begin{lem}
\label{lem:Voros-parameter}
%%%
\begin{equation}
\label{eq:Voros-difference}
	V({\lambda_{\infty}}, t, - \nu_{\infty}, \hbar) - V({\lambda_{\infty}}, t, 1 - \nu_{\infty}, \hbar) 
		= - \log{ \left( 1 - \frac{\nu_{\infty} \hbar}{{\lambda_{\infty}}} \right) }.
\end{equation}
%%%
\end{lem}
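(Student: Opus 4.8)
\emph{Strategy.} The plan is to prove \eqref{eq:Voros-difference} directly from the exact WKB analysis of \eqref{eq:(1,4)_eq(d/dx)}, keeping the argument independent of \thmref{main(ii)} (which will be deduced from \thmref{main(i)} together with the present lemma, so that a proof of the lemma via \thmref{main(ii)} would be circular). First I would record the contiguity relation for the operator in \eqref{eq:(1,4)_eq(d/dx)}: writing $\mathcal{L}_{\mu}:=3\hbar^{3}\frac{d^{3}}{dx^{3}}+2t\hbar^{2}\frac{d^{2}}{dx^{2}}+x\hbar\frac{d}{dx}-\mu$, a one-line computation gives the intertwining identity $\frac{d}{dx}\mathcal{L}_{\mu}=\mathcal{L}_{\mu-\hbar}\frac{d}{dx}$. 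Hence, if $\psi$ is a WKB solution of \eqref{eq:(1,4)_eq(d/dx)} with parameter $\hat{\lambda}_{\infty}=\lambda_{\infty}-\nu_{\infty}\hbar$, then $\psi'$ is a WKB solution of the same equation with $\hat{\lambda}_{\infty}$ replaced by $\hat{\lambda}_{\infty}-\hbar$, i.e.\ with $\nu_{\infty}$ replaced by $\nu_{\infty}+1$ and $\lambda_{\infty}$ kept fixed; moreover $\psi'$ corresponds to the same sheet of the covering as $\psi$, since $\psi'=S(x,\hbar)\,\psi$ has the same leading exponential. As $-\nu_{\infty}$ and $1-\nu_{\infty}$ differ by $1$, it then suffices to prove the corresponding unit-shift identity relating $V(\cdot,\cdot,\nu_{\infty};\hbar)$ and $V(\cdot,\cdot,\nu_{\infty}+1;\hbar)$ and to substitute $\nu_{\infty}\mapsto-\nu_{\infty}$ to recover \eqref{eq:Voros-difference}.

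\emph{Reduction to a boundary term.} Next I would translate the contiguity relation into a statement about the WKB symbol. Writing $S(x,\hbar)$ for the symbol attached to $\nu_{\infty}$, the one attached to $\nu_{\infty}+1$ is $S(x,\hbar)+\frac{d}{dx}\log S(x,\hbar)$; comparing powers of $\hbar$ shows that $S_{-1}$ is unchanged (as it must be, the leading equation \eqref{eq:Riccati-gen-1} not involving $\nu_{\infty}$) and that the $\hbar^{0}$-terms satisfy $S_{0}(x,\lambda_{\infty},\nu_{\infty}+1)-S_{0}(x,\lambda_{\infty},\nu_{\infty})=\frac{d}{dx}\log S_{-1}(x)$, in agreement with \eqref{eq:Riccati-gen-2}. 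Substituting these into \eqref{eq:def-Voros-coeff}, the subtracted terms $\hbar^{-1}S_{-1}$ cancel and the difference of Voros coefficients collapses to the integral of an exact form,
\[
V(\lambda_{\infty},t,\nu_{\infty};\hbar)-V(\lambda_{\infty},t,\nu_{\infty}+1;\hbar)
=\int_{\gamma}\frac{d}{dx}\log\frac{S_{-1}(x)}{S(x,\hbar)}\,dx,
\]
where $\gamma$ is the contour of \eqref{eq:def-Voros-coeff}. Since $S_{-1}/S=\hbar\,(1+O(\hbar))$, the primitive $\log\bigl(S_{-1}(x)/S(x,\hbar)\bigr)$ is a well-defined formal power series in $\hbar$ along $\gamma$, and the left-hand side equals the difference of its boundary values at the two endpoints of $\gamma$.

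\emph{Endpoint evaluation.} It remains to compute those boundary values, and this is where the real work lies. For the $(1,4)$ curve the only singular point of \eqref{eq:(1,4)_eq(d/dx)} is $x=\infty$, whose preimages under \eqref{eq:(1,4)_parameterization} are the simple pole $z=0$ and the double pole $z=\infty$ of $x(z)$, and these are the endpoints of the lifted contour. On the sheet ending at $z=\infty$ one has $S_{-1}=z\to\infty$; using the integrability estimates $S_{m}(x)=O(x^{-2})$ $(m\ge1)$ together with $S_{0}(x)=O(x^{-1})$ on that sheet, one finds $S(x,\hbar)/S_{-1}(x)\to\hbar^{-1}$ coefficientwise in $\hbar$, so the boundary value of the primitive there is $\log\hbar$. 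On the sheet ending at $z=0$ one has $S_{-1}=z\sim\lambda_{\infty}/x$, while a leading-order balance in the Riccati equation \eqref{eq:Riccati-gen} gives $S(x,\hbar)\sim\hat{\lambda}_{\infty}/(\hbar x)$, hence $S(x,\hbar)/S_{-1}(x)\to\hat{\lambda}_{\infty}/(\hbar\lambda_{\infty})$ and the boundary value there is $\log\hbar-\log(\hat{\lambda}_{\infty}/\lambda_{\infty})$. Taking the difference, $\log\hbar$ cancels and one is left with $\pm\log(\hat{\lambda}_{\infty}/\lambda_{\infty})=\pm\log(1-\nu_{\infty}\hbar/\lambda_{\infty})$; carrying out the substitution $\nu_{\infty}\mapsto-\nu_{\infty}$ then yields \eqref{eq:Voros-difference}. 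The main obstacle is precisely this endpoint analysis: one must identify the relative homology class of $\gamma$ (equivalently, which sheet of the three-sheeted cover abuts $z=0$ and which abuts $z=\infty$, and with which orientation — this fixes the overall sign), and one must justify that $S_{m}(x)/S_{-1}(x)$ really does converge, and to $0$, at each endpoint for every $m\ge1$, not merely that the leading term behaves as claimed; this is where uniformity in the $\hbar$-grading of the estimates $S_{m}=O(x^{-2})$ enters. A less self-contained alternative would be to quote the explicit closed form of the $(1,4)$ Voros coefficient already known in the literature (e.g.\ \cite{AIT, IKo}) and to verify \eqref{eq:Voros-difference} by a direct series computation.
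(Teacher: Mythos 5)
Your overall strategy is sound, and since the paper states Lemma \ref{lem:Voros-parameter} without any proof, it is in fact the natural route: the contiguity relation $\frac{d}{dx}\mathcal{L}_{\mu}=\mathcal{L}_{\mu-\hbar}\frac{d}{dx}$ does hold, it does give $S^{(\nu_\infty+1)}=S^{(\nu_\infty)}+\frac{d}{dx}\log S^{(\nu_\infty)}$ for the same branch of $S_{-1}$, the difference of Voros coefficients does collapse to boundary values of $\log\bigl(S_{-1}/S\bigr)$ at the two points over $x=\infty$, and the endpoint data you invoke are correct (on the sheet $z\to 0$ one has $S_0\sim-\nu_\infty/x$ from $S_0=-\frac{9S_{-1}S_{-1}'+2tS_{-1}'+\nu_\infty}{9S_{-1}^2+4tS_{-1}+x}$, and $S_m=O(x^{-2})$ for $m\ge 1$ takes care of the higher terms; the orientation is not really an open obstacle, being fixed by the convention $\int_0^{\infty}$ in $z$ already used in the proof of \thmref{main(i)}).

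The genuine gap is the last step. Completing your own computation with that orientation gives, for the unit shift,
\begin{equation*}
V(\lambda_\infty,t,\nu_\infty;\hbar)-V(\lambda_\infty,t,\nu_\infty+1;\hbar)
=\log\Bigl(1-\frac{\nu_\infty\hbar}{\lambda_\infty}\Bigr),
\qquad\text{hence}\qquad
V(\lambda_\infty,t,-\nu_\infty;\hbar)-V(\lambda_\infty,t,1-\nu_\infty;\hbar)
=\log\Bigl(1+\frac{\nu_\infty\hbar}{\lambda_\infty}\Bigr),
\end{equation*}
and no choice of the overall sign you left as ($\pm$) turns $\log(1+u)$ into the claimed $-\log(1-u)$; they differ from order $\hbar^{2}$ on. So the closing assertion that the substitution $\nu_\infty\mapsto-\nu_\infty$ ``yields \eqref{eq:Voros-difference}'' does not follow, and the undetermined sign cannot absorb the discrepancy. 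What your argument actually proves is a corrected form of the lemma, and independent checks show that the printed right-hand side is the one at fault: substituting \thmref{main(i)} into \eqref{eq:Voros-difference} as printed would force $F(\lambda'+\hbar)-2F(\lambda')+F(\lambda'-\hbar)$, $\lambda'=\lambda_\infty+\nu_\infty\hbar$, to depend on $\lambda_\infty$ and $\nu_\infty$ separately rather than through $\lambda'$ alone, while the explicit formula \eqref{eq:(1,4)_Voros(concrete-form)} of \thmref{main(iv)} together with $B_{m+1}(X+1)-B_{m+1}(X)=(m+1)X^{m}$ gives exactly $\log(1+\nu_\infty\hbar/\lambda_\infty)$ for the left-hand side of \eqref{eq:Voros-difference}. (Only the case $\nu_\infty=0$, i.e.\ $V(\lambda_\infty,t,0;\hbar)=V(\lambda_\infty,t,1;\hbar)$, is used to prove \thmref{main(ii)}, and that case is unaffected.) You should therefore either state and prove the corrected identity, or explain explicitly why the printed statement needs adjustment; in addition, the endpoint estimates you defer (that $\hbar^{m+1}S_m/S_{-1}\to 0$ at both endpoints for every $m\ge1$, and that $S_0^{(\nu_\infty+1)}-S_0^{(\nu_\infty)}=\frac{d}{dx}\log S_{-1}$, which follows from $(9S_{-1}^2+4tS_{-1}+x)S_{-1}'+S_{-1}=0$) should be written out rather than only flagged.
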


\begin{proof}[Proof of Theorem \ref{thm:main(ii)}]
%%%
From \lemref{Voros-parameter}, 
\begin{equation}
\label{eq:main:tmpeq}
	V({\lambda_{\infty}}, t, 0, \hbar) = V({\lambda_{\infty}}, t, 1, \hbar).
\end{equation}
It follows from \thmref{main(i)} that
\begin{align}
	V({\lambda_{\infty}}, t, 0, \hbar)
		&= F \left({\lambda_{\infty}} + \hbar, t; \hbar \right) - F \left({\lambda_{\infty}}, t; \hbar \right)
			- \frac{\partial F_0}{\partial {\lambda_{\infty}}} \hbar^{-1}
			- \frac{1}{2} \frac{\partial^2 F_0}{\partial {\lambda_{\infty}}^2}, \\
	V({\lambda_{\infty}}, t, 1, \hbar) 
		&= F \left({\lambda_{\infty}}, t; \hbar \right) - F \left({\lambda_{\infty}} - \hbar, t; \hbar \right)
			- \frac{\partial F_0}{\partial {\lambda_{\infty}}} \hbar^{-1}
			+ \frac{1}{2} \frac{\partial^2 F_0}{\partial {\lambda_{\infty}}^2}.
\end{align}
By substituting these two relations into \eqref{eq:main:tmpeq}, we obtain \thmref{main(ii)}.
%%%
\end{proof}

%%%%%%%%%%%%%%%%%%%%%%%%%%%%%%%%%%%%%%%%%%%%%%%%%%%%%%%%
%%%%%%%%%%%%%%%%%%%%%%%%%%%%%%%%%%%%%%%%%%%%%%%%%%%%%%%%
%%%%%%%%%%%%%%%%%%%%%%%%%%%%%%

\subsection{The explicit form of the free energy}
\label{subsec:FreeEnergy} 

%%%%%%%%%%%%%%%%%%%%%%%%%%%%%%%%%%%%%%%%%%%%%%%%%%%%%%%%

We obtain explicit formulas for the coefficients of the free energy and Voros coefficients. In this subsection we provide the explicit expressions for the free energy. We will only give the proof for the (quantum) (1,4) curve because the result for the (quantum) (2,3) curve is proved in a similar manner. 

\begin{thm}
\label{thm:main(iii)}
%%%
For $g \geq 2$, the $g$-th free energy of 
the spectral curve $(C)$
has the following expression.
\begin{itemize}
	\item[$\bullet$] For (1,4) curve (\S \ref{subsection:quantum-(1,4)}):
\end{itemize} \vspace{-1.3em}
\begin{equation}
\label{eq:(1,4)_Fg(concrete-form)}
	F_g({\lambda_{\infty}}, t) = \frac{B_{2g}}{2g(2g - 2)} \dfrac{1}{{{\lambda_\infty}}^{2g-2}} \quad (g \geq 2), 
\end{equation}
where $\{B_n\}_{n \geq 0}$ designates the Bernoulli number defined by 
\begin{equation}
\label{def:Bernoulli}
	\frac{w}{e^w - 1} = \sum_{n = 0}^{\infty} B_n \frac{w^n}{n!}.
\end{equation}
($F_0$ and $F_1$ for (1,4) curve are given in \S \ref{subsection:quantum-(1,4)}.) 
\begin{itemize}
	\item[$\bullet$] For (2,3) curve (\S \ref{subsection:quantum-(2,3)}):
\end{itemize} \vspace{-1.3em}
\begin{equation}
\label{eq:(2,3)_Fg(concrete-form)}
	F_g({\lambda_{\infty}}, t) = 0 \quad (g \geq 2).
\end{equation}
($F_0$ and $F_1$ for (2,3) curve are given in \S \ref{subsection:quantum-(2,3)}.) 
%%%
\end{thm}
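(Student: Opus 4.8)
The plan is to deduce \thmref{main(iii)} from the three–term difference equation of \thmref{main(ii)} together with the explicit formulas for $F_0$ and $F_1$ recorded in \S\ref{subsection:quantum-(1,4)} and \S\ref{subsection:quantum-(2,3)}. First I would rewrite \eqref{eq:free-energy_difference-eq.} as an operator identity: regarding $F(\lambda_\infty,t;\hbar)=\sum_{g\ge 0}\hbar^{2g-2}F_g(\lambda_\infty,t)$ as a formal series, the left–hand side equals $\bigl(e^{\hbar\partial_{\lambda_\infty}}-2+e^{-\hbar\partial_{\lambda_\infty}}\bigr)F=\bigl(2\sinh(\tfrac{\hbar}{2}\partial_{\lambda_\infty})\bigr)^2F$, so \thmref{main(ii)} becomes
\begin{equation}
\Bigl(2\sinh\bigl(\tfrac{\hbar}{2}\partial_{\lambda_\infty}\bigr)\Bigr)^{2}F(\lambda_\infty,t;\hbar)=\partial_{\lambda_\infty}^{2}F_0(\lambda_\infty,t).
\end{equation}

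Next I would invert this operator formally. Using the Bernoulli generating function \eqref{def:Bernoulli} one checks the classical identity $\dfrac{(w/2)^2}{\sinh^{2}(w/2)}=\sum_{k\ge 0}\dfrac{(1-2k)B_{2k}}{(2k)!}\,w^{2k}$, hence $\bigl(2\sinh(\tfrac{\hbar}{2}\partial)\bigr)^{-2}=\sum_{k\ge 0}\dfrac{(1-2k)B_{2k}}{(2k)!}\hbar^{2k-2}\partial^{2k-2}$ as a formal differential operator. Applying it to $\partial_{\lambda_\infty}^2F_0$ and matching powers of $\hbar$ shows that any solution of the difference equation satisfies $F_g=\dfrac{(1-2g)B_{2g}}{(2g)!}\partial_{\lambda_\infty}^{2g}F_0+(\text{affine function of }\lambda_\infty)$ for $g\ge 2$, the affine ambiguity being exactly the kernel of $\bigl(2\sinh(\tfrac{\hbar}{2}\partial)\bigr)^{2}$ on $\hbar$–series. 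I would then substitute the explicit $F_0$. For the (1,4) curve, $F_0$ is a polynomial in $\lambda_\infty$ of degree $\le 2$ plus $\tfrac{\lambda_\infty^2}{4}\log(-3\lambda_\infty^2)$; since $\partial_{\lambda_\infty}^{2g}$ kills the polynomial part and $\partial_{\lambda_\infty}^{n}[\lambda_\infty^{2}\log\lambda_\infty]=(-1)^{n-1}2(n-3)!\,\lambda_\infty^{2-n}$ for $n\ge 3$, one gets $\partial_{\lambda_\infty}^{2g}F_0=-(2g-3)!\,\lambda_\infty^{2-2g}$, and with $(2g)!=2g(2g-1)(2g-2)(2g-3)!$ this simplifies to $\dfrac{(1-2g)B_{2g}}{(2g)!}\partial_{\lambda_\infty}^{2g}F_0=\dfrac{B_{2g}}{2g(2g-2)}\lambda_\infty^{2-2g}$, the asserted formula \eqref{eq:(1,4)_Fg(concrete-form)} up to the affine term. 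For the (2,3) curve, $F_0=-\tfrac{\lambda_\infty^2}{4}\log(-2t)$ gives $\partial_{\lambda_\infty}^{3}F_0=0$, so the candidate is $F_g=0$ for $g\ge 2$.

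The remaining, and essential, step is to show that the genuine $F_g$ — defined by the residue formula \eqref{def:Fg2} — carries neither a term linear in $\lambda_\infty$ nor a $\lambda_\infty$–independent term. The linear term I would remove via \lemref{variation} with $n=1$: $\partial_{\lambda_\infty}F_g=\int_0^{\infty}W_{g,1}$, and a direct estimate (the curve flattens out as $\lambda_\infty\to\infty$, and $W_{g,1}$ decays accordingly, exactly as for $W_{1,1}$) shows this tends to $0$, whereas the $\lambda_\infty$–derivative of the candidate is $O(\lambda_\infty^{1-2g})$; since the coefficient of $\lambda_\infty$ is constant in $\lambda_\infty$, it must vanish. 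For the constant term I would invoke the weighted homogeneity of the spectral curve — under $(\lambda_\infty,t,z)\mapsto(c^{3}\lambda_\infty,ct,cz)$ one has $x\mapsto c^{2}x$, $y\mapsto cy$ for the (1,4) curve (and $(c^{2}\lambda_\infty,c^{3}t,cz)$, $x\mapsto cx$, $y\mapsto cy$ for the (2,3) curve) — noting that because $W_{g,1}$ is residue–free at every ramification point the $\log$–anomaly coming from the primitive $\Phi$ drops out of \eqref{def:Fg2}, so for $g\ge 2$ the quantity $F_g$ is an honest homogeneous rational function of strictly negative weight, hence vanishes as $\lambda_\infty\to\infty$; this forces the constant (equivalently, the whole affine part, including any $t$–dependence) to be zero. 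An alternative to this homogeneity argument is to specialize to $t=0$, where the (1,4) curve is the quantum curve of \cite[\S6.2.2]{BE} and the free energies are the known Weber–type values $\frac{B_{2g}}{2g(2g-2)}\lambda_\infty^{2-2g}$, and then argue that the $t$–dependent remainder is affine in $\lambda_\infty$ and vanishes at $t=0$. I expect Step 4 — eliminating the affine ambiguity — to be the main obstacle: the difference equation fixes $F_g$ only up to an affine function of $\lambda_\infty$, and closing this gap genuinely requires returning to the definition of $F_g$ (its variational/integral representation or the homogeneity of the residue formula); Steps 1–3 are just Bernoulli–number bookkeeping. The (2,3) curve is treated identically, with $F_0=-\tfrac{\lambda_\infty^2}{4}\log(-2t)$ making the candidate vanish and the same two arguments forcing the affine remainder to vanish as well.
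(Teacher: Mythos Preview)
Your Steps 1--3 coincide with the paper's argument: the paper writes the difference operator as $e^{-\hbar\partial_{\lambda_\infty}}(e^{\hbar\partial_{\lambda_\infty}}-1)^2$ and inverts it via the identity $e^{-w}(e^w-1)^2\bigl\{w^{-2}-\sum_{n\ge 0}\frac{B_{n+2}}{n+2}\frac{w^n}{n!}\bigr\}=1$, which is your $\sinh$ formula in exponential dress, arriving at the same candidate $\hat F_g$.

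The treatments diverge only at Step~4. The paper kills the affine remainder by first establishing $\partial_t F_g=0$ for $g\ge 1$ (\lemref{t-dependence}), proved through the variational formula $\partial_t F_g=-\Res_{z=\infty}z^2\, W_{g,1}(z)$ (\lemref{variation-t}) together with a parity argument that feeds the WKB--TR correspondence of \thmref{WKB-Wg,n-BE} into this residue (\lemref{variation-t()}); once $G_g=F_g-\hat F_g$ is known to be both affine in $\lambda_\infty$ and independent of $t$, homogeneity immediately forces $G_g=0$. Your route tries to bypass this lemma, but the implication ``homogeneous of strictly negative weight $\Rightarrow$ vanishes as $\lambda_\infty\to\infty$'' is not valid as stated: with the $(1,4)$ weights $(\lambda_\infty,t)\mapsto(c^3\lambda_\infty,ct)$, the function $t^{6-6g}$ is homogeneous of the correct negative degree yet constant in $\lambda_\infty$, so homogeneity alone cannot rule out a $t$-dependent constant term. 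What actually closes the gap is regularity of $F_g$ at $t=0$ --- i.e.\ your ``alternative'' via the Weber specialization --- and that is in effect a proxy for the paper's $t$-independence lemma. So your plan is sound, and your alternative for Step~4 works; but your primary homogeneity-plus-decay argument needs that extra regularity input, which the paper's \lemref{t-dependence} supplies directly.
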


%We will only give the proof of \thmref{main(iii)} for the (quantum) (1,4) curve. 

To prove \thmref{main(iii)}, we need the following lemma. 

\begin{lem}
\label{lem:t-dependence}
%%%
\begin{equation}
\label{eq:t-dependence}
	\frac{ \partial F_g }{ \partial t} = 0 \qquad (g \geq 1).
\end{equation}
%%%
\end{lem}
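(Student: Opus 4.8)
The plan is to apply the Eynard--Orantin variational formula (\thmref{VariationFormula}) with respect to the parameter $t$, exactly in the way \lemref{variation} was obtained for $\lambda_\infty$. For the (1,4) curve \eqref{eq:(1,4)_parameterization} we have $\partial x/\partial t=-2z$ and $\partial y/\partial t=0$, so the one-form \eqref{eq:Omega} is
\[
\Omega^{(t)}(z)=\frac{\partial y}{\partial t}\,dx(z)-\frac{\partial x}{\partial t}\,dy(z)=2z\,dz .
\]
Contrary to $\Omega^{(\lambda_\infty)}=-dz/z$, which has simple poles at the poles $z=0,\infty$ of $x(z)$ and was written as $\int_0^\infty B$, this $\Omega^{(t)}$ is a second-kind differential whose only singularity is a triple pole (with vanishing residue) at $z=\infty$. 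I would therefore represent it as $\Omega^{(t)}(z)=\int_{\gamma}\Lambda(\zeta)\,B(z,\zeta)$ with $\gamma$ a small loop encircling $z=\infty$ and $\Lambda(\zeta)=\zeta^2/(2\pi i)$, which is holomorphic near $\gamma$, and then verify the hypotheses of \thmref{VariationFormula}: at the finite ramification points $\partial x/\partial t$ and $\partial y/\partial t$ are holomorphic and $dy=dz$ does not vanish, while at the double pole $z=\infty$ the differential $\Omega^{(t)}(z)\,B(z_1,z)\,B(z_2,z)/(dy(z)\,dx(z))$ is holomorphic — a one-line check using $x(z)\sim -3z^2$ there.

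Granting this, \thmref{VariationFormula}(ii) yields, for $g\geq 1$,
\[
\frac{\partial F_g}{\partial t}=\int_{\gamma}\Lambda(z)\,W_{g,1}(z)=\frac{1}{2\pi i}\oint_{\gamma}z^2\,W_{g,1}(z).
\]
For $g\geq 1$ the correlation function $W_{g,1}$ is holomorphic off the finite ramification points $R^{\ast}$ (the zeros of $6z^3+2tz^2+\lambda_\infty$), and since $\infty\in R\setminus R^{\ast}$ it is holomorphic at $z=\infty$ as well. Hence the loop $\gamma$ encloses no pole of $z^2W_{g,1}(z)$ other than the one created at $z=\infty$ by the factor $z^2$, and the residue theorem gives $\partial F_g/\partial t=\Res_{z=\infty}[z^2\,W_{g,1}(z)]$, which is exactly the coefficient of $z^{-3}\,dz$ in the expansion of $W_{g,1}$ at $z=\infty$. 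So the lemma comes down to showing that this coefficient vanishes for all $g\geq 1$, equivalently that $W_{g,1}(z)=O(z^{-4})$ as $z\to\infty$. For $W_{1,1}$ this is visible directly from the explicit formula in \S\ref{subsection:quantum-(1,4)} (the numerator grows like $z^{8}$ and the denominator like $z^{12}$), and in general I would prove it by induction on $2g+n$ from the topological recursion \eqref{eq:gTR}: choosing the base point $\alpha$ in \eqref{eq:gTR} at a finite point makes every summand $\Res_{z=r}\{\cdots\}$ manifestly $O(z_0^{-2})$ near $z_0=\infty$, because $w^{z-\alpha}(z_0)=\int_{\alpha}^{z}B(z_0,\cdot\,)$ is, and the two extra orders of vanishing are forced by the pole orders at $z=\infty$ of $x(z)$ and of $y(z)-y(\widetilde z)$ (with $\widetilde z$ the conjugate sheet), propagated through the recursion. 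Alternatively the same $O(z^{-4})$ bound can be extracted from the estimate $S_m(x)=O(x^{-2})$ for the WKB coefficients of the quantum (1,4) curve, via the expression of the $S_m$ through the $W_{g,n}$ integrated over $D(z;\nu)$ used in the proof of \thmref{main(i)}.

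The step I expect to be the main obstacle is precisely this last decay estimate: the topological recursion renders only the holomorphy of $W_{g,1}$ at the ineffective ramification point $z=\infty$ transparent, and squeezing out the additional $O(z^{-4})$ behavior requires careful pole-order bookkeeping there. (For $g\geq 2$ one could instead note that the symplectic transformation $(x,y)\mapsto(x+2ty,\,y)$ sends the (1,4) curve with parameter $t$ to the one with parameter $0$ and invoke the symplectic invariance of the $F_g$; but the variational argument above is uniform in $g\geq1$.) Once the estimate is available, $\partial F_g/\partial t=0$ for $g\geq1$ follows, and the (quantum) (2,3) curve is handled in the same fashion, with $z=\infty$ replaced by the ineffective ramification point $z=0$: there $\partial x/\partial t=-1/(2z^2)$ gives $\Omega^{(t)}(z)=dz/(2z^2)$, whose only pole sits at $z=0$, and the counterpart of the final estimate is that $W_{g,1}$ vanishes at $z=0$ for $g\geq2$.
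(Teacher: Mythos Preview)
Your reduction to $\partial F_g/\partial t=\pm\Res_{z=\infty}z^{2}W_{g,1}(z)$ via \thmref{VariationFormula} is exactly the paper's \lemref{variation-t}. Where you and the paper part ways is in how that residue is shown to vanish. You aim for a pointwise bound $W_{g,1}=O(z^{-4})$ by induction on the recursion (or, as a fallback, through $S_m=O(x^{-2})$), and you rightly flag this as the crux; note however that the recursion kernel $w^{z-\alpha}(z_0)$ alone yields only $O(z_0^{-2})$, so your ``two extra orders'' really do need an argument, and $S_m=O(x^{-2})$ by itself gives $z^{2}S_m\,dx=O(z^{-1})\,dz$, which is not yet residue-free. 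The paper does \emph{not} establish any such decay. Instead it specialises \thmref{WKB-Wg,n-BE} to $\nu_\infty=0$, so that $S_m(x(z))\,dx(z)$ coincides with the sum over $(g,n)$ with $2g+n-2=m$ of $\frac{1}{(n-1)!}\int_\infty^{z}\!\cdots\!\int_\infty^{z}W_{g,n}$, then multiplies by $z^{2}$ and takes $\Res_{z=\infty}$ of both sides. On the $S$-side a direct Riccati computation (\lemref{variation-t()}, relation \eqref{eq:variation-t()_1}) shows the residue is supported only at $\hbar^{-1}$ and $\hbar^{0}$. On the $W$-side, holomorphy of $W_{g,n}$ at $\infty$ makes the $n\ge 3$ pieces $O(z^{-(n-1)})\,dz$ after the multiplication and hence residue-free, while the $n=2$ pieces can survive but sit at \emph{even} powers $\hbar^{2g}$ (relation \eqref{eq:variation-t()_3}). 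Matching the odd $\hbar$-degrees then kills $\Res_{z=\infty}z^{2}W_{g,1}$ for every $g\ge 1$, with no need to isolate the $z^{-4}$ behaviour. In short, the paper trades your analytic estimate for an $\hbar$-parity argument that leans on the full quantum-curve identity; your parenthetical symplectic-invariance observation $(x,y)\mapsto(x+2ty,y)$ is, by contrast, a clean independent route for $g\ge 2$, leaving only $g=1$ where the explicit $F_1=-\tfrac{1}{12}\log\lambda_\infty$ already settles matters.
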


Lemma \ref{lem:t-dependence} is obtained from 

\begin{lem}
\label{lem:variation-t}
%%%
For the (1,4) equation 
\begin{equation}
\label{eq:variation-t}
	\frac{ \partial F_{g} }{ \partial t} 
	= - \Res_{z = \infty} z^2 \, W_{g,1}(z) 
\end{equation}
holds. 
%%%
\end{lem}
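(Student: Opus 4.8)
The plan is to apply the variational formula of Theorem \ref{thm:VariationFormula} with the parameter $\varepsilon = t$, mirroring the computation used for the $\lambda_\infty$-derivative in the proof of Lemma \ref{lem:variation}. First I would compute the one-form $\Omega(z)$ associated with differentiation in $t$. From the rational parameterization \eqref{eq:(1,4)_parameterization} we have $y(z) = z$ independent of $t$, so $\partial y/\partial t = 0$, while $\partial x/\partial t = -2z$; hence
\begin{equation}
\Omega(z) = \frac{\partial y}{\partial t}\, dx(z) - \frac{\partial x}{\partial t}\, dy(z) = 2z\, dz.
\end{equation}
The key step is to exhibit $\Omega(z)$ in the form $\int_{\zeta\in\gamma}\Lambda(\zeta)\,B(z,\zeta)$ required by the hypotheses of Theorem \ref{thm:VariationFormula}. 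Since $B(z,\zeta) = dz\,d\zeta/(z-\zeta)^2$ behaves near $\zeta = \infty$ like a double pole, picking up $2z\,dz$ amounts to taking a residue-type contour: writing $\Lambda(\zeta)$ as a suitable combination supported near $\zeta=\infty$ (equivalently, taking $\gamma$ a small loop around $\infty$ and $\Lambda$ the appropriate local expression), one checks $\int_{\zeta\in\gamma}\Lambda(\zeta)B(z,\zeta) = 2z\,dz$. One must also verify the two holomorphy conditions in Theorem \ref{thm:VariationFormula} at the ramification points of $x(z)$ (the zeros of $dx$, and the double pole $z=\infty$); these are routine local checks using the explicit form of \eqref{eq:(1,4)_parameterization}, and for the pole $z=\infty$ one uses that $\infty\notin R^\ast$.

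Granting this, Theorem \ref{thm:VariationFormula}(ii) gives
\begin{equation}
\frac{\partial F_g}{\partial t} = \int_\gamma \Lambda(z)\, W_{g,1}(z),
\end{equation}
and since $\gamma$ is a small loop around $z=\infty$ while $\Lambda(z)$ is the local expression that reproduces $2z\,dz$ against $B$, the contour integral collapses to a residue at $z=\infty$. Tracking constants, the contribution of $\Lambda$ paired against $W_{g,1}$ is exactly $-\Res_{z=\infty} z^2\, W_{g,1}(z)$ (the $z^2$ arising because $B(z,\zeta)$ contributes one power of $z$ and the primitive structure contributes another; concretely, $2z\,dz = -d(1/w)\cdot$ in the coordinate $w=1/z$ near infinity, and pairing a double pole against $W_{g,1}$ produces the second derivative / the $z^2$ weight). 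This yields \eqref{eq:variation-t}.

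I expect the main obstacle to be the bookkeeping in the second bullet: correctly identifying $\Lambda$ and $\gamma$ so that $\Omega(z) = \int_\gamma \Lambda(\zeta)B(z,\zeta)$ holds on the nose (including the sign and the factor that turns $2z\,dz$ into the weight $z^2$ in the residue), and verifying the holomorphy hypothesis of Theorem \ref{thm:VariationFormula} at the higher-order pole $z=\infty$ of $x(z)$ — precisely the extra condition flagged in the remark after Theorem \ref{thm:VariationFormula}. Once $\Omega(z) = 2z\,dz$ is put in the required integral form, the rest is a direct application of the variational formula together with a residue computation at infinity. After Lemma \ref{lem:variation-t} is established, Lemma \ref{lem:t-dependence} follows by checking that $\Res_{z=\infty} z^2 W_{g,1}(z) = 0$ for $g\geq 1$, which can be read off from the explicit form of $W_{1,1}(z)$ for the base case and, for higher $g$, from the pole structure of $W_{g,1}$ at $z=\infty$ (a consequence of $\infty\notin R^\ast$, so $W_{g,1}$ is holomorphic there and the weight $z^2$ is not enough to create a residue).
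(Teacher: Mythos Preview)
Your approach is correct and essentially identical to the paper's: compute $\Omega(z)=2z\,dz$, express it via $B$ integrated over a small loop $\gamma$ around $\zeta=\infty$, and apply Theorem~\ref{thm:VariationFormula}(ii). The paper is simply more explicit about the weight function, writing
\[
\Omega(z)=2z\,dz=-\frac{1}{2\pi i}\int_{\zeta\in\gamma}\zeta^{2}\,B(z,\zeta),
\]
so that $\Lambda(\zeta)=-\zeta^{2}/(2\pi i)$; then $\int_\gamma \Lambda\,W_{g,1}=-\tfrac{1}{2\pi i}\oint_\gamma \zeta^{2}W_{g,1}(\zeta)=-\Res_{\zeta=\infty}\zeta^{2}W_{g,1}(\zeta)$ directly. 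Your heuristic for the appearance of $z^2$ (``$B$ contributes one power, the primitive another'') is a bit off---the $\zeta^2$ is simply the explicit $\Lambda$---but once you carry out the residue computation you anticipate, you will land on exactly this. The paper does not spell out the holomorphy checks you mention; your plan to verify them, especially the second bullet of Theorem~\ref{thm:VariationFormula} at the double pole $z=\infty$, is a reasonable addition.
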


\begin{lem}
\label{lem:variation-t()}
%%%
For the (1,4) equation the following relations hold: 
\begin{align}
\label{eq:variation-t()_1}
	\Res_{z = \infty} z^2 \sum_{m = -1}^{\infty} \hbar^m S_m(x(z)) dx(z) 
	= C_{-1}(z, {\lambda_{\infty}}, \nu_{\infty}) \hbar^{-1} + C_0(z, {\lambda_{\infty}}, \nu_{\infty}), \\
\label{eq:variation-t()_3}
	\Res_{z = \infty} z^2 \sum_{\substack{g \geq 0, \, n \geq 2 \\ (g, n) \ne (0, 2)}} 
		\frac{\hbar^{2g - 2 + n}}{(n-1)!} 
		\int_{\infty}^z \cdots \int_{\infty}^z W_{g, n}(z, z_2, \ldots, z_n)
	= \sum_{g \geq 1} \hbar^{2g} C_{g, 2}, 
\end{align}
%where $C_{-1}$ and $C_{0}$ don't depend on $\hbar$ and $C_{g, 2}$ $(g \geq 1)$ are constant. 
where $C_{-1}$, $C_{0}$ and $C_{g, 2}$ $(g \geq 1)$ are constant with respect to $\hbar$. 
%%%
\end{lem}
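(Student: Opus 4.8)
The plan is to substitute the WKB representation of \thmref{WKB-Wg,n-BE} into the two left-hand sides and then analyse the residue at $z=\infty$ order by order in $\hbar$. Differentiating the exponent in \eqref{eq:WKB-Wg,n} with respect to $z$ (only the endpoint $[z]$ of each divisor integral depends on $z$) gives
\[
	S(x(z),\hbar)\,dx(z) = \hbar^{-1}W_{0,1}(z) + \int_{D(z;\nu)}\big(W_{0,2}(z,\zeta)-B_0(z,\zeta)\big) + \sum_{\substack{g\ge0,\ n\ge1\\ 2g+n\ge3}}\frac{\hbar^{2g+n-2}}{(n-1)!}\int_{D(z;\nu)}\cdots\int_{D(z;\nu)}W_{g,n}(z,z_2,\dots,z_n),
\]
with $B_0(z_1,z_2)=dx(z_1)\,dx(z_2)/(x(z_1)-x(z_2))^2$ and $n-1$ divisor integrals in the last sum; the coefficient of $\hbar^m$ on the right is $S_m(x(z))\,dx(z)$ for $m\ge1$. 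The main device is the splitting $\int_{D(z;\nu)}\omega=(1-\nu_\infty)\int_0^\infty\omega+\int_\infty^z\omega$, obtained by inserting $\infty$ as an intermediate point of $\int_0^z$, the first summand being $z$-independent. Substituting this into every divisor integral and expanding produces, for each $(g,n)$, the term $\int_\infty^z\cdots\int_\infty^z W_{g,n}(z,z_2,\dots,z_n)$ together with terms carrying at least one $\int_0^\infty$; by the variational mechanism of \lemref{variation} (that $\int_0^\infty W_{g,k+1}(\cdots,\zeta)=\partial_{\lambda_\infty}W_{g,k}(\cdots)$ away from $(0,1),(0,2)$) the latter become constant multiples of iterated $\int_\infty^z$ integrals of $\lambda_\infty$-derivatives of lower correlation functions, and the fully $\int_0^\infty$-contracted pieces are $z$-independent constants.

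For the first identity one computes $\Res_{z=\infty}(z^2\,\cdot)$ of this expression power by power in $\hbar$. At order $\hbar^{-1}$ the contribution is $\Res_{z=\infty}(z^2W_{0,1}(z))=\Res_{z=\infty}(z^3\,dx(z))$, an explicit constant $C_{-1}$ read off from \eqref{eq:(1,4)_parameterization}. At order $\hbar^0$ only the $W_{0,2}$-term enters, and with $W_{0,2}-B_0$ and the explicit parameterization its residue is an explicit constant $C_0=C_0(\lambda_\infty,\nu_\infty)$ — independent of $t$, which is the feature used downstream. For each $m\ge1$ the coefficient of $\hbar^m$ is $S_m(x(z))\,dx(z)$; by \lemref{(1,4)_Sn} ($S_m(x)=O(x^{-2})$ as $x\to\infty$) and since $x(z)$ has a double pole at $z=\infty$ with $x(z)\sim-3z^2$, the pulled-back form is $S_m(x(z))\,dx(z)=O(z^{-3})\,dz$ near $z=\infty$, so $z^2S_m(x(z))\,dx(z)$ has at most a simple pole there; a short computation identifies this residue with a multiple of $\lim_{x\to\infty}x^2S_m(x)$ and one shows it vanishes, using the $x\to\infty$ form of the Riccati recursion \eqref{eq:Riccati-gen-3} (or, equivalently, $\sum_p\Res_p=0$ on $\mathbb{P}^1_z$ together with the regularity of $S_m(x(z))\,dx(z)$ at $z=0$). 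Summing over $m$ kills all terms of order $\hbar^m$ with $m\ge1$, which is the first identity.

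For the second identity the left-hand sum is exactly the ``pure $\int_\infty^z$'' part of $S\,dx$ coming from the pairs $(g,n)$ with $n\ge2$, $(g,n)\ne(0,2)$. Taking $\Res_{z=\infty}(z^2\,\cdot)$ and splitting by the $\hbar$-weight $2g+n-2$: there is no contribution of weight $\le 0$ (since $n\ge2$ and $(g,n)\ne(0,2)$ force weight $\ge1$); the odd-weight contributions (those with $n$ odd, e.g.\ $(0,3),(1,3),(0,5),\dots$) have vanishing residue; and the even-weight contributions $\hbar^{2g}$ ($g\ge1$) give $\hbar$-independent constants $C_{g,2}$. The vanishing of the odd part uses the same local analysis at $z=\infty$: for $m$ odd, the corresponding coefficient $S_m$ of $S\,dx$ has a vanishing $x^{-2}$-coefficient by \lemref{(1,4)_Sn}, its $n=1$ piece $W_{g,1}$ contributes nothing to $\Res_{z=\infty}(z^2\,\cdot)$, and the $\int_0^\infty$-decorated pieces of the decomposition only produce even powers of $\hbar$, so the pure $\int_\infty^z$, $n\ge2$ part at odd weight is itself residue-free.

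The step I expect to be the main obstacle is precisely the vanishing of $\Res_{z=\infty}(z^2S_m(x(z))\,dx(z))$ for $m\ge1$ (and the analogous odd-weight residues in the second identity): the bound $S_m(x)=O(x^{-2})$ only forces a simple pole at $z=\infty$, and establishing that this residue is actually $0$ requires either the finer $x\to\infty$ asymptotics of $S_m$ from the Riccati recursion or the global residue identity on $\mathbb{P}^1_z$ with explicit control of the turning-point contributions. The remaining steps — the explicit evaluations at orders $\hbar^{-1},\hbar^0$ and the bookkeeping of the $\int_0^\infty$-decorated terms — are routine once the divisor splitting is in place.
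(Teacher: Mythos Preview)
Your treatment of \eqref{eq:variation-t()_1} is in line with the paper's: both appeal to the Riccati recursion, and the paper is in fact terser (it says only ``By using the Riccati equation, we can verify \eqref{eq:variation-t()_1} directly''). Your worry that the $m\ge 1$ residue vanishing is the delicate point is legitimate, and the paper does not elaborate on it either.

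For \eqref{eq:variation-t()_3}, however, you take a substantially harder route than the paper, and your argument as written is circular in the paper's logical order. The paper does not touch the WKB side, the divisor splitting, or any parity considerations. It argues directly on the $W_{g,n}$: since $\infty\notin R^{\ast}$, each $W_{g,n}$ with $2g+n-2\ge 1$ is holomorphic at $z_i=\infty$, so locally $W_{g,n}\sim C_{g,n}\prod_i dz_i/z_i^{2}$; integrating $n-1$ of the variables from $\infty$ and restricting to the diagonal produces a $1$-form $\sim (-1)^{n+1}C_{g,n}\,dz/z^{\,n+1}$. After multiplying by $z^2$, the residue at $\infty$ vanishes for every $n\ge 3$ and survives only for $n=2$, giving exactly $\sum_{g\ge 1}\hbar^{2g}C_{g,2}$. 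No comparison with $S_m$ or with \eqref{eq:variation-t()_1} is needed.

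Your parity argument, by contrast, asserts that ``its $n=1$ piece $W_{g,1}$ contributes nothing to $\Res_{z=\infty}(z^2\,\cdot)$''. But by \lemref{variation-t} one has $\Res_{z=\infty}z^2 W_{g,1}(z)=-\partial F_g/\partial t$, and the vanishing of $\partial F_g/\partial t$ is precisely \lemref{t-dependence}, whose proof \emph{uses} the present lemma. (Holomorphicity of $W_{g,1}$ at $\infty$ alone does not force this residue to vanish: it only controls the leading $dz/z^2$ coefficient, whereas the residue of $z^2W_{g,1}$ at $\infty$ picks out the next coefficient.) Separately, \lemref{(1,4)_Sn} gives only $S_m=O(x^{-2})$, not a vanishing $x^{-2}$-coefficient, so that step of your odd-weight argument is unsupported as stated. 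The remedy is simply to drop the WKB/parity detour and use the decay-rate computation above.
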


\begin{proof}[Proof of Lemma \ref{lem:variation-t}]
%%%
By using the Riccati equation, 
we can verify \eqref{eq:variation-t()_1} directly. 
Because
\begin{equation}
	\Omega(z) 
	= \frac{\partial y(z)}{\partial t} \cdot dx(z)
		- \frac{\partial x(z)}{\partial t} \cdot dy(z)
	= 2z \, dz 
	= - \frac{1}{2 \pi i} \int_{\zeta \in \gamma} {\zeta}^2 B(z, \zeta)
\end{equation}
holds, Theorem \ref{thm:VariationFormula} %(and \eqref{eq:iterative-variation}) 
gives \eqref{eq:variation-t}. 
%%%
\end{proof}

\begin{proof}[Proof of Lemma \ref{lem:variation-t()}]
%%%
Because $W_{g, n}(z_1, \cdots, z_n)$ are holomorphic at $z_i = \infty$ $(1 \leq i \leq n)$ 
for $2g - 2 + n \geq 1$, we find that 
\begin{align*}
	W_{g, n}(z, z_2, \ldots, z_n) 
	\sim \frac{d z_i}{{z_i}^2} \left( {C}^{(i)}_{g,n} + O(1/z_i) \right)%. 
	\quad (z_i \rightarrow \infty ). 
\end{align*}
Then, since lower order terms $O(1/z_i)$ vanish in the limit $z_i \to \infty$ $(1 \leq i \leq n)$,
we obtain
\begin{align*}
	\int_{\zeta_2 = \infty}^{\zeta_2 = z_2} W_{g, n}(z, \zeta_2, \ldots, \zeta_n) 
	= \int_{\zeta_2 = \infty}^{\zeta_2 = z_2}  
			\frac{C_{g,n} d \zeta_2 \cdots d \zeta_n}
				{{z}^2 {\zeta_2}^2 {\zeta_3}^2 \cdots {\zeta_n}^2} d z
	= - \frac{C_{g,n} d \zeta_3 \cdots d \zeta_n}{{z}^2 {z_2} {\zeta_3}^2 \cdots {\zeta_n}^2} d z.
\end{align*}
Therefore, 
\begin{align*}
	\left. \int_{\infty}^{z_2} \cdots \int_{\infty}^{z_n} W_{g, n}(z, z_2, \ldots, z_n) 
	\right|_{z_2 = \cdots = z_n = z}
	\sim \left( \frac{(-1)^{n+1}C_{g,n}}{z^{n+1}} + \cdots \right) dz 
\end{align*}
holds. Multiplying both sides of the equation by $z^2$ and calculating residues, we obtain \eqref{eq:variation-t()_3}. 

%%%
\end{proof}

\begin{proof}[Proof of Lemma \ref{lem:t-dependence}]
%%%
By taking $\nu = 0$ in \thmref{WKB-Wg,n-BE} we obtain 
\begin{align}
	&\left. \log{\psi} \right|_{x = x(z)} = \sum_{m = -1}^{\infty} \hbar^m \int^{x(z)} S_m dx \\
	&= \sum_{m = -1}^{\infty} \hbar^m 
	  	\left\{ \sum_{\substack{2g + n - 2 = m \\ g \geq 0, \, n \geq 1}} 
			\frac{1}{n!} \int_{\infty}^z \cdots \int_{\infty}^z 
				\left( W_{g, n}(z_1, \ldots, z_n) 
						- \delta_{g,0} \delta_{n,2} \frac{dx(z_1) \, dx(z_2)}{(x(z_1) - x(z_2))^2} 
				\right)
		\right\}. \notag 
\end{align}
It follows from this equation that 
\begin{align}
	\sum_{g \geq 0} \hbar^{2g - 1} \left(- \Res_{z = \infty} z^2 W_{g, 1}(z) \right) 
	&= - \Res_{z = \infty} z^2 \sum_{m = -1}^{\infty} \hbar^m S_m(x(z)) dx(z) \\
	&\qquad \notag 
		+ \Res_{z = \infty} z^2 
			\int_{\infty}^z \left( W_{0, 2}(z, z_2) - \frac{dx(z) \, dx(z_2)}{(x(z) - x(z_2))^2} \right) \\
	&\qquad \notag
		+ \Res_{z = \infty} z^2 \sum_{\substack{g \geq 0, \, n \geq 2 \\ (g, n) \ne (0, 2)}} 
			\frac{\hbar^{2g - 2 + n}}{(n-1)!} 
			\int_{\infty}^z \cdots \int_{\infty}^z W_{g, n}(z, z_2, \ldots, z_n). 
\end{align}
Because the left hand side of this equation is written by 
\begin{align*}
	\sum_{g \geq 0} \hbar^{2g - 1} \left(- \Res_{z = \infty} z^2 W_{g, 1}(z) \right) 
	= - \hbar^{-1} \Res_{z = \infty} z^2 W_{0, 1}(z) 
		+ \sum_{g \geq 1} \hbar^{2g - 1} \frac{ \partial F_g }{ \partial t}, 
\end{align*}
we compare the odd terms with respect to $\hbar$ of both sides. 
%The second term of the right hand side 
By using \lemref{variation-t()} we find that there is no odd term whose order with respect to 
$\hbar$ is greater than or equal to one in the right hand side. 
It means that \eqref{eq:t-dependence} holds. 
%%%
\end{proof}

Now we give a proof of \thmref{main(iii)}.

\begin{proof}[Proof of Theorem \ref{thm:main(iii)}]
%%%
By using a shift operator (or an infinite order differential operator) 
$e^{\hbar\partial_{{\lambda_{\infty}}}}$, the equation (\ref{eq:free-energy_difference-eq.}) 
in \thmref{main(ii)} becomes
\begin{equation}
\label{prop:difference-eq:sol:tmp:1}
	e^{-\hbar\partial_{{\lambda_{\infty}}}} (e^{\hbar\partial_{{\lambda_{\infty}}}} - 1)^2 F({\lambda_{\infty}}, t; \hbar)
	= \frac{\partial^2 F_0}{\partial {\lambda_{\infty}}^2}. 
\end{equation}
It follows from 
\begin{equation}
	e^{-w} (e^w - 1)^2 
	\left\{ \frac{1}{w^2} - \sum_{n = 0}^{\infty} \frac{B_{n + 2}}{\, n + 2 \,} \frac{\, w^n \,}{\, n! \,}
	\right\} = 1
\end{equation} 
(which follows from the definition \eqref{def:Bernoulli} of the Bernoulli numbers) that
\begin{equation}
	e^{-\hbar\partial_{{\lambda_{\infty}}}} (e^{\hbar\partial_{{\lambda_{\infty}}}} - 1)^2
	\left\{ (\hbar\partial_{{\lambda_{\infty}}})^{-2} - \sum_{n = 0}^{\infty} \frac{B_{n + 2}}{\, n + 2 \,} 
			\frac{\, (\hbar\partial_{{\lambda_{\infty}}})^n \,}{\, n! \,}
	\right\} = {\rm{id}}.
\end{equation}
Hence we find that
\begin{align}
\label{sol:FreeEnergy}
	\hat{F}({\lambda_{\infty}}, t;\hbar)
	&:= \left\{ (\hbar\partial_{{\lambda_{\infty}}})^{-2} - \sum_{n = 0}^{\infty} \frac{B_{n + 2}}{\, n + 2 \,} 
				\frac{\, (\hbar\partial_{{\lambda_{\infty}}})^n \,}{\, n! \,}
		\right\} \frac{\partial^2 F_0}{\partial {\lambda_{\infty}}^2} \\
	&= \hbar^{-2} F_0({\lambda_{\infty}}, t) 
		- \frac{1}{12} \frac{\partial^2 F_0}{\partial {\lambda_{\infty}}^2} 
		+ \sum_{g = 2}^{\infty} \frac{B_{2g}}{2g(2g-2)} \frac{\hbar^{2g - 2}}{{\lambda_{\infty}}^{2g-2}} 
		+ \hat{F}_t (t)
		\notag
\end{align}
is a solution of \eqref{eq:free-energy_difference-eq.}. 
Here we note that, 
\begin{equation}
\frac{\partial^2 F_0}{\partial {\lambda_{\infty}}^2} = \frac{1}{2} \log{(-3 {\lambda_{\infty}}^2)} 
\end{equation}
holds. % by \eqref{eq:lower-genus-free-energy}
%, and hence, we may choose $\hbar^{-2} F_0$ 
%as the top term in \eqref{sol:FreeEnergy}.

Since $F$ and $\hat{F}$ satisfies the same difference equation 
\eqref{eq:free-energy_difference-eq.}, their difference 
	$G := F - \hat{F} = \sum_{g=2}^{\infty} \hbar^{2g-2} G_{g}({\lambda_{\infty}}, t)$ 
satisfies 
\begin{equation}
	G({\lambda_{\infty}} + \hbar, t; \hbar) - 2G({\lambda_{\infty}}, t; \hbar) + G({\lambda_{\infty}} - \hbar, t; \hbar) = 0.
\end{equation}
This relation implies that, each coefficient $G_{g}({\lambda_{\infty}}, t)$ of $G$ must satisfy 
$\partial_{\lambda_{\infty}}^2 G_{g} = 0$.
Therefore, each term of $G$ must be a linear in ${\lambda_{\infty}}$. 
However, due to the homogeneity %(Proposition \ref{prop:weber:homogeneous}), 
and \lemref{t-dependence}, 
$F_g - \hat{F}_g$ must be zero for all $g$. 
This shows the desired equality \eqref{eq:(1,4)_Fg(concrete-form)}. 
%%%
\end{proof}

%%%%%%%%%%%%%%%%%%%%%%%%%%%%%%%%%%%%%%%%%%%%%%%%%%%%%%%%
%%%%%%%%%%%%%%%%%%%%%%%%%%%%%%%%%%%%%%%%%%%%%%%%%%%%%%%%
%%%%%%%%%%%%%%%%%%%%%%%%%%%%%%

\subsection{The explicit form of Voros coefficients} 
\label{subsec:voros} 

%%%%%%%%%%%%%%%%%%%%%%%%%%%%%%%%%%%%%%%%%%%%%%%%%%%%%%%%
 
In this subsection we provide the explicit expressions for Voros coefficients. We will only give the proof for the (quantum) (1,4) curve because the result for the (quantum) (2,3) curve is proved in a similar manner. 

\begin{thm}
\label{thm:main(iv)}
%%%
The Voros coefficients %\eqref{eq:Voros-coeff-E} for the quantum curve $(E)$
for the following quantum curve 
has the following expression. 
\begin{itemize}
	\item[$\bullet$] For (1,4) curve (\S \ref{subsection:quantum-(1,4)}):
\end{itemize} \vspace{-1.3em}
\begin{equation}
\label{eq:(1,4)_Voros(concrete-form)}
	V({\lambda_{\infty}}, t, \nu_{\infty}, \hbar) 
	= \sum_{m = 1}^{\infty} \frac{B_{m+1}(\nu_{\infty})}{m(m + 1)} 
		\left( \frac{\hbar}{{\lambda_{\infty}}} \right)^{m}.
\end{equation}
Here $B_m(t)$ is the Bernoulli polynomial defined through the generating function as
\begin{equation}
\label{def:BernoulliPoly}
\frac{w e^{X w}}{e^w - 1} = \sum_{m = 0}^{\infty} B_m(X) \frac{w^m}{m!}.
\end{equation}
(These expressions were also obtained in \cite{IKo}.)
\begin{itemize}
	\item[$\bullet$] For (2,3) curve (\S \ref{subsection:quantum-(2,3)}):
\end{itemize} \vspace{-1.3em}
\begin{equation}
\label{eq:(2,3)_Voros(concrete-form)}
	V({\lambda_{\infty}}, t, \nu_{\infty}, \hbar) = 0. 
\end{equation}
%%%
\end{thm}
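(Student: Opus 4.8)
The plan is to derive the explicit formula for Voros coefficients directly from the three-term difference equation established in \thmref{main(ii)} together with the relation between Voros coefficients and the free energy in \thmref{main(i)}, using the explicit form of the free energy from \thmref{main(iii)}. For the (1,4) curve, I would first substitute the expression \eqref{eq:(1,4)_Fg(concrete-form)} for $F_g$ ($g \geq 2$), along with the explicit $F_0$ and $F_1$ from \S\ref{subsection:quantum-(1,4)}, into the right-hand side of \eqref{eq:V-and-F-general}. Writing $F(\hat{\lambda}_\infty + \hbar, t; \hbar) - F(\hat{\lambda}_\infty, t; \hbar) = (e^{\hbar \partial_{\lambda_\infty}} - 1) F$ evaluated at $\lambda_\infty = \hat{\lambda}_\infty$, the genus $\geq 2$ part becomes a telescoping series in powers of $\hbar/\hat{\lambda}_\infty$ whose coefficients are built from Bernoulli numbers, while the $F_0$ and $F_1$ terms must be handled separately because of the logarithmic contributions.

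The key computation is to show that the combination
\[
(e^{\hbar \partial_{\lambda_\infty}} - 1) F(\lambda_\infty, t; \hbar)
- \frac{\partial F_0}{\partial \lambda_\infty} \hbar^{-1}
+ \frac{2\nu_\infty - 1}{2} \frac{\partial^2 F_0}{\partial \lambda_\infty^2},
\]
after the substitution $\lambda_\infty \mapsto \hat{\lambda}_\infty = \lambda_\infty - \nu_\infty \hbar$, collapses into the single generating series $\sum_{m \geq 1} \frac{B_{m+1}(\nu_\infty)}{m(m+1)} (\hbar/\lambda_\infty)^m$. The cleanest route is to recognize that the solution $\hat{F}$ constructed in the proof of \thmref{main(iii)}, namely $\hat{F} = \{(\hbar \partial_{\lambda_\infty})^{-2} - \sum_{n \geq 0} \frac{B_{n+2}}{n+2} \frac{(\hbar \partial_{\lambda_\infty})^n}{n!}\} \partial_{\lambda_\infty}^2 F_0$, equals $F$; hence $(e^{\hbar \partial_{\lambda_\infty}} - 1)$ applied to it can be computed using the operator identity behind the Bernoulli generating function. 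The shift $\lambda_\infty \to \hat{\lambda}_\infty$ then converts the Bernoulli numbers $B_{m+1} = B_{m+1}(0)$ into Bernoulli polynomials $B_{m+1}(\nu_\infty)$ via the standard identity $B_n(X) = \sum_{k} \binom{n}{k} B_k X^{n-k}$; organizing this expansion carefully and matching with the correction terms $-\partial_{\lambda_\infty} F_0 \cdot \hbar^{-1}$ and $\frac{2\nu_\infty-1}{2}\partial_{\lambda_\infty}^2 F_0$ is what produces the final closed form. Since $\partial_{\lambda_\infty}^2 F_0 = \frac{1}{2}\log(-3\lambda_\infty^2)$ and $\partial_{\lambda_\infty} F_0$ involves $\lambda_\infty \log(-3\lambda_\infty^2)$ plus polynomial-in-$t$ terms, I expect the $t$-dependent and logarithmic pieces to cancel precisely, leaving only the rational generating series in $\hbar/\lambda_\infty$.

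For the (2,3) curve the argument is simpler: by \thmref{main(iii)}, $F_g = 0$ for $g \geq 2$, and $F_0 = -\frac{\lambda_\infty^2}{4}\log(-2t)$, $F_1 = -\frac{1}{8}\log t$. Substituting into \eqref{eq:V-and-F-general}, the genus $\geq 2$ part of $F(\hat{\lambda}_\infty + \hbar) - F(\hat{\lambda}_\infty)$ vanishes identically; the $F_1$ term is $\hbar$-independent hence drops out of the difference; and the $F_0$ contribution $(e^{\hbar \partial_{\lambda_\infty}} - 1)(\hbar^{-2} F_0)$, evaluated at $\hat{\lambda}_\infty$, must be checked to cancel exactly against $-\partial_{\lambda_\infty} F_0 \cdot \hbar^{-1} + \frac{2\nu_\infty - 1}{2}\partial_{\lambda_\infty}^2 F_0$. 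Since $F_0$ is quadratic in $\lambda_\infty$, one has $(e^{\hbar \partial_{\lambda_\infty}} - 1) F_0 = \hbar \partial_{\lambda_\infty} F_0 + \frac{\hbar^2}{2}\partial_{\lambda_\infty}^2 F_0$ exactly, and after the shift $\lambda_\infty \to \lambda_\infty - \nu_\infty \hbar$ and division by $\hbar^2$, the three $F_0$-terms telescope to zero, giving $V \equiv 0$.

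The main obstacle will be the bookkeeping in the (1,4) case: correctly tracking how the parameter shift $\lambda_\infty \mapsto \hat{\lambda}_\infty$ interacts with the infinite-order shift operator $e^{\hbar \partial_{\lambda_\infty}} - 1$ so that Bernoulli numbers get promoted to Bernoulli polynomials, and verifying that all the logarithmic and $t$-dependent terms coming from $F_0$ and $F_1$ cancel against the explicit correction terms. An alternative, perhaps more robust, route is to note that $V$ satisfies the difference equation $V(\lambda_\infty, t, -\nu_\infty, \hbar) - V(\lambda_\infty, t, 1-\nu_\infty, \hbar) = -\log(1 - \nu_\infty \hbar/\lambda_\infty)$ from \lemref{Voros-parameter}, combined with $\partial_{\nu_\infty}$-type recursions; but given that \thmref{main(iii)} is already in hand, the direct substitution is the most economical and is the approach I would write up.
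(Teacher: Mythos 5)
Your proposal is correct, and it reaches the theorem by a genuinely different (though closely parallel) route from the paper's. The paper never uses the explicit free energies of \thmref{main(iii)} for $g\geq 1$: it rewrites the regularized relation of Remark \ref{rem:regularization} as $V_{\rm reg}=e^{-\nu_\infty\hbar\partial_{\lambda_\infty}}\bigl(e^{\hbar\partial_{\lambda_\infty}}-1\bigr)F$, invokes the three-term equation \eqref{eq:free-energy_difference-eq.} to turn this into a difference equation for $V_{\rm reg}$ itself, namely $e^{(\nu_\infty-1)\hbar\partial_{\lambda_\infty}}\bigl(e^{\hbar\partial_{\lambda_\infty}}-1\bigr)V_{\rm reg}=\partial_{\lambda_\infty}^2F_0=\tfrac{1}{2}\log(-3\lambda_\infty^2)$, and then inverts the shift operator in one stroke via $e^{-Xw}(e^w-1)\bigl(w^{-1}+\sum_{m\geq 0}\tfrac{B_{m+1}(X)}{m+1}\tfrac{w^m}{m!}\bigr)=1$ with $X=1-\nu_\infty$. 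You instead substitute the already-determined $F=\hat F$ of \thmref{main(iii)} into \eqref{eq:V-and-F-general} and expand; in effect you compose the two inversions (first solve for $F$, then take the shifted difference), while the paper composes the shift operators first and solves once with $V_{\rm reg}$ as the unknown. Your route does close: the $\hbar^{-1}$ and $\hbar^{0}$ parts of the shifted difference, $\hbar^{-1}\partial_{\lambda_\infty}F_0+\tfrac{1-2\nu_\infty}{2}\partial_{\lambda_\infty}^2F_0$, cancel exactly against the correction terms of \eqref{eq:V-and-F-general}, the $t$-dependent term $2\lambda_\infty t^3/27$ in $F_0$ drops out for the same reason, and the remaining terms assemble into $\sum_{m\geq1}\frac{B_{m+1}(\nu_\infty)}{m(m+1)}(\hbar/\lambda_\infty)^m$; your (2,3) computation is already complete since $F_0$ is quadratic and $F_1$ is independent of $\lambda_\infty$. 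What your version buys is transparency (the corrections in \eqref{eq:V-and-F-general} are visibly the regularization terms $V_{-1},V_0$ of Remark \ref{rem:regularization}); what it costs is dependence on \thmref{main(iii)} plus the deferred bookkeeping, whereas the paper's argument needs only \thmref{main(i)}, \thmref{main(ii)} and $\partial_{\lambda_\infty}^2F_0$. One precision worth adding when you write it up: the promotion of Bernoulli numbers to $B_{m+1}(\nu_\infty)$ is not produced by the binomial identity $B_n(X)=\sum_k\binom{n}{k}B_kX^{n-k}$ alone; in this computation the operator combination yields $B_m(1-\nu_\infty)$, and it is the factor $(-1)^{m}$ coming from $\partial_{\lambda_\infty}^{m-1}\log\lambda_\infty$ together with $B_m(1-X)=(-1)^mB_m(X)$ that delivers $B_{m+1}(\nu_\infty)$ with the correct sign, exactly as in the last step of the paper's proof.
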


%We will only give the proof of \thmref{main(iii)} for the (quantum) (1,4) curve. 

%To prove \thmref{main(iv)}, we need the following lemma. 
%
%\begin{lem}
%\label{lem:t-dependence}
%%%%
%\begin{align}
%\label{eq:F0-difference}
%	F_0(\lambda
%\end{align}
%%%%
%\end{lem}

\begin{proof}
%%%
%\eqref{eq:V-and-F-general}
The relation \eqref{eq:Vreg-and-free-energy} 
between the regularized Voros coefficient 
and the free energy can be written as
\begin{equation}
	V_{\rm reg}(\lambda_\infty, t, \nu_\infty; \hbar) 
	= e^{- \nu_\infty \hbar \partial_{\lambda_\infty}} \Big( e^{\hbar\partial_{\lambda_\infty}} - 1 \Big) 
		F(\lambda_\infty, t; \hbar) 
\end{equation}
by the shift operators. 
Using the three term relation \eqref{eq:free-energy_difference-eq.} of $F$, 
we have
\begin{equation} 
\label{eq:(1,4)-diffrence-eq-for-V}
\begin{split}
	e^{(\nu_\infty - 1) \hbar \partial_{\lambda_\infty}} \Big( e^{\hbar\partial_{\lambda_\infty}} - 1 \Big) 
		V(\lambda_\infty, t, \nu_\infty; \hbar)
	= e^{-\hbar \partial_{\lambda_\infty}} \Big( e^{\hbar\partial_{\lambda_\infty}} - 1 \Big)^2 
		F(\lambda_\infty, t; \hbar) 
%	= \frac{\partial^2 F_0}{\partial {\lambda_\infty}^2}
	= \frac{1}{2} \log{(-3 {\lambda_\infty}^2)}.
\end{split}
\end{equation}

Let us invert the shift operator 
$e^{(\nu_\infty - 1) \hbar \partial_{\lambda_\infty}} \left( e^{\hbar\partial_{\lambda_\infty}} - 1 \right)$
(or solving the difference equation) 
to obtain an expression of $V_{\rm reg}$. 
For the purpose, we use a similar technique used in the previous subsection. 
Namely, it follows from 
\begin{equation}
	e^{- X w} (e^{w} - 1) 
	\left(\frac{1}{w} + \sum_{m=0}^{\infty} \frac{B_{m+1}(X)}{m+1} \frac{w^m}{m!} \right) = 1
\end{equation}
(cf.\,\eqref{def:BernoulliPoly}) that 
\begin{equation}
	e^{- X \hbar \partial_{\lambda_\infty}} (e^{\hbar \partial_{\lambda_\infty}} - 1) 
	\left( (\hbar \partial_{\lambda_\infty})^{-1} 
			+ \sum_{m=0}^{\infty} \frac{B_{m+1}(X)}{m+1} \frac{(\hbar \partial_{\lambda_\infty})^m}{m!} 
	\right) 
	= {\rm id}. 
\end{equation}
The last equality with $X = 1 - \nu_\infty$ shows that the formal series
\begin{align} 
\label{eq:expression-Vreg}
	V_{\rm reg} 
	&= \hbar^{-1} \frac{\partial F_0}{\partial \lambda_\infty} 
		- \frac{\nu_\infty - 1}{2} \frac{\partial^2 F_0}{\partial {\lambda_\infty}^2} 
		+ \sum_{m=1}^{\infty} \frac{B_{m+1}(1- \nu_\infty)}{m+1} 
		\frac{(\hbar \partial_{\lambda_\infty})^m \log \lambda_\infty}{m!} \\
		\notag 
	&= \hbar^{-1} V_{-1} + V_0 + 
		\sum_{m=1}^{\infty} \frac{(-1)^{m+1}B_{m+1}(1 - \nu_\infty)}{m(m+1)} 
		\left(\frac{\hbar}{\lambda_\infty}\right)^m \\
		\notag 
	&= \hbar^{-1} V_{-1} + V_0 + 
		\sum_{m=1}^{\infty} \frac{B_{m+1}(\nu_\infty)}{m(m+1)} 
		\left(\frac{\hbar}{\lambda_\infty}\right)^m 
\end{align}
satisfies the difference equation \eqref{eq:(1,4)-diffrence-eq-for-V}.
Here we used $B_1(X) = X - 1/2$ and the equality 
$B_m(X) = (-1)^{m}B_m(1-X)$.
\end{proof}

%%%%%%%%%%%%%%%%%%%%%%%%%%%%%%%%%%%%%%%%%%%%%%%%%%%%%%%%%%%%%%%%%%%%%%%%%%%%%%%%%%%%%%%%%%%%%%%%%%%%%%%%%%%%%%%%

%%%%%%%%%%%%%%%%%%%%%%%%%%%%%%%%%%%%%%%%%%%%%%%%%%%%%%%%
%%%%%%%%%%%%%%%%%%%%%%%%%%%%%%%%%%%%%%%%%%%%%%%%%%%%%%%%
%%%%%%%%%%%%%%%%%%%%%%%%%%%%%%%%%%%%%%%%%%%%%%%%%%%%%%%%

%%%%%%%%%%%%%%%%%%%%%%%%%%%%%%%%%%%%%%%%%%%%%%%%%%%%%%%%
%%%%%%%%%%%%%%%%%%%%%%%%%%%%%%%%%%%%%%%%%%%%%%%%%%%%%%%%


\begin{thebibliography}{99}
%
% The \bibitem commands:
% Please follow "Notice to Authors" for referencing.  You
% must specify bold and italic fonts yourself.

%%
%% \bibitem{a} Aoki, T.,
%% Calcul exponentiel des op\'erateurs
%% microdiff\'erentiels d'ordre infini. I,
%% \textit{Ann. Inst. Fourier} (Grenoble) \textbf{33} (1983), 227--250.
%
\bibitem[AIT]{AIT}
Aoki,~T., Iwaki,~K. and Takahashi,~T.,
Exact WKB analysis of Schr{\"o}dinger equations with
a Stokes curve of loop type,
to appear in Funkcialaj-Ekvacioj.
%
%
%\bibitem{AKT91}
%Aoki,~T., Kawai,~T. and Takei,~Y.,
%The Bender-Wu analysis and the Voros theory,
%{\em{ICM-90 Satellite Conference Proceedings}}, Springer, 1991, pp. 1--29.
%
%
%\bibitem{AKT09}
%Aoki,~T., Kawai,~T. and Takei,~Y.,
%The Bender-Wu analysis and the Voros theory, II,
%{\it Adv. Stud. Pure Math}. {\bf 54} (2009),
%Math. Soc. Japan, Tokyo, 2009, pp. 19--94.
%
%
\bibitem[ATT]{ATT}
Aoki,~T., Takahashi,~T. and Tanda,~M.,
Exact WKB analysis of confluent hypergeometric differential equations
with a large parameter,
{\it RIMS K\^oky\^uroku Bessatsu}, {\bf B52} (2014), 165--174.
%
% \bibitem[ATT2]{ATT2}
% Aoki,~T., Takahashi,~T. and Tanda,~M.,
% Relation between the hypergeometric function 
% and WKB solutions,
% {\it RIMS K\^oky\^uroku Bessatsu}, {\bf B61} (2017), 1--7.
%
\bibitem[AT]{Aoki-Tanda}
Aoki,~T. and Tanda,~M.,
Parametric Stokes phenomena of 
the Gauss hypergeometric differential
equation with a large parameter,
{\it J. Math. Soc. Japan}, {\bf 68} (2016), 1099--1132.
%
%
%\bibitem{BCD}
%Bouchard,~V., Chidambaram,~N. and Dauphinee,~T.,
%Quantizing Weierstrass,
%preprint; arXiv:1610.00225.
%
%
\bibitem[BE1]{BE-12}
Bouchard,~V. and Eynard,~B.,
Think globally, compute locally,
{\it J. High Energy Phys}., (2013).


\bibitem[BE2]{BE}
Bouchard,~V. and Eyanard,~B.,
Reconstructing WKB from topological recursion,
{\it Journal de l'Ecole polytechnique -- Mathematiques},
{\bf 4} (2017), pp. 845--908.

%
%
%\bibitem{BHLMR}
%Bouchard,~V., Hutchinson,~J., Loliencar,~P., 
%Meiers,~M. and Rupert,~M.,
%A generalized topological recursion for arbitrary ramification,
%{\it Ann. Henri Poincar\'e}, {\bf 15} (2014), 143--169.
%
%
%\bibitem{CEO}
%Chekhov,~L., Eynard,~B. and Orantin,~N.,
%Free energy topological expansion for the 2-matrix model,
%{\it JHEP12} (2006), 053.
%
%
%\bibitem{DDP93}
%Delabaere,~E., Dillinger,~H. and Pham,~F.,
%R\'esurgence de Voros et p\'eriodes des courves hyperelliptique,
%{\it Annales de l'Institut Fourier}, {\bf 43} (1993), 163--199.
%
%
%% \bibitem[DDP2]{DDP97}
%% Delabaere,~E., Dillinger,~H. and Pham,~F.,
%% Exact semiclassical expansions for 
%% one-dimensional quantum oscillators,
%% {\it Journal of Mathematical Physics}, {\bf 38} (1997), 6126.
%
%\bibitem{DN16}
%Do,~N. and Norbury,~P.,
%Topological recursion on the Bessel curve,
%preprint; arXiv:1608.02781.
%
%
\bibitem[DM]{DM}
Dumitrescu,~O. and Mulase,~M.,
Quantum curves for Hitchin fibrations and the Eynard-Orantin theory, 
{\it Letters in Mathematical Physics}, {\ bf 104} (2014), 635--671.
%
%
%% \bibitem[DM]{DM15}
%% Dumitrescu,~O. and Mulase,~M.,
%% Lectures on the topological recursion for Higgs bundles and quantum curves,
%% preprint; arXiv:1509.09007.
%
%
%
%\bibitem{Eyanrd-11}
%Eynard,~B., Intersection numbers of spectral curves,
%preprint, arXiv:1104.0176 (2011).
%
%
\bibitem[EO1]{EO}
Eynard,~B. and  Orantin,~N.,
Invariants of algebraic curves and topological expansion,
{\it Communications in Number Theory and Physics},
{\bf 1} (2007), pp. 347--452;
arXiv:math-ph/0702045.
%
%
%% \bibitem[EO2]{EO2} 
%% Eynard,~B. and  Orantin,~N.,
%% Weil-Petersson volume of moduli spaces,
%% Mirzakhani's recursion and matrix models,
%% preprint, arXiv:0705.3600 (2007).
%
%
\bibitem[EO2]{EO-08} %直し
Eynard,~B. and  Orantin,~N.,
Topological recursion in enumerative geometry and random matrices,
{\it J. Phys. A: Math. Theor}. {\bf 42} (2009), 293001 (117pp).


\bibitem[EO3]{EO-13}
Eynard,~B. and  Orantin,~N.,
About the $x$-$y$ symmetry of the $F_g$ algebraic invariants,
preprint, arXiv:1311.4993 (2013).
%
%\bibitem{FIMS}
%Fuji,~H., Iwaki,~K., Manabe,~M. and Satake,~I., 
%Reconstructing GKZ via topological recursion, 
%preprint, arXiv:1708.09365 (2017). 
%
%% \bibitem[GMN]{GMN12}
%% Gaiotto,~D., Moore,~G.W. and Neitzke,~A.,
%% Spectral networks,  {\it Annales Henri Poincar\'e}, {\bf 14} (2012),
%% 1643--1731; arXiv:1204.4824.
%
\bibitem[GS]{GS}
Gukov,~S. and Su{\l}kowski,~P.,
A-polynomial, B-model, and quantization,
{\it JHEP}, {\bf 2012} (2012), 70.
%
%
%\bibitem{HZ}
%Harer,~J. and Zagier,~D.,
%The Euler characteristic of the moduli space of curves,
%{\it Invent. Math.}, {\bf 85} (1986), 457--485.
%
%
%% \bibitem[HK]{HK}
%% Hollands,~L. and Kidwai,~O. ,
%% Higher length-twist coordinates, generalized
%% Heun's opers, and twisted superpotentials,
%% preprint, arXiv:1710.04438 (2017).
%
%
%% \bibitem[I]{I-16} Iwaki,~K.,
%% Exact WKB analysis, cluster algebras and Fock-Goncharov coordinates,
%% {\it RIMS. K\^oky\^uroku Bessatsu}, {\bf 57} (2016), pp. 265--280.
%
\bibitem[I]{I14} 
Iwaki,~K., 
Parametric Stokes phenomenon for the second
Painlev\'e equation, %with a large parameter,
{\it Funkcialaj Ekvacioj}, {\bf 57} (2014), 173--243. 

%\bibitem[I2]{I14} 
%Iwaki,~K., 
%Non-linear Stokes phenomenon for the second Painlev\'e equation and Voros coefficients, 

%{\it Funkcialaj Ekvacioj}, {\bf 57} (2014), 173--243. 

\bibitem[IKo]{IKo} 
Iwaki,~K. and Koike,~T.,
On the computation of Voros coefficients via middle convolutions,
{\it K\^oky\^uroku Bessatsu}, {\bf B52} (2014), 55--70.


\bibitem[IKoT1]{IKT-part1}
Iwaki,~K., Koike,~T., and Takei,~Y.-M.,
Voros Coefficients for the Hypergeometric Differential Equations 
and Eynard-Orantin's Topological Recursion, 
Part I : For the Weber Equation; 
% submitted; 
arXiv:1805.10945.


\bibitem[IKoT2]{IKT-part2} 
Iwaki,~K., Koike,~T., and Takei,~Y.-M., 
Voros coefficients for the hypergeometric differential equations and Eynard-Orantin's topological recursion : Part II : For Confluent Family of Hypegeometric Equations, Journal of Integrable Systems (2019), {\bf{3}}, 1--46. %arXiv:1810.02946. 
%
%
%
%\bibitem{IM}
%Iwaki,~K. and Marchal,~O.,
%Painlev\'e 2 Equation with Arbitrary Monodromy Parameter,
%Topological Recursion and Determinantal Formulas,
%{\it Annales Henri Poincar\'e}, {\bf 18} (2017), 2581--2620.
%
%
%\bibitem{IMS}
%Iwaki,~K., Marchal,~O. and Saenz,~A.,
%Painlev\'e equations, topological type property and reconstruction
%by the topological recursion,
%{\it Journal of Geometry and Physics}, {\bf 124} (2018), 16--54.
%% ArXiv:1601.02517.
%
%\bibitem{IN14} Iwaki,~K. and Nakanishi.~T.,
%Exact WKB analysis and cluster algebras,
%{\it J. Phys. A: Math. Theor}. {\bf 47} (2014), 474009.
%
%
%\bibitem{IS}
%Iwaki,~K. and Saenz,~A.,
%Quantum curve and the first Painlev\'e equation,
%{\it SIGMA}, {\bf 12} (2016), 011, 24 pages.
%
%
%\bibitem{KKKT11}
%Kamimoto,~S., Kawai,~T., Koike,~T. and Takei,~Y.,
%On the WKB-theoretic structure of a Schr\"odinger operator 
%with a merging pair of a simple pole and a simple turning point,
%Kyoto J. Math., {\bf{50}} (2010), 101 -- 164.
%
%
%\bibitem{KKT14}
%Kamimoto,~S., Kawai,~T. and Takei,~Y.,
%Exact WKB analysis of a Schr\"{o}dinger equation
%with a merging triplet of two simple poles and one simple
%turning point. I \& II,
%Adv. Math., {\bf{260}} (2014), 458--564 \& 565--613.
%
%\bibitem{KT98}
%Kawai,~T. and Takei.~Y.,
%{\it Algebraic Analysis of Singular Perturbation Theory},
%American Mathematical Society,
%Translations of Mathematical Monographs,vol. 227. 2005
%(originally publieshed in Japanese in 1998).
%
%
%\bibitem{Ko1}
%Koike,~T.,
%On a regular singular point in the exact WKB analysis,
%{\it Toward the Exact WKB Analysis of Differential Equations,
%Linear or Non-Linear},
%Kyoto University Press, 2000, pp. 39 -- 54.
%
%\bibitem{Ko2}
%Koike,~T.,
%On the exact WKB analysis of second order
%linear ordinary differential equations with
%simple poles,
%Publ. RIMS, Kyoto Univ., {\bf{36}} (2000), 297 -- 319.
%
%\bibitem{Ko2000}
%Koike,~T.,
%On ``new'' turning points associated
%with regular singular points in the
%exact WKB analysis, RIMS K\^{o}ky\^{u}roku,
%{\bf 1159}, RIMS, 2000, pp.100--110.
%
%
%% \bibitem[Ko4]{Ko3}
%% Koike,~T.,
%% unpublished
%% (its result is written in \cite[\S 2]{KKT14}).
%
%
%
\bibitem[KoT]{KoT11}
Koike,~T. and Takei,~Y.,
On the Voros coefficient for the Whittaker equation with a large parameter
-- Some progress around Sato's conjecture in exact WKB analysis,
{\it Publ. RIMS}, Kyoto Univ. {\bf 47} (2011), pp. 375--395.
%
%
\bibitem[KK]{KK}
Kokotov,~A. and Korotkin,~D.,
Bergmann tau-function on Hurwitz spaces and its applications,
preprint, math-ph/0310008 (2003).
%
%% \color{red}
%% (出版されてそうであるが．．．もしかしたら
%%
%% \noindent
%% Tau-functions on hurwitz spaces,
%% {\it{Math. Phys.,}} {\bf{7}}, 47--96, 2004.
%%
%% \noindent
%% かも知れない．)
%% \color{black}
%
%
%\bibitem{KK09}
%Kokotov,~A. and Korotkin,~D.,
%Tau-functions on spaces of abelian differentials and
%higher genus generalizations of ray-singer formula,
%{\it{J. Differentiaal Geometry}}, {\bf{82}}, 35 -- 100, 2009.
%
%
%\bibitem{Kontsevich}
%Kontsevich,~M.,
%Intersection theory on the moduli space of curves 
%and the matrix Airy function,
%{\it Comm. Math. Phys.,} {\bf 147} (1992), 1--23.
%
%
\bibitem[K]{K} 
Kimura,~H., 
The degeneration of the two dimensional Garnier system and the polynomial Hamiltonian structure, 
Ann. Mat. Pura. Appl., {\bf 155} (1989), 25--74. 

\bibitem[KO]{KO} 
Kimura,~H. and Okamoto,~K., 
On particular solutions of the Garnier systems and the hypergeometric functions of several variables, {\it Quarterly J. Math.,} {\bf 37} (1986), 61--80. 
%\bibitem{MS}
%Mulase.~M. and Su\l kowski.~P.,
%Spectral curves and the Schr\"odinger equations 
%for the Eynard-Orantin recursion,
%Adv. Theor. Math. Phys., {\bf 19} (2015), 955--1015.
%
%
%% \bibitem[NS]{NS}
%% Nekrasov,~N. and Shatashvili,~S.,
%% Quantization of Integrable Systems and Four Dimensional Gauge Theories,
%% in {\it Proceedings, 16th International Congress on
%% Mathematical Physics (ICMP09)}, 2009, pp. 265--289; arXiv:0908.4052.
%
%
%\bibitem{Penner}
%Penner,~R.C.,
%Perturbative series and the moduli space of Riemann surfaces,
%{\it J. Diff. Geom}., {\bf 27} (1988), 35--53.
%
%
%% \bibitem[SAKT]{SAKT91}
%% 佐藤幹夫, 青木貴史, 河合隆裕, 竹井義次,
%% 特異摂動の代数解析, 数理研講究録 {\bf 750}. 1991,
%% PP. 43--51 (金子晃記).
%
%% \bibitem[SKK]{SKK}
%% Sato,~M., Kawai,~T. and Kashiwara,~M.,
%% Microfunctions and pseudo-differential equations,
%% {\it Lect. Notes in Math.}, {\bf 287},
%% Springer, 1973, pp.265--529.
%
%% \bibitem[SW]{SW}
%% Schiappa,~R. and Wyllard,~N., 
%% An $A_r$ threesome: Matrix models, $2d$ CFTs and $4d$ ${\mathcal N}=2$ gauge theories,
%% {\it Journal of Mathematical Physics} {\bf 51}, 082304 (2010). 
%
%\bibitem{SWP}
%Sendra,~J, Winkler,~F, P\'er\'ez-Diaz,~S,
%{\it{Rational algebraic curves - A computer algebra approach}},
%Alogorithms and Computatoin in Mathematics {\bf{22}}, Springer, 2008.
%
%
\bibitem[SS]{SS}
Shen,~H. and Silverstone,~H. J.,
Observations on the JWKB treatment of the quadratic barrier,
in {\it Algebraic analysis of differential equations from
microlocal analysis to exponential asymptotics},
Springer, 2008, pp. 237--250.
%
%
%
%% \bibitem[Ta]{Takahashi}
%% Takahashi,~T.,
%% The confluent hypergeometric function and WKB solutions,
%% Doctor thesis, Kindai Univ., 2017.
%% \textcolor{red}{何年？}.
%% \textcolor{red}{Takei と Takahashi を同区別しましょうか？}
%
\bibitem[T]{Takei08} Takei,~Y.,
Sato's conjecture for the Weber equation and
transformation theory for Schr{\"o}dinger equations
with a merging pair of turning points,
{\it RIMS K\^okyur\^oku Bessatsu}, {\bf B10} (2008), pp. 205--224.
%
%\bibitem{Takei17}
%Takei,~Y.-M.,
%Exact WKB analysis and the topological recursion,
%Master thesis (Kobe Univ., in Japanese), 2017.
%
%
\bibitem[V1]{Voros83} Voros,~A.,
The return of the quartic oscillator
-- The complex WKB method,
{\it Ann. Inst. Henri Poincar\'e}, {\bf 39} (1983), pp. 211--338. 
%
\bibitem[V2]{Voros-zeta} Voros,~A.,
Zeta-regularisation for exact-WKB resolution
of a general $1D$ Schr\"{o}dinger equation, 
{\it J. Phys. A: Math. Theor}. {\bf 45} (2012), 374007.
%
%% \bibitem[W]{Witten}
%% Witten,~E.,
%% Two dimensional gravity and intersection theory on moduli space,
%% {\it Surveys in Diff. Geom.,} {\bf 1} (1991), 243--310.
%
%
%
%
%\bibitem{Zhou12} Zhou,~J.,
%Intersection numbers on Deligne-Mumford moduli spaces and quantum Airy curve,
%preprint, arXiv:1206.5896 (2012).
%
\end{thebibliography}
\end{document}